\documentclass[letterpaper,11pt]{article}

\usepackage[utf8]{inputenc}
\usepackage[T1]{fontenc}
\usepackage{microtype}

\usepackage[margin=1in]{geometry}

\usepackage[small]{caption}
\usepackage{subcaption}

\usepackage{nicefrac}
\usepackage{amsmath}
\usepackage{amsfonts}
\usepackage{amssymb}
\usepackage{amsthm}
\usepackage{amstext}
\usepackage{thmtools}

\usepackage{csquotes}
\usepackage{xcolor}
\usepackage{graphicx}

\usepackage{mathtools}
\usepackage{xspace}

\usepackage[ruled]{algorithm2e}
\usepackage{bm}
\usepackage{dsfont}
\usepackage{thm-restate}
\usepackage{enumitem}
\usepackage{comment}
\usepackage{soul}

\usepackage{booktabs}

\usepackage{derivative}

\usepackage{url}
\usepackage[colorlinks]{hyperref}
\hypersetup{citecolor=green!60!black,linkcolor=blue!90!black,urlcolor=blue!90!black,breaklinks}
\usepackage[capitalize,noabbrev,nameinlink]{cleveref}

\usepackage[
 backend=biber,
 style=alphabetic,
 citetracker,
 hyperref=auto,
 maxcitenames=5, 
 sortcites,      
 sorting=nyt,
 maxbibnames=12, 
 date=year,      
 isbn=false,     
 url=false,      
 doi=false,      
 eprint=false,   
]{biblatex}
\newbibmacro{string+doiurlisbn}[1]{%
  \iffieldundef{doi}{%
    \iffieldundef{url}{%
      #1
    }{%
      \href{\thefield{url}}{#1}%
    }%
  }{%
    \href{http://dx.doi.org/\thefield{doi}}{#1}%
  }%
}

\DeclareFieldFormat%
[article,inbook,incollection,inproceedings,patent,thesis,unpublished]
  {title}{\usebibmacro{string+doiurlisbn}{#1}}

\newtheorem{theorem}{Theorem}
\newtheorem{lemma}{Lemma}

\newtheorem{proposition}{Proposition}

\Crefname{proposition}{Proposition}{Propositions}
\crefname{appendix}{Appendix}{Appendices}
\Crefname{appendix}{Appendix}{Appendices}

\newcommand{\opt}{\mathrm{OPT}}
\newcommand{\OPT}{\mathrm{OPT}}
\newcommand{\CP}{\mathrm{CP}}
\newcommand{\LP}{\mathrm{LP}}
\newcommand{\DP}{\mathrm{DP}}
\newcommand{\alg}{\mathrm{ALG}}
\newcommand{\ALG}{\mathrm{ALG}}
\newcommand{\advice}{\mathrm{PRD}}

\newcommand{\hx}{\hat{x}}

\newcommand{\hB}{\widehat{B}}

\newcommand{\auxfun}{g}

\newcommand{\fadv}{a}
\newcommand{\frob}{r}

\usepackage[colorinlistoftodos,prependcaption,textsize=scriptsize]{todonotes}
\usepackage[most]{tcolorbox}
\title{Consistency--Robustness Tradeoffs for Online Bipartite Allocation with Multiple Stages}

\author{Alexander Lindermayr\thanks{Institut für Mathematik, Technische Universität Berlin, Germany.} \and Nicole Megow\thanks{Faculty of Mathematics and Computer Science, University of Bremen, Germany.} \and Lauren Paul\footnotemark[2]}
\date{}

\begin{document}

\maketitle
    
\begin{abstract}
We study learning-augmented online bipartite allocation with multiple stages. 
In the $k$-stage vertex-weighted fractional bipartite matching problem, 
demand vertices arrive in $k$ stages, and the algorithm receives possibly 
inaccurate predictions of the allocation in each stage. 
While tight consistency--robustness tradeoffs were known for the two-stage case, 
no nontrivial tradeoff was known for an arbitrary number of stages.

Our main result is the first consistency--robustness tradeoff for $k$-stage 
vertex-weighted fractional bipartite matching with predictions, for every $k\ge2$. 
Let $R_k=1-(1-1/k)^k$. For every $R\in[0,R_k]$, our algorithm is $R$-robust and 
$C_k(R)$-consistent, where $C_k(R)=k(1-R)^{1/k}+R-(k-1)$.
This simultaneously recovers the known tight two-stage tradeoff and the optimal 
prediction-free $k$-stage competitive guarantee $R_k = C_k(R_k)$, 
while strictly dominating the natural randomized coin-flip baseline between these endpoints.

We also present an algorithm for the classical online setting, 
where demands arrive one by one and the number of demands is unknown in advance. 
It has a consistency ratio of at least $C_\infty(R)=1+R+\ln(1-R)$
for a given robustness $R\in[0,1-1/e]$, improving the best previously 
known tradeoff for this problem. Finally, we extend the framework to 
fractional AdWords and fractional predictions. 

Our algorithms are based on stage-wise convex programs with carefully calibrated vertex-dependent penalties. 
The penalties maintain a dynamic safety reserve for each supply vertex, 
balancing protection against adversarial future arrivals with the ability 
to exploit the predicted allocation.
\end{abstract}

%

\thispagestyle{empty}
\newpage
\setcounter{page}{1}

\section{Introduction}

Online bipartite matching is a central problem in online optimization, with
applications in advertising, resource allocation, ride-hailing, and 
matchmaking~\cite{Vazirani-NewBook2023,Mehta13,devanur2022online,DBLP:journals/sigecom/HuangTW24}. 
In the classical vertex-weighted online model, the supply vertices $j \in S$, which form one side of an initially unknown bipartite graph, are known in advance and have weights $w_j$. 
The demand vertices $i \in D$ on the other side arrive online~\cite{KarpVV90,DBLP:conf/soda/AggarwalGKM11}. Each arriving
demand vertex reveals its incident edges and must be fractionally matched irrevocably.
The goal is to maximize the total weight of allocated supply.
The fundamental algorithms \textsc{Ranking}~\cite{KarpVV90,DBLP:conf/soda/AggarwalGKM11} for integral matchings and \textsc{Balance}~\cite{KalyanasundaramP00,BuchbinderJN07} for fractional matchings
achieve at least a fraction of $1-\nicefrac1e\approx 0.632$ of the optimum total weight, which is best possible in the worst case~\cite{KarpVV90}.
The key difficulty is the decision between exploiting the current arrivals and preserving supply for an unknown future.

In many applications, however, neither arrivals nor decisions occur one request at a time. 
A platform may have a small amount of latency, allowing it to collect several
requests and optimize over them jointly. This motivates the $k$-stage
model, in which demand vertices arrive and reveal their corresponding edges to the supply in $k$ stages $D_1,\ldots,D_k$.
The algorithm can choose a fractional allocation for all demands of a given stage~\cite{DBLP:conf/soda/AggarwalGKM11,DevanurJK13,FengNS24}. 
This model interpolates between the offline setting, where everything is known in advance, and the classical online setting, 
where demands arrive one by one.  
Feng and Niazadeh~\cite{DBLP:journals/mansci/FengN25} gave an optimal
$R_k := 1-(1-\nicefrac1k)^k$-competitive algorithm for $k$-stage vertex-weighted bipartite matching for all $k \geq 2$. 
Thus, the $k$-stage model strictly improves over the online $1-\nicefrac1e$ barrier~\cite{KarpVV90} whenever
the number of stages $k$ is finite, and ``converges''
to classical online bipartite matching and the $1-\nicefrac1e$ guarantee as
$k\to\infty$. 

Modern allocation platforms often use forecasts from historical data, demand models, learned predictors, or expert plans to predict how demands should be matched. 
Such information can be valuable when accurate, but blindly following it can be harmful when wrong, especially when worst-case guarantees matter. 
The \emph{learning-augmented framework} captures this tension through two guarantees~\cite{MitzenmacherV22,LykourisV21}. 
An algorithm is $R$-\emph{robust} if it achieves at least an $R$-fraction of the optimum in the worst case, and $C$-\emph{consistent} if it achieves at least a $C$-fraction of the predicted value in the worst case. 
The goal is not to fully trust or ignore the prediction, but to design algorithms that trade off smoothly between these extremes and achieve a strong \emph{Pareto frontier}.


Learning-augmented online allocation problems have been studied in several models. 
Early work by Mahdian, Nazerzadeh, and Saberi~\cite{DBLP:conf/sigecom/MahdianNS07,DBLP:journals/talg/MahdianNS12} studied AdWords with unreliable estimates, providing a precursor of consistency--robustness guarantees; this line was extended to Display Ads and GAP by Spaeh and Ene~\cite{SpaehEne2023}. 
More recently, Choo, Jin, and Shin~\cite{ChooJS25} studied online vertex-weighted fractional bipartite matching with predictions and improved over the natural randomized coin-flip baseline between following the prediction and running the robust algorithm. 
Most closely related to the $k$-stage setting is the work of Jin and Ma~\cite{JinM22}, who characterized the optimal consistency--robustness tradeoff for online fractional vertex-weighted bipartite matching and fractional AdWords for $k=2$. 
This leaves open the natural question of whether nontrivial tradeoffs are possible for $k$-stage vertex-weighted bipartite allocation with predictions when $k\ge3$.

We answer this question affirmatively. We present learning-augmented algorithms for $k$-stage vertex-weighted fractional bipartite matching and fractional AdWords for all $k\ge2$, as well as for the classical online setting without prior knowledge of the horizon. Our guarantees recover the tight two-stage tradeoff of Jin and Ma~\cite{JinM22} and the optimal prediction-free $k$-stage guarantee of Feng and Niazadeh~\cite{DBLP:journals/mansci/FengN25}; they also improve upon the online tradeoff of Choo, Jin, and Shin~\cite{ChooJS25}.

\subsection{Our Results}

Our main result is a learning-augmented algorithm for $k$-stage 
vertex-weighted fractional bipartite matching 
with a strong non-trivial consistency--robustness tradeoff for all $k \geq 2$.

\begin{theorem}\label{thm:main}
Let $k \geq 2$ be an integer, $R_k = 1 - (1 - \nicefrac1k)^k$, and $C_k(R) = k(1-R)^{1/k} + R - (k-1)$.
For every $R \in [0,R_k]$, there is a learning-augmented algorithm for the $k$-stage vertex-weighted fractional bipartite matching problem with integral predictions that is $R$-robust and $C_k(R)$-consistent.
\end{theorem}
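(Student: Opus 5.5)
We follow the approach the abstract points to: stage-wise convex programs with vertex-dependent penalties, analyzed by an online primal--dual argument. In stage $t$, given current fill levels $x_j$ of each supply vertex $j$ and the predicted integral allocation for the stage, the algorithm solves
\[
\max\; \sum_j w_j\bigl(F_{t,j}(x_j+\Delta_j)-F_{t,j}(x_j)\bigr)
\]
over fractional allocations $\Delta$ of the demands in $D_t$. Here $F_{t,j}$ is concave, with marginal $f_{t,j}=F_{t,j}'$. The penalty $1-f_{t,j}$ depends on whether $j$ is a \emph{predicted} vertex (used by the prediction so far) or not, and on the stage index. Predicted vertices get a milder penalty, so that up to a reserve level they are filled at almost full marginal value. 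For $R=R_k$ both penalty families should reduce to the Feng--Niazadeh $k$-stage functions, namely the iterated convex-conjugate/polynomial penalties $f_t(x)$ with $f_k\equiv 1$. For $R=0$ the predicted side should become $f\equiv 1$, i.e.\ follow the prediction.

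\textbf{Robustness.} We use the KKT conditions of each stage program to define duals. Each demand $i\in D_t$ gets $\beta_i=$ the common marginal $w_j f_{t,j}(x_j^{\text{end}})$ on its support, and each supply vertex gets $\alpha_j$ chosen to absorb the gain. We then show approximate dual feasibility: $\alpha_j+\beta_i\ge R\,w_j$ for every edge, and $\sum\alpha+\sum\beta\le \ALG$. The key lemma is a one-variable inequality, checked separately for predicted and unpredicted vertices at every stage: for all fill levels $y$ reached by stage $t$, the future marginal satisfies $f_{t,j}(y)$ plus the accumulated gain of $j$ is at least $R$. This reduces to a recursive definition of $f_{t}$ that makes the inequality tight, exactly as in the Feng--Niazadeh construction but with the terminal constant $R_k$ replaced by $R$.

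\textbf{Consistency.} We compare $\ALG$ with the predicted value $\advice=\sum_{t}\sum_{i\in D_t} w_{p(i)}$. Because each stage program is optimal, if $j=p(i)$ is not filled to its reserve level $\tau_t$ after stage $t$, then $i$ is sent entirely to vertices of marginal at least $w_j f_{t,j}(\cdot)$. An exchange argument then charges every predicted unit either to $j$'s own fill or to displaced value of at least $f\cdot w_j$. Summing gives $\ALG\ge \sum_j w_j\,\phi(\text{number of predicted stages of } j)$, and for fractional levels we minimize over adversarial interleavings. The worst case is that the prediction assigns $j$ once per stage while the adversary front-loads competition. Optimizing the reserve schedule $\tau_1,\dots,\tau_k$ under the robustness constraint $\prod_t(1-\tau_t/\ldots)$ gives, via AM--GM, equal per-stage factors $(1-R)^{1/k}$. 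Summing $k$ terms of $1-(1-R)^{1/k}$-type losses then yields $C_k(R)=k(1-R)^{1/k}+R-(k-1)$. As sanity checks, $R=0$ gives $C_k=1$, and $R=R_k$ gives $k(1-1/k)+R_k-k+1=R_k$.

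\textbf{Main obstacle.} The hardest step is choosing the penalties so that the robustness invariant and the consistency charging are tight simultaneously across all $k$ stages. A vertex may switch from unpredicted to predicted midway, and the reserve for the future then has to be recomputed. We would first try penalties of the form $f_{t,j}(x)=1-(1-R)^{(k-t)/k}(\ldots)$, parametrized so that the two-stage case matches Jin--Ma. We would then verify the general case by induction on the number of remaining stages, proving the per-stage inequality by backward induction on $t$.
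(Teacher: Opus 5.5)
\paragraph{Main gap: the penalties are never constructed.}
Your outline has the right architecture: stage-wise concave programs, duals read off the KKT multipliers, per-vertex one-dimensional certificates, and AM--GM producing $C_k(R)$. But it omits the ingredient the theorem rests on, namely explicit penalties together with a proof that they satisfy the robustness and consistency certificates simultaneously. You leave this as the ``main obstacle'' and only propose a trial form, so neither guarantee is established.

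\paragraph{Your parametrization lacks the history variable the paper uses.}
As you describe them, your penalties are functions of the cumulative fill, indexed only by stage and predicted status. The paper's penalties instead depend on the path-dependent reserve $P_j^{s-1}=\sum_{d<s}x_j^d f_j^d(x_j^d)$.
\begin{itemize}
\item The non-predicted penalty $r_j^s(z)=\min\{1,(1-R+P_j^{s-1})/(1-z)\}$ makes the local certificate $P_j^s+1-r_j^s(x_j^s)\ge R$ tight. On uncapped prefixes this yields the product formula $1-R+P_j^h=(1-R)/\prod_{\ell\le h}(1-x_j^\ell)$.
\item The predicted penalty $a_j^s(z)=\max\{0,(g_{k-s}(X_j^{s-1}+z)-(1-R+P_j^{s-1}))/z\}$ spends the accumulated reserve exactly down to the safety curve $g_m(z)=(1-\frac{1-z}{m})^m$.
\item The invariant $1-R+P_j^s\ge g_{k-s}(X_j^s)$ is preserved via $g_m(z)\ge(1-y)g_{m-1}(z+y)$ and $g_m(z)+y\ge g_{m-1}(z+y)$. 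It gives validity of the penalties. Since $g_1(z)=z$, it also gives the final-stage certificate $P_j^{k-1}+1-X_j^{k-1}\ge R$, which your key lemma leaves out.
\end{itemize}
A penalty that depends on the fill level alone must satisfy the robustness constraint for the least-reserve history reaching that fill level. You give no argument that it can still accumulate enough reserve to reach $C_k(R)$.

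\paragraph{The consistency argument is not yet a proof.}
With integral predictions the sets $A_1,\dots,A_k$ are disjoint, so each supply vertex is predicted in at most one stage. The scenario ``the prediction assigns $j$ once per stage'' therefore cannot occur. Your exchange step and ``$k$ losses'' computation only reverse-engineer the formula.

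No exchange argument is needed. The same KKT dual satisfies $\sum_i\alpha_i+\sum_j\beta_j=\ALG$, so $\ALG\ge\sum_{(i,j)\in\widehat M}(\alpha_i+\beta_j)$. It therefore suffices to prove $\alpha_i+\beta_j\ge C_k(R)w_j$ on predicted edges. This reduces to two per-vertex certificates for $j\in A_s$:
\begin{itemize}
\item the stage certificate $P_j^{k-1}+1-a_j^s(x_j^s)\ge C_k(R)$, used when $j$ is unsaturated after stage $s$;
\item the prefix certificate $P_j^p+1-X_j^p\ge C_k(R)$ for all $p<s$, which covers saturation at or before stage $s$ as well as final-stage predicted edges.
\end{itemize}

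For the prefix certificate, let $t$ be the first stage whose non-predicted penalty is capped at $1$ ($t=p+1$ if there is none); after $t$ the quantity $P_j-X_j$ stays constant. Apply AM--GM to the $k$ numbers $1-x_j^1,\dots,1-x_j^{t-1}$, $k-t$ ones, and $(1-R)/\prod_{\ell<t}(1-x_j^\ell)$, whose product is $1-R$. Equality corresponds to a vertex loaded as non-predicted with $1-x_j^\ell=(1-R)^{1/k}$ in stages $1,\dots,k-1$ and predicted only in stage $k$.

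The stage certificate needs a second AM--GM, this time with $k-s$ copies of $(k-s-1+X_j^s)/(k-s)$. It uses $R\le R_k$ through $k(1-R)^{1/k}-k+1\ge 0$, just as the base case $1-R\ge g_k(0)$ of the invariant does.
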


\begin{figure}[tb]
	\centering
	\includegraphics[width=0.95\textwidth]{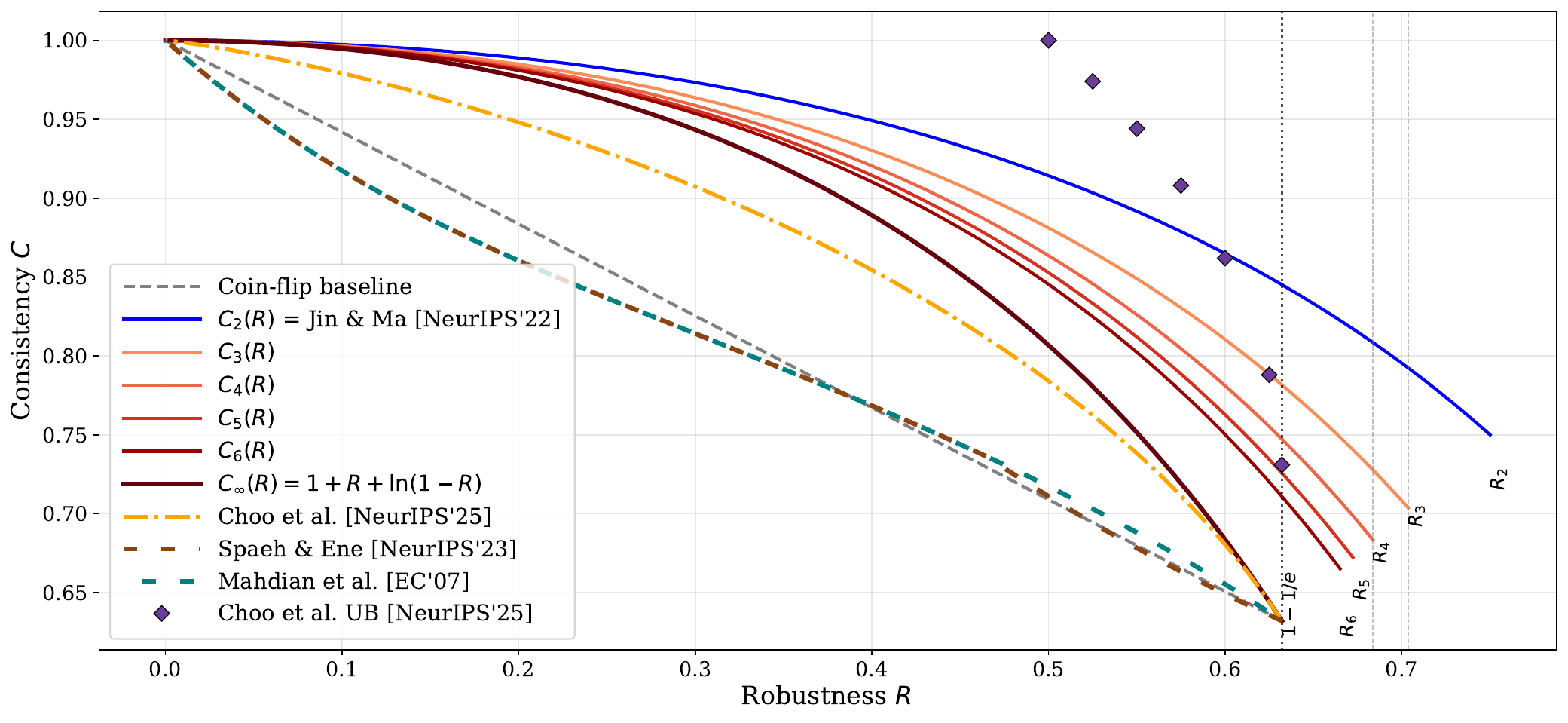}
	\caption{Consistency--robustness tradeoffs. 	Solid curves show our guarantees $C_k$ from \Cref{thm:main,thm:AdWords} and $C_\infty$ from \Cref{thm:online}. Dashed and dotted curves show prior guarantees and upper-bound points.}
	\label{fig:plot_functions}
\end{figure}

Our curve has the right behavior at all benchmark points. 
When $R=0$, we obtain $C_k(0)=1$, corresponding to full 
consistency by completely following the prediction. 
When $R=R_k$, we obtain $C_k(R_k)=R_k$, recovering the 
robust $k$-stage guarantee of Feng and Niazadeh~\cite{DBLP:journals/mansci/FengN25}.
For $k=2$, we have $C_2(R)=2\sqrt{1-R}+R-1$
which is equivalent to the tradeoff $\sqrt{1-R}+\sqrt{1-C_2(R)}=1$ by Jin and Ma~\cite{JinM22}.
Thus, \Cref{thm:main} simultaneously covers the optimal two-stage 
tradeoff and the optimal $k$-stage robust guarantee. For every
$k$, the curve also strictly improves over the naive coin-flip interpolation
between the fully consistent algorithm and the robust
algorithm, which is $(1-\lambda) R_k$-robust and $(R_k + \lambda(1-R_k))$-consistent for
a trust parameter $\lambda \in [0,1]$.
The best-known upper bound on the tradeoff for 
general $k$ is the two-stage upper bound $C_2(R)$ of Jin and Ma~\cite{JinM22}.

The same techniques yield an improved algorithm for the classical online setting, where demands arrive one by one and the number of demands is unknown in advance. 
Viewing the input as a $k$-stage instance with singleton stages and applying \Cref{thm:main} as a black box gives the limiting tradeoff $C_k(R)\xrightarrow{k\to\infty}1+R+\ln(1-R)$, but requires knowing the total number $k$ in advance. 
We therefore take the ``limit of the $k$-stage algorithm'' and reanalyze it to obtain the following theorem.

\begin{theorem}\label{thm:online}
Let $C_\infty(R) := 1 + R + \ln(1-R)$.
For every $R \in [0,1 - \nicefrac1e]$, there is a learning-augmented algorithm for vertex-weighted online fractional bipartite matching with integral predictions that is $R$-robust and $C_\infty(R)$-consistent.
\end{theorem}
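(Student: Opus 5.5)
We plan to prove \Cref{thm:online} by taking the continuous limit of the penalty-based convex program from the $k$-stage algorithm and re-running a primal--dual (charging) analysis directly for one-by-one arrivals. Each supply vertex $j$ with current fill level $x_j\in[0,1]$ is given a marginal value $w_j(1-\auxfun(x_j))$, where $\auxfun$ is a penalty function. A second, prediction-dependent reserve term further distinguishes the predicted neighbor of the arriving demand $i$ from all other neighbors. The natural candidate comes from taking $k\to\infty$ in $C_k$ with $t/k$ as continuous time.

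Concretely, we let the robust part use the \textsc{Balance} penalty $e^{x-1}$, scaled by a trust parameter. Capacity allocated along the predicted edge is penalized less, but only until a vertex-dependent threshold $\tau(R)$ determined by $R$. Since $\ln(1-R)$ appears in $C_\infty$, we expect the threshold to be $\tau = -\ln(1-R)\in[0,1]$. This lies in range exactly when $R\le 1-\nicefrac1e$, which matches the admissible range of $R$ in the statement. When demand $i$ arrives, the algorithm continuously pours water into the neighbor maximizing the penalized marginal value. This is a greedy continuous rule, so no knowledge of the horizon is needed; that is the point of the online version.

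For robustness, we set dual variables $\alpha_i$ equal to the marginal gain of $i$, and $\beta_j = w_j\int_0^{x_j}\auxfun$-type terms. Two facts must then be verified. First, dual feasibility up to the factor $R$: for every edge $(i,j)$, $\alpha_i+\beta_j\ge R\,w_j$ at termination, which uses the fact that the water level of $j$ only increases. Second, $\sum_i\alpha_i+\sum_j\beta_j$ equals $\ALG$. This reduces to an ODE or inequality on $\auxfun$, of the form $\auxfun(x)+\int_0^x(\ldots)\ge R$. For consistency, we charge against the predicted matching $\advice$. Take a predicted edge $(i,j)$. Either $i$ was fully allocated with a marginal value that is large thanks to the discounted penalty, or $j$ had already reached the reserve threshold, which guarantees that $j$ holds at least roughly $1-(1-R)\ldots$ worth of value. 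Integrating these bounds should yield $1+R+\ln(1-R)$.

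The main obstacle is designing the penalty so that both inequalities are tight simultaneously, without a per-stage structure to lean on. In the $k$-stage proof, stage-wise convex programs give KKT conditions across entire stages. Online, the adversary can interleave predicted and unpredicted demands arbitrarily. We would first try taking the pointwise limit of the $k$-stage penalty as the stage index $t$ and $k$ both grow with $t/k\to s$, and then verify the two differential inequalities directly. If that limit is not monotone, we would instead set up the consistency charging as an LP over the adversary's choice of fill profile and solve for $\auxfun$ so that the worst case equals $C_\infty(R)$.
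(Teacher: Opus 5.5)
\textbf{There is a genuine gap.} What you have written is a plan, not a proof. The penalty $\auxfun$, the ``trust scaling'' and the discount rule for predicted edges are never fixed. The consistency argument ends with ``integrating these bounds should yield'', and the fallback is to solve an LP for $\auxfun$.

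\textbf{The design you point to lacks the key idea.} You propose a static, level-dependent Balance-type penalty plus a fixed threshold up to which predicted edges are discounted. With a penalty that depends only on the current fill level, the dual mass banked on $j$ depends on \emph{how} $j$ was filled. A discounted predicted fill leaves $j$ with less reserve than the robust schedule would. Nothing in a static rule rebuilds that deficit before later adversarial edges, and nothing lets it spend surplus reserve accumulated earlier.

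Your threshold guess also appears to run the wrong way. $\tau=-\ln(1-R)$ gives no discount at $R=0$, where you need $1$-consistency. It gives a discount up to full capacity at $R=1-1/e$, where you need the Balance guarantee. Relatedly, the limit to take is in the number of remaining stages: $\auxfun_m(z)\to e^{z-1}$ as $m\to\infty$, which is stationary. A penalty depending on $t/k$ as ``continuous time'' would require knowing the horizon.

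\textbf{How the paper closes this.} The penalty is history-dependent through the accumulated reserve $P_j^{t-1}=\sum_{d<t}x_j^d f_j^d(x_j^d)$.
\begin{itemize}
\item A non-predicted step uses $\min\{1,(1-R+P_j^{t-1})/(1-z)\}$, which keeps the local robustness certificate $P_j^t+1-f_j^t(x_j^t)\ge R$ tight.
\item The predicted step uses $\max\{0,(e^{X_j^{t-1}+z-1}-(1-R+P_j^{t-1}))/z\}$, which spends reserve exactly down to the safety curve.
\end{itemize}
The invariant $1-R+P_j^t\ge e^{X_j^t-1}$ is maintained via $e^{z-1}/(1-y)\ge e^{z+y-1}$ and $e^{z-1}+y\ge e^{z+y-1}$. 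It gives monotonicity and $[0,1]$-valuedness of the penalties, and the robustness certificates for saturated vertices.

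\textbf{The structural fact you do not use.} You anticipate arbitrary interleaving of predicted and unpredicted arrivals, but per supply vertex there is none. With an integral prediction the sets $A_t$ are disjoint, so each vertex is predicted at most once. Hence before its predicted time every step of $j$ is non-predicted, and while uncapped $1-R+P_j^{t-1}=(1-R)/\prod_{\ell<t}(1-x_j^\ell)$. Applying $\ln a\le a-1$ to these factors yields the prefix certificate $P_j^p+1-X_j^p\ge 1+R+\ln(1-R)$. A similar computation using $e^y\ge 1+y$ handles the predicted step itself.

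Finally, the duals come from the KKT multipliers of the per-arrival convex program. So $\sum_i\alpha_i+\sum_j\beta_j=\ALG$ holds by complementary slackness, not through an ODE on the penalty; all the work lies in these per-vertex load certificates.
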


\Cref{thm:online} improves over the tradeoff of Choo, Jin, and Shin~\cite{ChooJS25} for online fractional bipartite matching with predictions, who give a $(1+\lambda-\exp(\lambda-1))$-consistent
and $(1-\exp(\lambda-1)-(\exp(\lambda-1) - \lambda) \ln(1-\lambda \exp(1-\lambda)) - \lambda(1-\lambda))$-robust algorithm for every $\lambda \in [0,1]$. 
There are two upper bounds on the tradeoff for the online problem: Jin and Ma show the upper bound $C_2(R)$ via a construction with two demands, hence it also applies here~\cite{JinM22}. Choo et al.\ give a more refined construction for the unweighted online problem, which is the solution to a factor-revealing LP~\cite{ChooJS25}. \Cref{fig:plot_functions} includes representative points from this upper bound construction. Our improvement over~\cite{ChooJS25} is specific to the vertex-weighted and
fractional-AdWords settings.
\citeauthor{ChooJS25}~also provide an algorithm
for the \emph{unweighted} online setting with integral advice, which achieves a
strictly stronger tradeoff in that setting, an integral AdWords guarantee under
the small-bids assumption, and a factor-revealing upper bound for the
unweighted online problem (included in \Cref{fig:plot_functions}).
These results are not subsumed by our framework, and remain the state of the
art in their respective settings.

Finally, we extend our algorithm and results to fractional AdWords,
which is a generalization of vertex-weighted fractional bipartite matching.
Moreover, we consider the more general setting of fractional predictions,
which suggest a fractional allocation in each stage.
Thus, the following theorem is a generalization of \Cref{thm:main} and \Cref{thm:online}
in two different dimensions.

\begin{theorem}\label{thm:AdWords}
Let $k \geq 2$ be an integer, $R_k = 1 - (1 - \nicefrac1k)^k$, and $C_k(R) = k(1-R)^{1/k} + R - (k-1)$.
For every $R \in [0,R_k]$, there is a learning-augmented online algorithm for $k$-stage fractional AdWords with fractional predictions that is $R$-robust and $C_k(R)$-consistent.
Moreover, for every $R \in [0,1 - \nicefrac1e]$, there is a learning-augmented online algorithm for fractional AdWords with fractional predictions that is $R$-robust and $(1+R+\ln(1-R))$-consistent.
\end{theorem}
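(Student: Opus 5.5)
We prove \Cref{thm:AdWords} by running the same stage-wise convex-program algorithm that underlies \Cref{thm:main,thm:online} and redoing its primal--dual analysis with two changes. First, each supply vertex $j$ now has a budget $B_j$ and edge bids $b_{ij}$, and we track its normalized spent fraction $x_j\in[0,1]$. Second, the prediction is a fractional allocation $\hx$ per stage instead of an integral matching, so the predicted benchmark is $\advice=\sum_j \min\{B_j,\sum_i b_{ij}\hx_{ij}\}$.

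\textbf{Algorithm.} In stage $t$ the algorithm solves a concave program over the allocation $y$ of $D_t$. The objective is $\sum_j B_j\,\Phi_j(x_j^{t-1}+\Delta_j)$ with $\Delta_j = \sum_{i\in D_t} b_{ij}y_{ij}/B_j$.
\begin{itemize}
\item $\Phi_j$ is the integral of a stage-$t$ marginal-value function $f_t(\cdot;\,\hat s_j)$, calibrated as in the $k$-stage algorithm.
\item The penalty depends on a vertex-dependent safety reserve, governed by how much of $j$'s budget the prediction has consumed so far, $\hat s_j^t=\sum_{\tau\le t}\sum_{i\in D_\tau} b_{ij}\hx_{ij}/B_j$ (capped at $1$).
\item Below the reserve the marginal value is the robust $k$-stage value $(1-1/k)^{\,\cdot}$-type price. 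On the predicted portion it is pushed up so that following the prediction is cheap.
\end{itemize}
This is exactly the integral-prediction rule with the indicator ``$j$ is predicted to be matched in stage $t$'' replaced by the predicted fractional increment of $\hat s_j$. Because the objective depends on $y$ only through the budget fractions $\Delta_j$ (a linear image of $y$), KKT conditions have the same structure as in matching. Every demand $i$ receives a dual price $\beta_i=\max_j b_{ij}\,\Phi_j'(x_j^t)$, and each positively allocated edge attains this max.

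\textbf{Robustness.} We use the standard AdWords dual $\min \sum_j B_j\alpha_j+\sum_i\beta_i$ subject to $b_{ij}(1-\alpha_j)\le\beta_i$. We set
\[
\alpha_j=1-\Phi_j'(x_j^{\mathrm{final}}),
\]
with $\beta_i$ from the KKT conditions. Feasibility follows because $\Phi_j'$ is nonincreasing in $x_j$ across stages, the same monotonicity lemma as in the matching case, applied per unit of budget. Charging gain against dual increase stagewise reduces, per supply vertex, to the same one-dimensional inequality in $x_j$ as in \Cref{thm:main}. The linearity in $b_{ij}$ simply rescales both sides, so the bound $\ALG\ge R\cdot\OPT$ carries over verbatim.

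\textbf{Consistency.} We compare to $\advice$ vertex by vertex. For each $j$ we track two quantities: the fraction $\hat s_j$ the prediction would use and the fraction $x_j$ the algorithm uses. If the algorithm under-allocates $j$ relative to $\hx$ in stage $t$, KKT shows the demands involved were sent to vertices with marginal price at least the boosted price of $j$. This lets us charge the shortfall, obtaining per vertex $B_jx_j+(\text{charged gain})\ge C_k(R)\,B_j\min\{1,\hat s_j\}$.

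\textbf{Main obstacle: fractional predictions.} In the integral case $\hat s_j$ jumps from $0$ to $1$ in a single stage. Now it can grow in pieces over several stages, and a vertex's reserve is partially released many times. The key step is a convexity/monotonicity claim: the worst case of the per-vertex consistency inequality, over splittings of $\hat s_j$ across stages, is the unsplit one. This is plausible because the chosen $f_t$ is concave in the released amount. We would verify it by writing the per-vertex gain as a function of the increments $(\delta_1,\dots,\delta_k)$ of $\hat s_j$ and showing it is Schur-concave or minimized at an extreme point.

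\textbf{Online case.} For the second statement we take the continuous limit $k\to\infty$ as in \Cref{thm:online}, with stages replaced by arrivals and $f_t$ by the exponential-type limit. The same per-vertex inequalities then give $1+R+\ln(1-R)$, and the splitting argument becomes an integral inequality along the path of $\hat s_j$.
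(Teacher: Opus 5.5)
Your proposal has a genuine gap, and it sits exactly at the step you flag as the main obstacle. First, your algorithm is never actually defined for a partially predicted vertex. Saying that the indicator ``$j$ is predicted in stage $t$'' is replaced by the increment of $\hat s_j$ does not say which penalty branch (the $r$-type or the $a$-type) governs which part of the stage-$t$ load. Nor does it say how the single history $(X_j^{t-1},P_j^{t-1})$ is updated when both branches act in one stage, or when $j$ is partially predicted in several stages.

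This matters because the per-vertex analysis behind \Cref{thm:main} uses that each vertex is predicted in at most one stage. The product formula for $1-R+P_j^{s-1}$ and the AM-GM steps in both consistency certificates require all earlier stages to be non-predicted. So that analysis does not transfer to your single-vertex state. The Schur-concavity (``unsplit is worst'') claim you would put in its place is only conjectured.

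The missing idea is the paper's virtual-advertiser reduction.
\begin{itemize}
\item Split each advertiser $j$ into copies $(j,\tau)$ with budgets $\hat x_j^\tau B_j$ for $\tau<k$, plus a residual copy with budget $(1-\sum_{\tau<k}\hat x_j^\tau)B_j$. Each copy receives all of $j$'s edges and keeps its own normalized history.
\item Every copy then has a unique designated prediction stage, and the lifted prediction exactly fills the stage-$\tau$ copy.
\item $\OPT$ and $\advice$ are preserved, by aggregating allocations in one direction and greedily splitting spend in the other.
\item The integral-prediction load inequalities apply verbatim to each copy.
\item Online implementability comes from keeping all not-yet-revealed copies merged in one residual advertiser. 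The merge is exact because these copies share history and penalty: Jensen on the concave objective gives one direction, proportional splitting the other.
\end{itemize}
No convexity argument over splittings is needed. The same gap carries over to your online part, which the paper handles with the same construction and the exponential penalties.

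Your robustness step also fails as written. Setting $\alpha_j=1-\Phi_j'(x_j^{\mathrm{final}})$ gives a feasible dual only if each vertex's marginal value never increases over time. The stage-wise penalties are not a stationary function of cumulative load: they reset every stage, a predicted stage starts at $a_j^s(0)=0$, and the final stage uses $f^k=0$. Hence for an unsaturated $j$ your $\alpha_j$ equals $0$, and feasibility would require $\beta_i\ge b_{ij}$ for every earlier-stage edge. KKT only gives $\beta_i\ge b_{ij}(1-f_j^t(x_j^t))$, so this fails.

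The paper instead sets the supply dual to the accumulated penalty mass, so $\beta_h\ge\widehat B_hP_h^{k-1}$, with demand duals taken from the KKT multipliers. It then needs exactly the certificates $P^{k-1}+1-f^s(x^s)\ge R$ and $P^{k-1}+1-X^{k-1}\ge R$, plus their $C_k(R)$ analogues for predicted copies. These are verified per virtual advertiser.
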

Our tradeoff curves and those of previous works are visualized in \Cref{fig:plot_functions}.

\subsection{Technical Overview}
\label{sec:technical-overview}

Our algorithm is a stage-wise greedy marginal-value algorithm implemented through convex
programming. In each non-final stage $<k$, assigning load $x$ to a supply vertex $j$ is credited not
with its full marginal value $w_j$, but with the penalized marginal value
$w_j(1-f_j^s(x))$. The penalty $f_j^s(x)$ is the fraction of marginal value converted into
protection for future stages. Large penalties make the allocation conservative and preserve supply
against adversarial future arrivals; small penalties let the algorithm exploit the prediction when it
is safe to do so. In the final stage $k$ there is no future uncertainty left to hedge against, so the algorithm simply solves the residual offline problem.

Our approach combines and extends three lines of work. 
Jin and Ma~\cite{JinM22}
characterize the two-stage problem, where a single anticipatory 
decision is followed by an optimal
residual decision. In that setting, 
a static pair of penalty
functions suffices: lower penalties for stage-1 predicted vertices 
and larger penalties for the
others. Feng and Niazadeh~\cite{DBLP:journals/mansci/FengN25} study the $k$-stage problem without
predictions. Their regularizers form a stage-dependent hedging schedule: the algorithm hedges
more in early stages and becomes greedier as the number of remaining stages decreases. 
Choo, Jin, and Shin~\cite{ChooJS25} use an advice-tracking potential, where the
penalty depends on how the cumulative algorithmic load compares with the cumulative advice load.
These viewpoints identify the right ingredients, but none is sufficient for the 
$k$-stage
prediction setting. 
Static functions do not remember what happened in earlier stages. 
Stage-wise regularizers do not distinguish predicted from 
non-predicted vertices. Advice-tracking potentials
measure whether the algorithm is ahead of or behind the advice, 
but not how much dual protection has already been accumulated 
on a particular vertex. 
Our penalty functions are instead vertex-specific and
history-dependent. 

We analyze the algorithm by dual fitting~\cite{DevanurJK13}. The KKT conditions of the stage convex programs
define dual variables whose objective value equals the value obtained by the algorithm, so the remaining task is approximate dual feasibility. 
This reduction decouples across supply vertices:
for each vertex, the accumulated penalty 
mass together with the remaining capacity and current
marginal value must certify all relevant dual constraints. 
Robustness requires such certificates for
all edges, while consistency requires stronger certificates 
for predicted vertices.

The main new ingredient is the calibration of penalties 
through a dynamic reserve account. 
A safety curve prescribes how much reserve a vertex must 
retain as a function of its current load and
the number of stages still to come. 
This is where the polynomials $g_m(z)=(1-\frac{1-z}{m})^m$ of Feng and Niazadeh~\cite{DBLP:journals/mansci/FengN25} reappear, but in
a different role: in their work they are the marginal regularizers used by the algorithm, whereas in ours they are safety targets for the accumulated dual reserve. 
The penalty charged in a stage is the minimum charge needed to maintain the relevant
certificate. A non-predicted vertex is charged just enough to buy local robustness
protection. A predicted vertex is treated in the opposite direction: the algorithm spends
accumulated reserve to make the predicted edge attractive, but stops exactly
at the safety curve. Thus following the prediction is free whenever the existing reserve
already certifies future safety.
Unrolling this one-dimensional calibration over the history of a vertex and identifying the
worst case yields our tradeoff curve $C_k(R)$. 



\subsection{Further Related Work}

\noindent \textbf{Online matching and allocation problems.}
Online bipartite matching is a central problem in online optimization. The seminal work of Karp, Vazirani, and Vazirani~\cite{KarpVV90} introduced the one-sided vertex-arrival model and established the classical $1-\nicefrac1e$ benchmark for online bipartite matching, together with the tight $\nicefrac12$ bound for deterministic integral algorithms. For vertex-weighted online bipartite matching, Aggarwal, Goel, Karande, and Mehta~\cite{DBLP:conf/soda/AggarwalGKM11} gave a tight randomized $(1-\nicefrac1e)$-competitive algorithm. The same benchmark also appears in fractional matching and allocation through water-filling/\textsc{Balance}-type algorithms~\cite{KalyanasundaramP00,BuchbinderJN07,DevanurJK13}. For AdWords, Mehta, Saberi, Vazirani, and Vazirani~\cite{DBLP:journals/jacm/MehtaSVV07} obtained the analogous tight guarantee in the small-bids regime. 
	We refer to~\cite{Vazirani-NewBook2023,Mehta13,DBLP:journals/sigecom/HuangTW24} for broader overviews. 

\smallskip 
\noindent \textbf{Batched arrivals and other beyond-worst-case models.}
Several models interpolate between fully online and offline optimization by giving the algorithm additional temporal or structural information. Batched-arrival models allow the algorithm to optimize jointly over groups of arrivals; closest to our work is the $k$-stage framework of Feng and Niazadeh~\cite{DBLP:journals/mansci/FengN25}. Related variants include models in which edges rather than vertices arrive in batches~\cite{DBLP:journals/talg/LeeS20}. Other beyond-worst-case approaches include structural assumptions such as degree bounds~\cite{BuchbinderJN07,NaorW18,AlbersS22,CohenP23bounded,CohenW18}, recourse~\cite{AngelopoulosDJ20,MegowN25,BernsteinHR19}, stochastic or random-order arrivals~\cite{DevanurH09,MirrokniGZ12,DBLP:conf/stoc/0002SY22,DBLP:journals/ior/YangY26,FeldmanMMM09,ManshadiGS12}, and sample-based models~\cite{KaplanNR22}. These models are complementary to ours: they improve guarantees by imposing additional assumptions on the input model or by giving reliable auxiliary information, whereas our algorithms use predictions that may~be~inaccurate.

\smallskip 
\noindent \textbf{Predictions in online matching and allocation.}
Online matching has been studied with several forms of imperfect predictions, including predicted degrees~\cite{AamandCI22}, predicted online vertices~\cite{DBLP:conf/sofsem/BurathepEM26,DBLP:conf/icml/ChooGL024,DBLP:journals/ior/YangY26}, predicted matching solutions~\cite{AntoniadisCEPS23,JinM22,KeviN23,BamasMS20,ChooJS25}, and predictions of the optimal value~\cite{AntoniadisGKK23,DBLP:conf/esa/LavastidaM0X21,DBLP:conf/acda/LavastidaM0X21}. Learning-augmented allocation has also been studied for AdWords and related variants~\cite{DBLP:conf/sigecom/MahdianNS07,DBLP:journals/talg/MahdianNS12,SpaehEne2023,VeeVS10,CohenP23}. 

Our guarantees concern vertex-weighted fractional matching and fractional AdWords, not arbitrary edge-weighted matching. Jin and Ma~\cite{JinM22} show that for two-stage edge-weighted matching the naive coin-flip tradeoff is optimal. Whether this remains true for $k\ge3$, or whether additional stages permit a nonlinear improvement, is open. Our analysis does not extend directly to this more general setting.

\smallskip 
\noindent \textbf{Consistency--robustness tradeoffs and smoothness.}
Consistency--robustness tradeoffs are a central theme in learning-augmented algorithms. Tight Pareto frontiers are known only in selected cases, including ski rental~\cite{WeiZ20,DBLP:journals/corr/abs-2312-02547}, sorting~\cite{ErlebachLMS23}, and, closest to our setting, two-stage online matching~\cite{JinM22}. 
Beyond consistency and robustness, much of the 
literature studies smoothness guarantees, which express 
performance as a gradually degrading function of the prediction error~\cite{LykourisV21,GLMMSS22,PurohitSK18,DBLP:conf/soda/AzarPT22}. 
For output predictions, our consistency guarantee can also be interpreted as smoothness against the output-prediction benchmark~\cite{DBLP:conf/aaai/AzarLS25,
DBLP:conf/aaai/EberleLMNS22,DBLP:conf/soda/Gkatzelis0T25}: 
performance is measured relative to the value of the predicted~allocation.

\section{Problem Description}

The input to the \textbf{$k$-stage vertex-weighted fractional bipartite matching} problem is a bipartite graph $G=(S, D, E)$,
where $S$ is a set of \textit{supply vertices}, which are known in
advance, and $D$ is a set of \textit{demand vertices}, which arrive
online in $k$ sequential stages $D_1,\ldots,D_k$. 
The edges $E_s$ incident to demand vertices $D_s$ are also revealed in stage $s$, and define an $s$-th stage graph $G_s = (S, D_s, E_s)$. 
Each $j\in S$ has an associated weight $w_j > 0$. The objective is
to find a maximum-weight fractional matching in~$G$, where each edge incident to supply vertex $j$ has weight $w_j$.
The number of stages $k$ is known in advance.
In \textbf{online vertex-weighted fractional bipartite matching}, the number of stages is unknown in advance, and in each stage only a single demand vertex arrives.

In \textbf{$k$-stage fractional AdWords} each vertex $j \in S$ has a budget $B_j > 0$
and each edge $(i,j) \in E$ has bid $b_{ij} > 0$. 
An allocation $(x_{ij})_{(i,j) \in E}$ is feasible if each $j \in S$ stays within its budget, $\sum_{i: (i,j) \in E} b_{ij}x_{ij} \le B_j$ and each $i \in D$ is matched at most once, $\sum_{j: (i,j) \in E} x_{ij} \le 1$.
The goal is to maximize $\sum_{(i,j)\in E} b_{ij}x_{ij}$.
The vertex-weighted $k$-stage fractional bipartite matching problem is the special case where $b_{ij}=B_j=w_j$.
The online fractional AdWords problem is defined analogously, with one demand vertex arriving at a time and an unknown number of demands.

A \textbf{fractional prediction}
$\hx = (\hx_{ij})_{(i,j) \in E} = 
(\hx^1_{ij} + \ldots + \hx^{k}_{ij})_{(i,j) \in E}$ 
is a feasible fractional solution for the underlying allocation problem. 
An online algorithm receives in each stage $s \in [k-1] := \{1,\ldots,k-1\}$ only the fractional allocation $\hx^s_{ij}$ for edges $(i,j) \in E_s$.
In stage $k$, we assume that the prediction $\hat x^k$
is an optimal residual allocation assuming predicted allocations $\hat x^1,\ldots,\hat x^{k-1}$; $\hat x^k$ is only used in the analysis and the algorithm does not use it.
We write $\advice(G, \hx)$ for the objective value
of $\hx$ for $G$, that is, $\sum_{(i,j)\in E} b_{ij}\hx_{ij}$.
If $\hx$ is integral, we say that it is an \textbf{integral prediction}.
For an integral prediction $\hx$, we denote for each stage $s \in [k]$ by $A_s \subseteq S$ the set of supply vertices that are matched in $\hx$ in stage $s$. 
Since $\hx$ is a matching, the sets $A_1,\ldots,A_k$ are pairwise disjoint.


For an instance $G$, let $\opt(G)$ denote the optimal objective value.
For an algorithm with (integral or fractional) prediction $\hx$, let
$\alg(G, \hx)$ denote the objective value of the solution produced by the algorithm.
An algorithm is $C$-\textbf{consistent} if
\(
    \alg(G, \hx)\ge C\cdot \advice(G, \hx)
\)
for every instance $G$ and every prediction $\hx$.
An algorithm is $R$-\textbf{robust} if
\(
   \alg(G, \hx)\ge R \cdot \opt(G)
\)
for every instance $G$ and every prediction $\hx$.
We only use $\advice$, $\alg$, and $\opt$ if the input is clear from the context.

\section{Algorithmic Framework}\label{sec:algorithm}

We first describe our algorithm for the vertex-weighted $k$-stage bipartite matching problem and integral predictions.
We will describe how it can be applied to the classical online variant in \Cref{sec:online}, and to the fractional AdWords problem and fractional predictions in \Cref{sec:AdWords}.

In each stage $s \in [k] := \{1,\ldots,k\}$, our algorithm computes
a fractional matching $x^s = (x^s_{ij})_{(i,j) \in E_s}$ between the arriving demands $D_s$ and the supply vertices $S$,
such that the cumulative allocation after stage $s$ 
is feasible for all supply vertices and all revealed demand vertices up to that stage.
To this end, let $x_j^{s} := \sum_{(i,j) \in E_s} x_{ij}^s$ denote the total load on supply vertex $j$ in stage $s$, 
$X_j^{s-1} := \sum_{d=1}^{s-1} x_{j}^d$ for the total cumulative load on supply vertex $j$ up to stage $s$ (excluding $s$),
and $x_i^{s} := \sum_{(i,j) \in E_s} x_{ij}^s$ for the total load on demand vertex $i$ in stage $s$.
A feasible allocation for stage $s$ can use at most $1-X_j^{s-1}$ units of supply from vertex $j$.
The objective value of our algorithm is equal to $\alg = \sum_{j \in S} w_j X^{k}_j$.

We use the following convex program $(\CP_s)$ to compute $x^s$ for each stage $s \in [k]$.
\begin{alignat}{3}
		(\CP_s) \quad \max \quad & \sum_{j \in S} w_j \left(x^s_j - \int_0^{x^s_j} f_j^s(t) \, dt\right) \label{eq:convex_program_1} \notag \\
		\text{s.t.} \quad 
		& x_i^s \leq 1 && \forall i \in D_s \notag \\
		& x_j^s \leq 1 - X_j^{s-1} && \forall j \in S \notag \\
		& x^s_{ij} \geq 0 && \forall (i,j) \in E_s \notag 
\end{alignat}
For all non-final stages $s \in [k-1]$, we assign 
a non-decreasing continuous penalty function $f_j^s \colon [0,1-X_j^{s-1}] \to [0,1]$ 
to each supply vertex $j \in S$. 
If
$f_j^s(x)=0$, the algorithm treats $j$ at its full weight $w_j$; if
$f_j^s(x)$ is close to $1$, the algorithm nearly stops using $j$. 
The precise definition of $f_j^s$ is key to proving performance guarantees and depends on whether $j$ is predicted to be matched in the stage $s$. 
We give precise definitions in \Cref{sec:penalty-functions}; until then, it suffices to treat them as a black-box and to use only their properties.
In the final stage $k$, we do not have to hedge against future demand arrivals, and thus,
can simply compute a maximum-weight fractional bipartite matching on the residual instance. Equivalently, we set $f_j^k:=0$ for all $j\in S$. 
We summarize the algorithm as follows.


\begin{algorithm}[H]
\caption{$k$-stage vertex-weighted fractional bipartite matching with predictions}
\DontPrintSemicolon
Initialize $X_j^0:=0$ for every supply vertex $j\in S$ \;
In stage $s$, compute an optimal solution $(x_{ij}^s)_{(i,j)\in E_s}$ to $(\CP_s)$ and augment the allocation chosen in the previous stages by these values. \;
\end{algorithm}

We first record the conditions under which the stage-wise programs $(\CP_s)$ are well-defined. The constraints are linear and explicitly enforce the demand constraints and the residual supply capacities. Hence the feasible region is convex, and every feasible solution extends the current allocation feasibly. It remains to ensure that the objective is concave and that all marginal values are nonnegative.

\begin{lemma}[Convexity]\label{lem:conditional-convexity}
Suppose that, for every $s\in[k]$ and $j\in S$, the penalty function $f_j^s$ is continuous, non-decreasing, and takes values in $[0,1]$. Then $(\CP_s)$ is a well-defined convex optimization problem. Moreover, every feasible solution of $(\CP_s)$ extends the current allocation to a feasible fractional allocation after stage $s$, and all marginal values $w_j(1-f_j^s(x_j^s))$ are nonnegative.
\end{lemma}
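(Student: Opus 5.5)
The plan is to verify the three claims of \Cref{lem:conditional-convexity} one at a time: concavity of the objective, feasibility of extensions, and nonnegativity of marginal values. The objective of $(\CP_s)$ is separable across supply vertices. For each $j$ it has the form $w_j\,\phi_j(x_j^s)$, where
\[
\phi_j(y) := y - \int_0^{y} f_j^s(t)\,dt ,
\]
and $x_j^s$ is a linear function of the variables $(x^s_{ij})$.

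First, the objective is well-defined and concave. Since $f_j^s$ is continuous on $[0,1-X_j^{s-1}]$, the integral exists. By the fundamental theorem of calculus, $\phi_j$ is differentiable with $\phi_j'(y)=1-f_j^s(y)$. Because $f_j^s$ is non-decreasing, $\phi_j'$ is non-increasing, so $\phi_j$ is concave on its domain. The domain is an interval precisely because the constraint $x_j^s\le 1-X_j^{s-1}$ together with $x^s_{ij}\ge 0$ keeps $x_j^s$ inside $[0,1-X_j^{s-1}]$.

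A concave function of an affine map is concave, and $w_j>0$, so the summed objective is concave. Maximizing a concave function over a polyhedron is a convex program. For stage $k$ we have $f_j^k=0$, which gives a linear objective, still covered by the same argument. Existence of an optimum follows from continuity of the objective on a nonempty compact region. The region is nonempty because $x^s=0$ is feasible, given $X_j^{s-1}\le 1$, which holds inductively from the previous point. It is bounded because $0\le x^s_{ij}\le x_i^s\le 1$.

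Second, feasibility of extensions is an induction on $s$. Suppose the cumulative allocation after stage $s-1$ is feasible, so $X_j^{s-1}\le 1$ for all $j$. Each demand $i\in D_s$ is new, so its only load is $x_i^s\le 1$. Each supply vertex gets $X_j^{s}=X_j^{s-1}+x_j^s\le 1$. Earlier demands are untouched, so the cumulative allocation after stage $s$ is feasible.

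Third, nonnegativity of marginal values is immediate from $f_j^s\in[0,1]$ and $w_j>0$: we get $w_j(1-f_j^s(x))\in[0,w_j]$.

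The only delicate point is the domain technicality. The function $f_j^s$ is defined only on $[0,1-X_j^{s-1}]$, so one must note that the constraints keep $x_j^s$ in that interval. If $X_j^{s-1}=1$, the domain degenerates to $\{0\}$ and the term is simply constant zero.
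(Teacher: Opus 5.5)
Your proof is correct and follows essentially the same route as the paper: concavity of each separable term $w_j\bigl(y-\int_0^y f_j^s(t)\,dt\bigr)$ from its non-increasing derivative $w_j(1-f_j^s(y))$, feasibility of the extension directly from the linear demand and residual-capacity constraints, and nonnegativity of marginal values from $f_j^s\le 1$. You also spell out details the paper leaves implicit, namely existence of an optimum via a nonempty compact feasible region and the degenerate domain when $X_j^{s-1}=1$.
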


\begin{proof}
Each objective term has the form $g_j(x)=w_j (x-\int_0^x f_j^s(t)\,dt)$, with derivative $g_j'(x)=w_j(1-f_j^s(x))$. Since $f_j^s$ is non-decreasing, $g_j$ is concave~\cite{BV2014}.  Moreover, $f_j^s(x)\leq 1$ implies that the marginal value $w_j(1-f_j^s(x))$ is nonnegative, while $f_j^s(x)\geq 0$ ensures that the penalty term never increases the marginal value above the original weight $w_j$. Together with the linear constraints discussed above, this proves the claim.
\end{proof}

Before moving to the analysis, we state a few structural facts about optimal solutions of $(\CP_s)$, based on the KKT conditions. The proof is deferred to \Cref{app:algorithm}.

\begin{restatable}[Preprocessed KKT conditions]{lemma}{lemmaKKT}\label{lem:kkt}
	For each stage $s \in [k]$, there exist non-negative 
	numbers $\lambda^s_i$ for each demand $i \in D_s$ that satisfy the following properties.
	\begin{enumerate}[label=(\alph*),nosep]
		\item For all $(i,j)\in E_s$, if $x^s_{ij} > 0$, then $w_j(1 - f_j^s(x_j^s)) \geq \lambda^{s}_i$. 
		
		\item For all $(i,j) \in E_s$, if there exists $m \in D_s$ with $(m,j) \in E_s$ and $x^s_{mj} > 0$, 
		then $\lambda^{s}_i \geq \lambda^{s}_m$.
	
		\item For all $i \in D_s$, if $x^s_i < 1$, then $\lambda^{s}_i = 0$.
		
		\item For any $(i,j) \in E_s$, if $X^s_j < 1$, then $\lambda^{s}_i \geq w_j (1 - f^s_j(x^s_j))$.

	\end{enumerate}
\end{restatable}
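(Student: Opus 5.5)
We derive these properties from the KKT conditions of $(\CP_s)$, after a preprocessing step that normalizes the multipliers. By \Cref{lem:conditional-convexity}, $(\CP_s)$ maximizes a concave, differentiable objective over a polytope given by linear constraints. Slater-type qualification is therefore automatic, KKT is necessary at any optimum, and multipliers exist. Introduce $\lambda_i^s\ge 0$ for the demand constraints $x_i^s\le 1$, $\mu_j\ge 0$ for the supply constraints $x_j^s\le 1-X_j^{s-1}$, and $\nu_{ij}\ge0$ for nonnegativity. Stationarity then reads
\[
w_j\bigl(1-f_j^s(x_j^s)\bigr) = \lambda_i^s + \mu_j - \nu_{ij} \quad \forall (i,j)\in E_s .
\]
Complementary slackness gives three facts: $\lambda_i^s=0$ whenever $x_i^s<1$, which is (c) directly; $\mu_j=0$ whenever $X_j^s=X_j^{s-1}+x_j^s<1$; and $\nu_{ij}=0$ whenever $x_{ij}^s>0$.

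For (a), $x_{ij}^s>0$ gives $\nu_{ij}=0$, so $w_j(1-f_j^s(x_j^s))=\lambda_i^s+\mu_j\ge\lambda_i^s$.

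For (d), $X_j^s<1$ gives $\mu_j=0$, and then $\lambda_i^s = w_j(1-f_j^s(x_j^s))+\nu_{ij}\ge w_j(1-f_j^s(x_j^s))$.

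Property (b) does not follow from the raw multipliers. If $j$ is saturated, $\mu_j$ may be positive, and a neighbor $i$ of $j$ could have $\lambda_i^s<\lambda_m^s$ with the gap absorbed by $\nu_{ij}$. The fix is to choose $\mu_j$ minimal. Set $\mu_j := \max\bigl(0,\ \max_{i:(i,j)\in E_s} (w_j(1-f_j^s(x_j^s)) - \lambda_i^s)\bigr)$, and redefine $\nu_{ij}:=\lambda_i^s+\mu_j-w_j(1-f_j^s(x_j^s))\ge 0$. We must check that all KKT conditions still hold with the original $\lambda$. Stationarity and nonnegativity hold by construction. If $x_{ij}^s>0$, the original multipliers give $w_j(1-f)=\lambda_i+\mu_j^{\mathrm{old}}\ge\lambda_i$. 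Hence the new $\mu_j$ equals $\max(0,\max_{i'}(w_j(1-f)-\lambda_{i'}))$, and it is at most $\mu_j^{\mathrm{old}}$ because $\mu_j^{\mathrm{old}}\ge w_j(1-f)-\lambda_{i'}$ for every $i'$. Moreover, the original equality at the positive edge $(i,j)$ forces $w_j(1-f)-\lambda_i=\mu_j^{\mathrm{old}}$, so the new $\mu_j$ is at least $\mu_j^{\mathrm{old}}$. The two coincide, so nothing changes on edges with positive flow or for unsaturated vertices.

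Since the raw multipliers can violate (b), the cleaner route is to note that $\mu_j=\max_{i'}(w_j(1-f)-\lambda_{i'})^+$ holds for any valid KKT multipliers whenever $j$ carries positive flow. With this, (b) follows directly. Let $x_{mj}^s>0$. Then $\nu_{mj}=0$, so $\lambda_m = w_j(1-f)-\mu_j$. For any neighbor $i$ of $j$, $\nu_{ij}\ge0$ gives $\lambda_i \ge w_j(1-f)-\mu_j = \lambda_m$. So (b) is actually immediate from stationarity together with $\nu_{ij}\ge0$, and no modification is needed.

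The main obstacle is therefore only the existence of KKT multipliers. The objective may not be differentiable at points where $f_j^s$ fails to be smooth. It is, however, $C^1$, since $f_j^s$ is continuous and the objective is the integral of $1-f_j^s$. Together with linear constraints, this makes KKT necessary and sufficient. The edge case $x_j^s$ at the domain boundary is handled by the supply constraint.
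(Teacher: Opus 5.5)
Your proof is correct and takes the same approach as the paper: the same KKT stationarity and complementary-slackness system gives (a), (c), (d) directly, and your final derivation of (b) from $\nu_{mj}=0$ and $\nu_{ij}\ge 0$ is exactly the paper's argument. The detour about re-choosing $\mu_j$ minimally is unnecessary and rests on a false premise: $\nu_{ij}$ enters stationarity with a minus sign, so it can only increase $\lambda_i^s$, and the raw multipliers therefore always satisfy (b), as you yourself conclude; delete the claim that they can violate it.
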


\section{Dual Fitting Analysis for $k$-Stage Bipartite Matching}\label{sec:dual_fitting}

In this section, we perform the main analysis of our algorithm from \Cref{sec:algorithm} for the $k$-stage bipartite matching problem with integral predictions, and prove \Cref{thm:main}.

We use \emph{dual fitting} to compare the objective value of 
our algorithm to an offline optimum~\cite{DevanurJK13}.
The following linear program models an optimal solution to our problem. 
For each edge $(i,j) \in E$, the variable $z_{ij}$ models its fractional allocation. 
Its dual has variables $\alpha_i$ for all $i \in D$ and $\beta_j$ for all $j \in S$. The primal and dual linear programs (LPs) are as follows.

\vspace*{-0.5cm}

\begin{minipage}[t]{0.48\textwidth}
	\begin{alignat}{3}
		(\LP) \quad \max \quad  \sum_{(i,j)\in E}\ & w_j z_{ij} \notag \\
		\text{s.t.}\quad
		\sum_{j:(i,j)\in E} & z_{ij} \leq 1 && \quad \forall i\in D \notag \\
		\sum_{i:(i,j)\in E} & z_{ij} \leq 1 && \quad \forall j\in S \notag \\
		& z_{ij} \geq 0 && \quad \forall (i,j)\in E \notag
	\end{alignat}
\end{minipage}%
\hfill%
\begin{minipage}[t]{0.48\textwidth}
	\begin{alignat}{3}
		(\DP) \quad \min \quad & \sum_{i\in D} \alpha_i + \sum_{j\in S}  \beta_j \notag \\
		\text{s.t.}\quad
		& \alpha_i + \beta_j  \geq w_j &\quad&  \forall (i,j)\in E\notag \\
		& \alpha_i, \beta_j  \geq 0 &&  \forall i\in D,  j\in S \notag
	\end{alignat}
\end{minipage}

\bigskip

For every stage $s\in[k]$, let $(\lambda_i^s)_{i\in D_s}$ be the dual multipliers given by \Cref{lem:kkt}.  
To perform the dual fitting analysis, we next define an allocation of dual variables $(\alpha,\beta)$ of $(\DP)$:
\begin{itemize}[nosep]
\item For every demand vertex
$i\in D_s$, we define
\(
    \alpha_i:=\lambda_i^s .
\)
\item For every supply vertex $j\in S$, we define
\(
    \beta_j
    :=
    \sum_{s=1}^{k}
    (
        w_jx_j^s-\sum_{i\in D_s}\lambda_i^s x_{ij}^s
    ).
\)
\end{itemize}


\subsection{Dual Objective Value}\label{sec:dual-objective}

We first show that the objective value of $(\DP)$ 
for the defined dual solution is equal to the 
objective value $\alg$ of our algorithm, which is equal to its total weighted allocation through all $k$ stages.

The proof of the following lemma mainly relies on \Cref{lem:kkt}(c): if $x_i^s < 1$, then $\alpha_i = 0$. Thus, each $\alpha_i = \lambda_i^s$, if non-zero, absorbs the $- \lambda_i^s x_{ij}^s$ term in the definition of $\beta_j$ since $x_i^s = 1$; see \Cref{app:dual_objective}.

\begin{restatable}[Dual objective value]{lemma}{lemmaDualObjective}
\label{lem:dual_objective}
The objective value $\ALG$ of the algorithm is equal to the objective
value of the allocation $(\alpha,\beta)$ of $(\DP)$, that is, $\ALG=\sum_{i\in D}\alpha_i+\sum_{j\in S}\beta_j$.
\end{restatable}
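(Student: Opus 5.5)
The plan is to expand the dual objective directly from the definitions and regroup the terms by stage. By definition,
\[
\sum_{j\in S}\beta_j=\sum_{s=1}^k\Bigl(\sum_{j\in S} w_j x_j^s-\sum_{j\in S}\sum_{i\in D_s}\lambda_i^s x_{ij}^s\Bigr).
\]
Since $\sum_{s}x_j^s=X_j^k$, the first part sums to $\sum_{j} w_j X_j^k=\ALG$. It then remains to show that
\[
\sum_{i\in D}\alpha_i=\sum_{s=1}^k\sum_{j\in S}\sum_{i\in D_s}\lambda_i^s x_{ij}^s,
\]
which we will verify stage by stage.

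Fix a stage $s$. Only edges in $E_s$ carry stage-$s$ allocation, so we exchange the order of summation:
\[
\sum_{j\in S}\sum_{i\in D_s}\lambda_i^s x_{ij}^s=\sum_{i\in D_s}\lambda_i^s\sum_{j:(i,j)\in E_s}x_{ij}^s=\sum_{i\in D_s}\lambda_i^s x_i^s.
\]
Now apply \Cref{lem:kkt}(c) to each $i\in D_s$, splitting into two cases. If $x_i^s<1$, then $\lambda_i^s=0$, so $\lambda_i^s x_i^s=0=\lambda_i^s$. Otherwise the demand constraint of $(\CP_s)$ forces $x_i^s=1$, and then $\lambda_i^s x_i^s=\lambda_i^s$. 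In either case $\lambda_i^s x_i^s=\lambda_i^s=\alpha_i$. Each demand vertex belongs to exactly one stage $D_s$, so summing over $s$ gives $\sum_s\sum_{i\in D_s}\lambda_i^s x_i^s=\sum_{i\in D}\alpha_i$. Combining this with the first step yields $\sum_i\alpha_i+\sum_j\beta_j=\ALG$.

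The argument involves no real obstacle. The only point needing care is that the $\lambda_i^s$ from \Cref{lem:kkt} satisfy complementary slackness with the demand constraint, which is exactly item (c), together with the bookkeeping that each $x_{ij}^s$ is defined only for $(i,j)\in E_s$. This makes the exchange of summations legitimate and ensures the stages partition $D$.
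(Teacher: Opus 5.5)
Your proposal is correct and is essentially the paper's proof. Both arguments regroup $\sum_j\beta_j$ by stage, identify $\sum_j w_j X_j^k=\ALG$, and use \Cref{lem:kkt}(c) together with the demand constraint $x_i^s\le 1$ to get $\alpha_i=\lambda_i^s x_i^s$, which absorbs the $-\lambda_i^s x_{ij}^s$ terms.
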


\subsection{Dual Feasibility}\label{sec:dual-feasibility}

Showing approximate dual feasibility will complete the dual fitting argument: if for some constant $\mu \in (0,1]$, for every edge $(i,j) \in E$ holds
$\alpha_i + \beta_j \geq \mu w_j$,
then the scaled dual solution $(\alpha / \mu, \beta / \mu)$ 
is feasible for $(\DP)$ and has by \Cref{lem:dual_objective} an objective value equal to
\[
    \frac{\alg}{\mu} =  \frac{1}{\mu} \biggl( \sum_{i \in D} \alpha_i + \sum_{j \in S} \beta_j \biggr)
    \geq \opt \ ,
\]
where the last inequality holds by weak LP duality between $(\LP)$ and $(\DP)$ and since $(\LP)$ models an optimal solution. This implies that the algorithm is $\mu$-competitive.

In the following, we will apply this argument twice, once for robustness with $\mu = R$ and once for consistency with $\mu = C_k(R)$, where in the latter we compare against the predicted solution's objective value $\advice$ instead of an optimum by restricting the instance to the predicted allocation.
We first introduce more
shorthand notation.
\begin{itemize}
	\item We set $P_j^s:=\sum_{d=1}^{s} x_j^d \cdot f_j^d(x_j^d)$ for all $j \in S$ and all $s \in [k]$. We define $P_j^0:=0$.  
	\item For each $j \in S$, let $q_j$ denote the first stage $q \in [k]$ where $X_j^q = 1$, that is, $j$ becomes fully saturated; if $j$ never becomes fully matched we set $q_j = \infty$.
\end{itemize}
%
%
We first use \Cref{lem:kkt} to derive intermediate lower bounds on the dual constraint $\alpha_i + \beta_j$ for our solution. The proof is deferred to \Cref{app:dual_feasibility}. We use them for both consistency and robustness. 

\begin{restatable}[Intermediate Lower Bounds]{lemma}{lemmaDualConstraintLowerBound}\label{lem:intermed_feasibility}
For every $j\in S$, it holds that $\beta_j\ge w_jP_j^{k-1}$.
Moreover, for every edge $(i,j)\in E_s$ with $s\in[k]$, one of the following cases applies:
\begin{enumerate}[label=\arabic*.,nosep]
    \item If $X_j^s<1$, then $\alpha_i+\beta_j \geq w_j(P_j^{k-1}+1-f_j^s(x_j^s))$.

    \item If $q_j<s$, then $\alpha_i+\beta_j \ge w_jP_j^{q_j}$.

    \item If $q_j=s$, then $\alpha_i+\beta_j \geq w_j(P_j^{s-1}+x_j^s)$.
\end{enumerate}
In particular, for every final-stage edge $(i,j)\in E_k$,
we have $\alpha_i+\beta_j \geq w_j(P_j^{k-1}+1-X_j^{k-1})$.
\end{restatable}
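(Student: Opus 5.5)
Write $\beta_j=\sum_{d=1}^k\beta_j^d$ with $\beta_j^d:=w_jx_j^d-\sum_{i\in D_d}\lambda_i^dx_{ij}^d=\sum_{i\in D_d}x_{ij}^d\,(w_j-\lambda_i^d)$. Each stage contribution is then lower bounded using \Cref{lem:kkt}(a). Whenever $x_{ij}^d>0$, part (a) gives $\lambda_i^d\le w_j(1-f_j^d(x_j^d))$, so $w_j-\lambda_i^d\ge w_jf_j^d(x_j^d)$. Summing over $i$ gives $\beta_j^d\ge w_jx_j^df_j^d(x_j^d)\ge 0$. Summing over $d\le k-1$ and dropping the nonnegative stage-$k$ term (there $f_j^k=0$, so the bound is just $\beta_j^k\ge0$) yields $\beta_j\ge w_jP_j^{k-1}$. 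More generally, $\beta_j\ge w_jP_j^{t}+\sum_{d>t}\beta_j^d$ for any $t$. This is the first claim.

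For an edge $(i,j)\in E_s$, I split into the three cases according to $q_j$.

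\emph{Case 1, $X_j^s<1$.} Part (d) of \Cref{lem:kkt} gives $\alpha_i=\lambda_i^s\ge w_j(1-f_j^s(x_j^s))$. Adding $\beta_j\ge w_jP_j^{k-1}$ gives the bound.

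\emph{Case 2, $q_j<s$.} Use $\alpha_i\ge0$ together with $\beta_j\ge w_jP_j^{q_j}$, which follows from the partial-sum bound with all other stage terms nonnegative.

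\emph{Case 3, $q_j=s$.} This is the main obstacle, since neither (d) nor the $f$-bound directly produces $x_j^s$. Here I keep $\beta_j\ge w_jP_j^{s-1}+\beta_j^s$ and need $\alpha_i+\beta_j^s\ge w_jx_j^s$, i.e.\ $\lambda_i^s+\sum_m x_{mj}^s(w_j-\lambda_m^s)\ge w_jx_j^s$. Since $j$ is saturated in stage $s$ we have $x_j^s>0$, so some $m$ has $x_{mj}^s>0$. Part (b) then gives $\lambda_i^s\ge\lambda_m^s$ for every such $m$. Hence
\[
\sum_m x_{mj}^s\lambda_m^s\le x_j^s\max_m\lambda_m^s\le x_j^s\lambda_i^s\le\lambda_i^s,
\]
using $x_j^s\le1$. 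Therefore $\lambda_i^s+\beta_j^s\ge w_jx_j^s+\lambda_i^s(1-x_j^s)\ge w_jx_j^s$, as required.

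\emph{Final-stage edges.} Let $(i,j)\in E_k$, where $f_j^k=0$. If $X_j^k<1$, Case 1 gives $w_j(P_j^{k-1}+1)$. If $q_j=k$, Case 3 gives $w_j(P_j^{k-1}+x_j^k)=w_j(P_j^{k-1}+1-X_j^{k-1})$. If $q_j<k$, then $X_j^{k-1}=1$ and $P_j^{q_j}=P_j^{k-1}$, because later stages allocate nothing to $j$ (so $x_j^d=0$). Case 2 then gives exactly $w_j(P_j^{k-1}+1-X_j^{k-1})$. In every subcase the claimed bound holds.
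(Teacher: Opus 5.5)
Your proof is correct. The first claim and Cases 1 and 2 follow the paper, except that in Case 2 you bound $\beta_j\ge w_jP_j^{q_j}$ directly via partial sums, whereas the paper notes $P_j^{k-1}=P_j^{q_j}$; that difference is cosmetic.

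The real difference is in Case 3. The paper returns to the raw KKT system. Using stationarity with equality on used edges, it writes the stage-$s$ contribution to $\beta_j$ exactly as $w_jx_j^sf_j^s(x_j^s)+\theta_j^sx_j^s$ and lower-bounds $\lambda_i^s\ge w_j(1-f_j^s(x_j^s))-\theta_j^s$. It then needs the extra estimate $\theta_j^s\le w_j(1-f_j^s(x_j^s))$, which it gets from $\lambda_m^s\ge 0$ on some used edge $(m,j)$.

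You instead stay at the level of the preprocessed conditions. You use \Cref{lem:kkt}(b) to compare $\lambda_i^s$ with the multipliers of the edges actually used into $j$, which gives the bound in two lines without mentioning $\theta_j^s$. The paper's own proof of this lemma never invokes part (b), even though it is stated. The two computations measure the same quantity, since all used edges into $j$ share $\lambda_m^s=w_j(1-f_j^s(x_j^s))-\theta_j^s$.

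The paper's explicit-multiplier form has one advantage: it transfers verbatim to the AdWords certificate (\Cref{lem:AdWords-certificate}), where bids enter the stationarity condition. Your argument transfers too, but only after replacing (b) by the bid-normalized comparison $\lambda_i^s/b_{ih}\ge\lambda_m^s/b_{mh}$.
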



We next use \Cref{lem:intermed_feasibility} to carry out the dual-fitting argument for consistency and robustness, based on suitable load certificates for the penalty functions. 
The certificates, namely \eqref{eq:k-rob-s} and \eqref{eq:k-rob-k} for robustness and \eqref{eq:k-cons-s} and \eqref{eq:k-cons-prefix} for consistency, are verified for our concrete penalties in the following subsections. 
The next two lemmas combine \Cref{lem:intermed_feasibility} with these certificates and provide the conditional dual-fitting guarantees to prove \Cref{thm:main}; their proofs are deferred to~\Cref{app:dual_feasibility}.

Before stating the lemmas, let us briefly interpret the quantities that appear in their assumptions.  These quantities have a clean dual-fitting meaning.
The term $P_j^{k-1} = P_j^k$
is the total dual mass accumulated on $j$ through penalties before the final
stage. The expression $P_j^{k-1}+1-f_j^s(x_j^s)$
is the normalized (by $w_j$) dual mass available for an edge incident to $j$ in stage
$s$ when $j$ is not yet saturated: the first term comes from accumulated
penalties and the second from the current marginal value. The expression
$P_j^{k-1}+1-X_j^{k-1}$
is the corresponding final-stage quantity: accumulated penalties plus the
remaining capacity of $j$. Finally, $P_j^p+1-X_j^p$
measures whether a vertex predicted for a future stage $s$ still has enough
``penalty plus residual capacity'' after an earlier prefix of stages~$p < s$. 

The robustness proof requires these certificates to be at least $R$, while the
consistency proof requires the same type of certificate to be at least
$C_k(R)$ on predicted vertices.
Thus, after the dual-fitting reduction, the remaining task is a one-dimensional load-calibration problem for each supply vertex,
guided by a safety target introduced in the next section.

\begin{restatable}[Conditional robustness]{lemma}{lemmaConditionalRobustness}
    \label{lem:conditional-rob}
	If for all stages $s \in [k-1]$ and supply vertices $j \in S$ there exist continuous non-decreasing penalty functions $f_j^s \colon [0,1 - X_j^{s-1}] \rightarrow [0,1]$ such that 
	\begin{equation}
		P_j^{k-1} + 1 - f_j^s(x_j^s) \geq R  \label{eq:k-rob-s} \tag{$R_ks$}
	\end{equation}
	for all $s \in [k-1]$ and $j \in S$, 
    and
	\begin{equation}
		P_j^{k-1} + 1 - X_j^{k-1} \geq R\label{eq:k-rob-k} \tag{$R_k k$}
	\end{equation}
	for all $j \in S$,
	then the algorithm is $R$-robust.
\end{restatable}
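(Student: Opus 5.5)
We use the dual-fitting reduction set up at the start of \Cref{sec:dual-feasibility}. By \Cref{lem:dual_objective}, $\sum_i\alpha_i+\sum_j\beta_j=\ALG$. By weak duality, it therefore suffices to show $\alpha_i+\beta_j\ge R\,w_j$ for every edge $(i,j)\in E$. Then $(\alpha/R,\beta/R)$ is feasible for $(\DP)$ and $\ALG/R\ge\opt$. (The case $R=0$ is trivial.) Since the penalty functions are assumed continuous, non-decreasing and $[0,1]$-valued, \Cref{lem:conditional-convexity} guarantees the algorithm is well defined. Hence \Cref{lem:kkt} and \Cref{lem:intermed_feasibility} apply.

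Fix an edge $(i,j)\in E_s$ and split according to the cases of \Cref{lem:intermed_feasibility}.

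\emph{Final-stage edges, $s=k$.} The last statement of \Cref{lem:intermed_feasibility} gives $\alpha_i+\beta_j\ge w_j(P_j^{k-1}+1-X_j^{k-1})$, which is at least $Rw_j$ by \eqref{eq:k-rob-k}.

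\emph{Unsaturated vertices, $s\le k-1$ and $X_j^s<1$.} Case~1 gives $w_j(P_j^{k-1}+1-f_j^s(x_j^s))\ge Rw_j$ by \eqref{eq:k-rob-s}.

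\emph{Saturated vertices, $q_j\le s\le k-1$.} Here Cases~2 and~3 only give the weaker bounds $w_jP_j^{q_j}$ and $w_j(P_j^{q_j-1}+x_j^{q_j})$. We must relate these to \eqref{eq:k-rob-s} and \eqref{eq:k-rob-k} at stage $q_j$.

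We expect this last case to be the main obstacle. The plan is to observe that once $j$ is saturated at stage $q:=q_j$, no further load is placed on $j$. Hence $x_j^d=0$ for $d>q$, and $P_j^{k-1}=P_j^q$ (for $q\le k-1$). Since $X_j^q=1$, we also have $X_j^{k-1}=1$. Applying \eqref{eq:k-rob-k} then gives $P_j^{q}=P_j^{k-1}\ge R$, which settles Case~2.

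For Case~3, write $P_j^q=P_j^{q-1}+x_j^q f_j^q(x_j^q)\le P_j^{q-1}+x_j^q$, using $f_j^q\le 1$. Combined with the Case~2 argument, this yields $P_j^{q-1}+x_j^q\ge P_j^q\ge R$.

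Edges with $q_j<s=k$ are already covered by the final-stage bound, because that bound holds for every $j$. So all edges satisfy $\alpha_i+\beta_j\ge Rw_j$, and the algorithm is $R$-robust.

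Within the saturated case, the one point needing care is that $P_j^{k-1}$ indeed freezes after saturation. This needs the convention that $f_j^s$ is only evaluated at $x_j^s=0$ once the residual capacity is zero, so that it contributes $x_j^s f_j^s(x_j^s)=0$. This is immediate from the constraint $x_j^s\le 1-X_j^{s-1}=0$ in $(\CP_s)$.
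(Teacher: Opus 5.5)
Your proposal is correct and follows essentially the same argument as the paper's proof. Both reduce to showing $\alpha_i+\beta_j\ge Rw_j$ via weak duality. Both then split on the cases of \Cref{lem:intermed_feasibility}, using \eqref{eq:k-rob-s} for unsaturated edges and \eqref{eq:k-rob-k} for final-stage edges, and handle saturated vertices through the freezing $P_j^{k-1}=P_j^{q_j}$ together with $f_j^{q_j}\le 1$ in the case $q_j=s$.
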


\begin{restatable}[Conditional consistency]{lemma}{lemmaConditionalConsistency}
    \label{lem:conditional-cons}
	If for all stages $s \in [k-1]$ and supply vertices $j \in S$ there exist continuous non-decreasing penalty functions $f^s_j \colon [0,1 - X_j^{s-1}] \rightarrow [0,1]$ such that
        \begin{equation}
            P_j^{k-1} + 1 - f_j^s(x_j^s) \geq C_k(R)  \label{eq:k-cons-s}  \tag{$C_k s$}
        \end{equation}
        for every $s\in[k-1]$ and every $j\in A_s$,
        and
        \begin{equation}
            P_j^{p} + 1 - X_j^{p} \geq C_k(R) \label{eq:k-cons-prefix}  \tag{$C_k \mathrm{prefix}$} 
        \end{equation}
    for every $s \in [k]$, every $p \in[s-1]$ and every $j\in A_s$,
	then the algorithm is $C_k(R)$-consistent.
\end{restatable}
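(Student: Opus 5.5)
We prove \Cref{lem:conditional-cons} by dual fitting against the predicted allocation rather than against $\opt$. Restrict the instance to the edge set $\hat E$ of the integral prediction, i.e.\ the edges $(i,j)$ with $\hx_{ij}=1$. The value of the maximum-weight fractional matching LP on $\hat E$ is at least $\advice(G,\hx)$, since $\hx$ itself is feasible. Keep the dual solution $(\alpha,\beta)$ defined from the algorithm's run on the full instance. By \Cref{lem:dual_objective}, its objective equals $\alg$ regardless of which edges we test. Hence it suffices to show $\alpha_i+\beta_j\ge C_k(R)\,w_j$ for every predicted edge $(i,j)\in\hat E$. Then $(\alpha,\beta)/C_k(R)$ is feasible for the dual of the restricted LP, and weak duality gives $\alg/C_k(R)\ge \advice$. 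All dual variables are nonnegative: the $\alpha_i$ are nonnegative by \Cref{lem:kkt}, and $\beta_j\ge w_jP_j^{k-1}\ge 0$ by \Cref{lem:intermed_feasibility}. So dropping non-predicted edge constraints is harmless.

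Fix a predicted edge $(i,j)$ with $i\in D_s$, so that $j\in A_s$. We split according to the saturation time $q_j$, using \Cref{lem:intermed_feasibility}.

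\begin{itemize}
\item \textbf{Case $X_j^s<1$ with $s\le k-1$.} Case~1 of \Cref{lem:intermed_feasibility} gives $\alpha_i+\beta_j\ge w_j(P_j^{k-1}+1-f_j^s(x_j^s))$, and \eqref{eq:k-cons-s} bounds this by $C_k(R)\,w_j$.

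\item \textbf{Case $X_j^s<1$ with $s=k$.} The final-stage bound gives $w_j(P_j^{k-1}+1-X_j^{k-1})$. This is \eqref{eq:k-cons-prefix} with $p=k-1\in[s-1]$.

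\item \textbf{Case $q_j<s$.} Case~2 gives $\alpha_i+\beta_j\ge w_jP_j^{q_j}$. Since $X_j^{q_j}=1$ and $q_j\le s-1$, we get $P_j^{q_j}=P_j^{q_j}+1-X_j^{q_j}\ge C_k(R)$ by \eqref{eq:k-cons-prefix} with $p=q_j$.

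\item \textbf{Case $q_j=s$.} Case~3 gives $w_j(P_j^{s-1}+x_j^s)$, and $x_j^s=1-X_j^{s-1}$ because $j$ saturates exactly in stage $s$. If $s\ge 2$, take $p=s-1$ in \eqref{eq:k-cons-prefix}. If $s=1$, the bound is $w_j(P_j^0+1)=w_j\ge C_k(R)\,w_j$, since $C_k(R)\le 1$. This last inequality holds because $C_k$ is decreasing on $[0,R_k]$ with $C_k(0)=1$.
\end{itemize}

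The main subtlety is the boundary bookkeeping. First, the prefix index must always satisfy $p\in[s-1]$; this fails only when $s=1$ and $q_j=1$, and that case is handled directly. Second, for $s=k$ the penalty is $f_j^k=0$ and we must use the final-stage bound rather than Case~1. Finally, we must check that restricting to $\hat E$ does not affect the equality in \Cref{lem:dual_objective}; it does not, because the duals are defined from the unrestricted run.
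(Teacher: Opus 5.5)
Your proof is correct and is essentially the paper's argument. It uses the same split via \Cref{lem:intermed_feasibility} on each predicted edge: \eqref{eq:k-cons-s} when $j$ is unsaturated, and \eqref{eq:k-cons-prefix} with $p=q_j$ or $p=s-1$ otherwise. Your weak-duality step on the LP restricted to $\hat E$ is equivalent to the paper's direct summation of $\alpha_i+\beta_j$ over the predicted matching using nonnegativity of the duals. Your separate handling of $s=1$, $q_j=1$ is a genuine improvement in bookkeeping. The paper applies \eqref{eq:k-cons-prefix} with $p=s-1=0\notin[s-1]$ there, implicitly relying on the trivial bound $P_j^0+1-X_j^0=1\ge C_k(R)$.
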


\subsection{Definition and Properties of Penalty Functions}\label{sec:penalty-functions}

It remains to define penalty functions that satisfy the conditions of \Cref{lem:conditional-rob,lem:conditional-cons} for the allocation produced by the algorithm.
The following function $g_m$ is the \enquote{safety curve} for the remaining stages. 
We use it as a target in the invariant
\(
    1-R+P_j^s\ge \auxfun_{k-s}(X_j^s).
\)
Thus, if there are $m$ stages still to account for and a vertex has cumulative
load $z$, then $g_m(z)$ specifies the lower envelope that the normalized quantity
$1-R+P_j^s$ should dominate:
	\[
	\auxfun_m(z)
	:= \left(1-\frac{1-z}{m}\right)^m \ .
	\]	
Now fix a supply vertex $j \in S$.
We inductively define penalty functions $f_j^s$ for all $s \in [k-1]$.
The notation suppresses their dependence on the robustness parameter $R$ and on the history variables $X_j^{s-1}$ and $P_j^{s-1}$.
Thus, for $s \in [k-1]$, given penalty functions $f_j^1,\ldots,f_j^{s-1}$ and already fixed loads $x_j^1,\ldots,x_j^{s-1}$, we define the \emph{non-predicted penalty function} 
for stage $s$ by setting for all $z \in [0,1)$
	\[
		r^s_j(z)
		:= \min\left\{
		1,
		\frac{1-R+ P^{s-1}_j}{1-z}
		\right\}
	\]
	and continuously extend to $z=1$, which gives $r^s_j(1) = 1$.
    For the predicted penalty function for stage $s$, 
    define for all $z \in (0,1 - X_j^{s-1}]$ 
	\[
		a_j^s(z)
		:= \max\left\{
		0,
		\frac{\auxfun_{k-s}(z + X_j^{s-1}) - (1 - R + P^{s-1}_j)}{z} 
		\right\} ,
	\]
	and continuously extend to $z=0$, which gives $a_j^s(0) = 0$.
    To see the extension explicitly, let $m=k-s$ and $X=X_j^{s-1}$. If $X<1$, the invariant proved below gives
    $1-R+P_j^{s-1}\ge g_{m+1}(X)>g_m(X)$. Here, the strict inequality follows from the strict AM-GM step in the proof of \Cref{lem:g-props}(b). Hence the numerator is negative for all sufficiently small positive $z$, so $a_j^s(z)=0$ near the origin. If $X=1$, the feasible domain is the singleton $\{0\}$, on which we set $a_j^s(0)=0$.
    
    Finally, for all $z \in [0, 1- X_j^{s-1}]$, define
	\[
	f^s_j(z) := \begin{cases}
		a_j^s(z) & \text{if } j \in A_s  , \\
		r_j^s(z) & \text{otherwise}.
	\end{cases}
	\]
The two branches of $f_j^s$ are calibrated against the same future-safety target, but in different ways. 
For a non-predicted vertex $j\notin A_s$, the algorithm protects against arbitrary future behavior by making the local robustness certificate tight. 
Ignoring future penalty contributions, the stage-$s$ robustness inequality would be $P_j^s+1-r_j^s(x_j^s)\ge R$. 
Since $P_j^s=P_j^{s-1}+x_j^s r_j^s(x_j^s)$, making this inequality tight gives
\[
    r_j^s(x_j^s)=\frac{1-R+P_j^{s-1}}{1-x_j^s},
\]
with the cap at $1$ ensuring nonnegative marginal values. 
For a predicted vertex $j\in A_s$, the algorithm can allocate more aggressively. 
Instead of forcing the current robustness inequality to be tight, it uses the safety target directly and solves the equality version of $1-R+P_j^s\ge g_{k-s}(X_j^s)$ for the current penalty. 
This gives the formula for $a_j^s$. 
The maximum with $0$ says that, if the invariant is already satisfied, no additional penalty is needed. 
\Cref{fig:stage2-penalties} visualizes these penalties for $k=3$.

\begin{figure}[tb]
\centering
\includegraphics[width=0.32\textwidth]{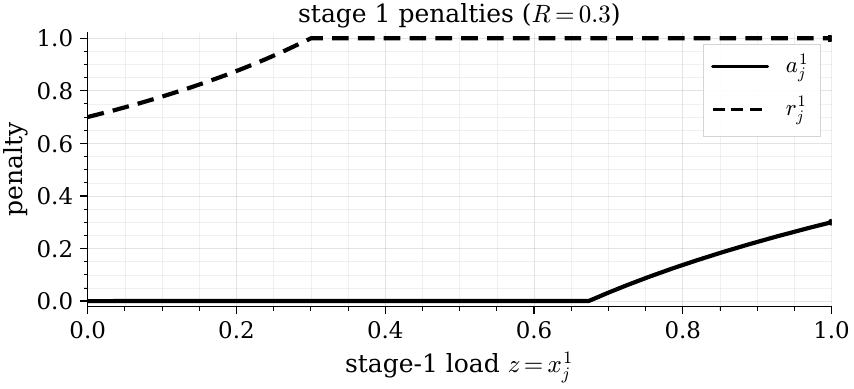}
\hfill
\includegraphics[width=0.32\textwidth]{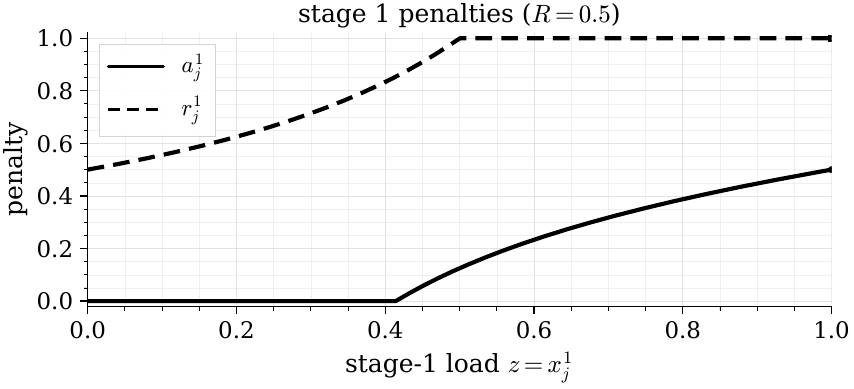}
\hfill
\includegraphics[width=0.32\textwidth]{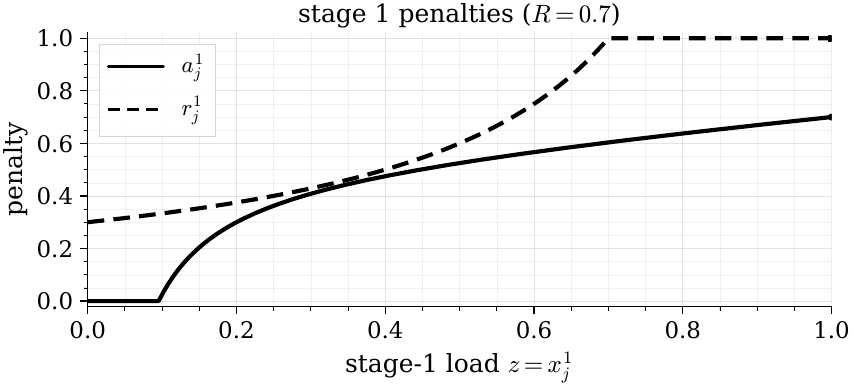}

\medskip
    \includegraphics[width=0.32\textwidth]{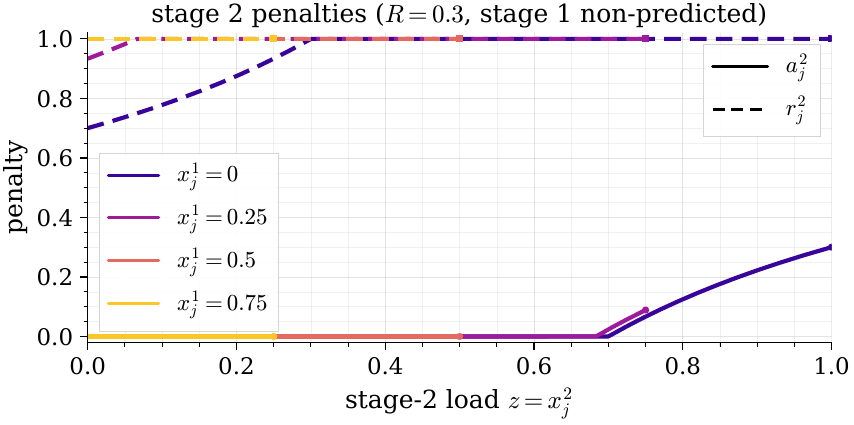}
    \hfill
    \includegraphics[width=0.32\textwidth]{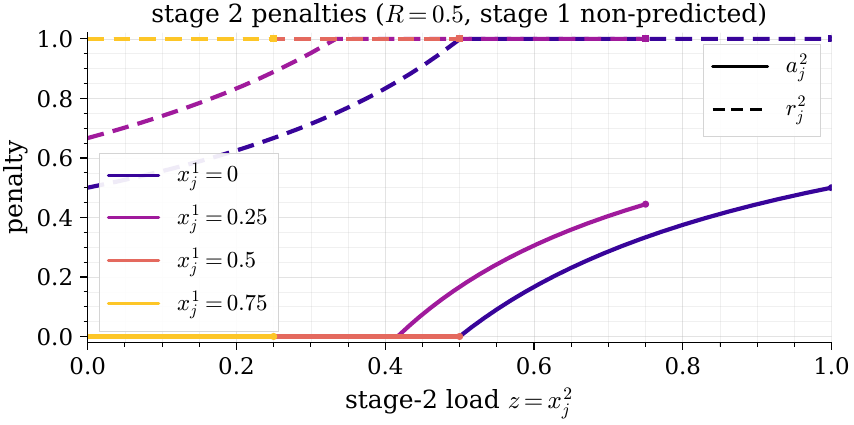}
    \hfill
    \includegraphics[width=0.32\textwidth]{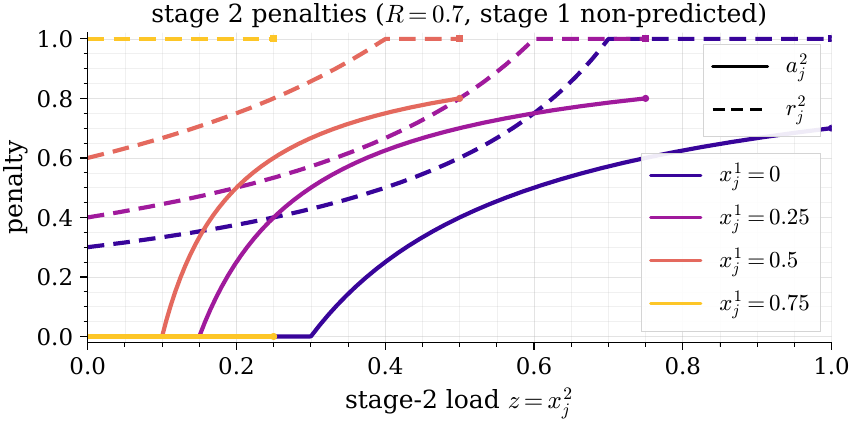}

    \medskip
    \includegraphics[width=0.32\textwidth]{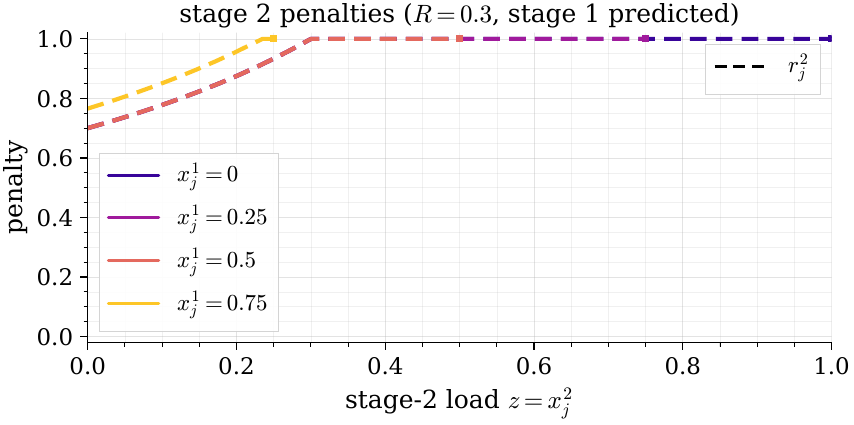}
    \hfill
    \includegraphics[width=0.32\textwidth]{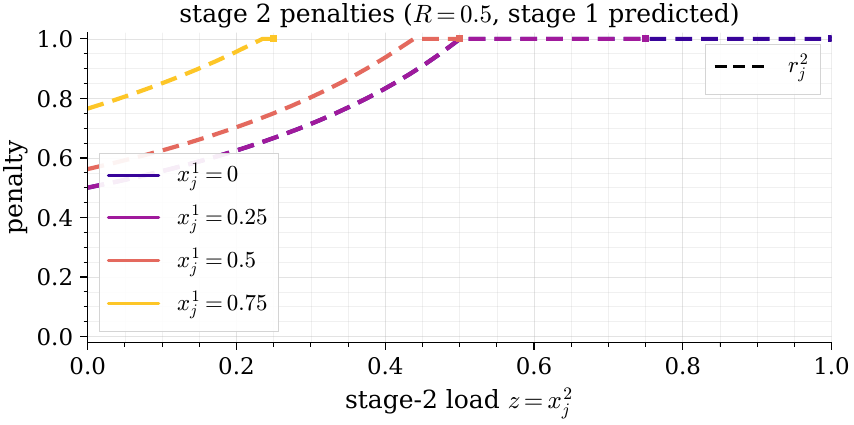}
    \hfill
    \includegraphics[width=0.32\textwidth]{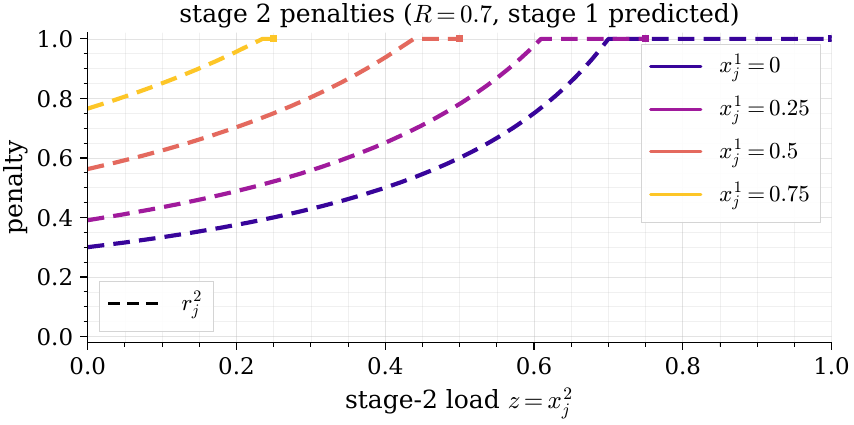}
    \caption{
    \textbf{First row:} stage-1 penalty functions for $k=3$ across robustness levels~$R$.
    \textbf{Second row:} stage-2 penalty functions for $k=3$ across robustness levels assuming that the vertex was not predicted in stage~1.
    \textbf{Third row:} stage-2 penalty functions for $k=3$ across robustness levels assuming that the vertex was predicted in stage~1 and hence $a_j^2$ will not be used in stage 2.}
    \label{fig:stage2-penalties}
\end{figure}



The safety curve $\auxfun_m(z)$ will be particularly useful in combination with the well-known inequality of the arithmetic and geometric means (AM-GM), 
which we will rely on extensively. 

\begin{proposition}[AM-GM]
\label{am-gm}
	Let $z_1,\ldots,z_n$ be non-negative real numbers. Then their geometric mean is at most their arithmetic mean, that is,
	$(z_1 \cdot \ldots \cdot z_n)^{1/n} \leq \frac1n (z_1 + \ldots + z_n)$.
\end{proposition}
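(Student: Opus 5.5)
The statement is the classical AM-GM inequality for $n$ non-negative reals, so a short self-contained argument suffices. I will use concavity of the logarithm, after first isolating the degenerate case.

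\emph{Degenerate case.} If some $z_\ell=0$, the product is $0$. Its $n$-th root is then $0$, while the arithmetic mean is non-negative, so the inequality holds.

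\emph{Positive case.} Assume all $z_\ell>0$. Since $\ln$ is concave on $(0,\infty)$, Jensen's inequality with equal weights $\nicefrac1n$ gives
\[
\frac1n\sum_{\ell=1}^n \ln z_\ell \;\le\; \ln\Bigl(\frac1n\sum_{\ell=1}^n z_\ell\Bigr).
\]
The left-hand side equals $\ln\bigl((z_1\cdots z_n)^{1/n}\bigr)$. Applying the increasing function $\exp$ to both sides yields $(z_1\cdots z_n)^{1/n}\le \frac1n(z_1+\cdots+z_n)$.

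\emph{Equality case.} The text before \Cref{am-gm} invokes a \emph{strict} AM-GM step in the proof of \Cref{lem:g-props}(b), so I would also record when equality holds. Because $\ln$ is strictly concave, Jensen's inequality is strict unless all $z_\ell$ coincide. Hence equality in the positive case holds exactly when $z_1=\cdots=z_n$. In the degenerate case, equality requires the arithmetic mean to be $0$ as well, which forces all $z_\ell=0$, again all equal.

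\emph{Alternative route.} If one prefers to avoid Jensen, Cauchy's forward--backward induction also works. One proves the case $n=2$ from $(\sqrt{a}-\sqrt{b})^2\ge 0$, extends to all powers of two by pairing, and descends to general $n$ by padding the list with copies of the mean.

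No step here is a real obstacle. The only point needing care is the zero case, where the logarithm is undefined, and that is handled separately above.
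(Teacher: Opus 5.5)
Your proof is correct. Splitting off the zero case and then applying Jensen's inequality to the strictly concave $\ln$ is the standard argument, and your characterization of equality (all $z_\ell$ equal) is right.

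The paper gives no proof of this proposition; it simply invokes AM-GM as a well-known fact. So there is no competing route to compare against.

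Your equality analysis is still worth keeping. It is exactly what justifies the paper's later appeal to a \emph{strict} AM-GM step, namely $g_{m+1}(X) > g_m(X)$ for $X<1$. In \Cref{lem:g-props}(b) with $y=0$, the averaged numbers are $1$ and $m$ copies of $\frac{m-1+X}{m}$, and these differ precisely when $X<1$.
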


We first use AM-GM to prove the following properties in \Cref{app:properties-penalties}.

\begin{restatable}{lemma}{lemmaLambdaProps}\label{lem:g-props}
The following statements are true.
\begin{enumerate}[label=(\alph*),nosep]
	\item For every $s \in \{1,\ldots,k\}$ and $z\in[0,1]$, it holds that $1 \geq \auxfun_s(z) \geq z$.
	\item For every $s \in \{2,\ldots,k\}$, $z \in [0,1]$, and $y \in [0,1-z]$, it holds that
	\(
	\auxfun_s(z) \geq (1-y) \cdot \auxfun_{s-1}(z+y).
	\)
	\item For every $s \in \{2,\ldots,k\}$, $z \in [0,1]$, and $y \in [0,1-z]$, it holds that
	\(
	\auxfun_s(z)+y \geq \auxfun_{s-1}(z+y).
	\)
\end{enumerate}
\end{restatable}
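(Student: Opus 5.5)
The plan is to write everything in the variable $u := 1-z \in [0,1]$, so that $\auxfun_m(z) = (1-u/m)^m$, and to derive (a), (b), (c) from AM-GM (\Cref{am-gm}) together with monotonicity.

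For (a), since $0 \le u/m \le 1/m \le 1$, the base $1-u/m$ lies in $[0,1]$, and therefore $\auxfun_s(z)\le 1$. For the lower bound $\auxfun_s(z)\ge z = 1-u$, Bernoulli's inequality gives $(1-u/s)^s \ge 1 - s\cdot u/s = 1-u$. Bernoulli applies because $-u/s \ge -1$. If one prefers AM-GM, write $1-u$ as a product of $1-u$ and $s-1$ ones and compare; Bernoulli is more direct, and I would cite it as standard.

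For (b), set $v := 1-z-y \in [0,u]$. The claim becomes
\[
\Bigl(1-\tfrac{u}{s}\Bigr)^s \;\ge\; (1-y)\Bigl(1-\tfrac{v}{s-1}\Bigr)^{s-1}.
\]
Apply AM-GM to the $s$ nonnegative numbers $1-y$ and $s-1$ copies of $1-\tfrac{v}{s-1}$. Their arithmetic mean is
\[
\frac{(1-y)+(s-1)-v}{s} = \frac{s-y-v}{s} = 1-\frac{u}{s},
\]
because $y+v = 1-z = u$. Raising AM-GM to the $s$-th power gives exactly the inequality. Nonnegativity holds: $1-y\ge 0$, and $v\le 1\le s-1$. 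Equality requires $1-y = 1-\tfrac{v}{s-1}$. At $y=0$ and $z<1$ this fails unless $v=0$, and it is precisely the strictness used in the paper to get $\auxfun_{m+1}(X) > \auxfun_m(X)$; I would remark on this at the end of the step.

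For (c), I would avoid a direct argument and instead show that $h(y) := \auxfun_s(z)+y-\auxfun_{s-1}(z+y)$ is nondecreasing on $[0,1-z]$ with $h(0)\ge 0$.
\begin{itemize}[nosep]
\item The bound $h(0)\ge 0$ is the case $y=0$ of (b).
\item The derivative of $\auxfun_{s-1}$ is $\auxfun_{s-1}'(t) = \bigl(1-\tfrac{1-t}{s-1}\bigr)^{s-2}$, which lies in $[0,1]$ for $t\in[0,1]$. Hence $h'(y) = 1-\auxfun_{s-1}'(z+y)\ge 0$.
\end{itemize}
The only delicate point is the case $s=2$, where $\auxfun_1(t)=t$ has derivative $1$; the argument still works, since $h'=0$ there. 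I expect (b) to be the main obstacle, specifically spotting the right AM-GM grouping (one factor $1-y$ against $s-1$ equal factors). Once that is seen, (c) follows by the monotonicity argument above.
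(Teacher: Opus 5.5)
Your proposal is correct and follows the paper's proof essentially step for step. For (b) you use the same AM-GM grouping, one factor $1-y$ against $s-1$ copies of $1-\frac{v}{s-1}=\frac{s-2+z+y}{s-1}$. For (c) you use the same monotonicity argument, with $L(0)\ge 0$ coming from (b) at $y=0$. The only cosmetic difference is in (a): you prove the lower bound with Bernoulli's inequality, whereas the paper applies AM-GM to $z$ and $s-1$ ones, and the two arguments are equivalent here.
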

Parts (b) and (c) of \Cref{lem:g-props} are precisely the two update rules needed in the
induction. 
Part (b) handles the uncapped non-predicted $r_j^s$, where the quantity
$1-R+P_j^{s-1}$ is divided by the remaining current-stage capacity $1-z$. 
Part (c) handles the capped or additive case, where the update pays directly
through the current load.


%
%

The next lemma verifies the assumptions of \Cref{lem:conditional-convexity} for our penalty functions~(\Cref{app:validity-penalties}).

\begin{restatable}{lemma}{LemmaPenaltiesValid}
\label{lem:penalties-valid}
For every stage $s\in[k]$ and every supply vertex $j\in S$, the penalty function $f_j^s$ is continuous, non-decreasing, and takes values in $[0,1]$.
\end{restatable}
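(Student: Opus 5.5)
The plan is to verify the three properties separately for the final-stage function, the non-predicted branch $r_j^s$, and the predicted branch $a_j^s$. For $a_j^s$ I need the invariant $1-R+P_j^{s-1}\ge \auxfun_{k-s+1}(X_j^{s-1})$ that the paper already invokes in the definition. Since that invariant is itself built from the penalties, I will prove both statements by a single induction over $s$ for a fixed $j$. The induction hypothesis at stage $s$ is
\[
1-R+P_j^{s-1}\ge \auxfun_{k-s+1}(X_j^{s-1}).
\]
From it I derive validity of $f_j^s$ and then the hypothesis for $s+1$.

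\emph{Base case and stage $k$.} For $s=1$ we have $P_j^0=X_j^0=0$. The invariant reads $1-R\ge (1-\nicefrac1k)^k$, which is exactly $R\le R_k$. For $s=k$, $f_j^k=0$, which is trivially valid.

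\emph{Non-predicted branch.} Write $c:=1-R+P_j^{s-1}$. Since $R\le R_k<1$ and $P_j^{s-1}\ge 0$, we get $c>0$.
\begin{itemize}
\item On $[0,1)$, the map $z\mapsto c/(1-z)$ is continuous and increasing. Hence $r_j^s$ is continuous, non-decreasing and has values in $(0,1]$.
\item Because $c>0$, $c/(1-z)\to\infty$ as $z\to 1$. So $r_j^s$ equals $1$ near $z=1$, and the extension $r_j^s(1)=1$ is continuous.
\end{itemize}

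\emph{Predicted branch.} Let $m=k-s\ge 1$, $X=X_j^{s-1}$, and $h(z):=\auxfun_m(X+z)-c$.
\begin{itemize}
\item \emph{Domain.} If $X=1$ the domain is $\{0\}$ and there is nothing to show. So assume $X<1$.
\item \emph{Strict negativity at $0$.} By the hypothesis and the strict form of \Cref{lem:g-props}(b) at $y=0$, we have $c\ge \auxfun_{m+1}(X)>\auxfun_m(X)$, so $h(0)<0$.
\item \emph{Convexity of $h$.} On $[0,1]$ the base $1-(1-t)/m$ is nonnegative, so $\auxfun_m$ is convex and hence $h$ is convex.
\item \emph{Monotonicity.} Consider $\phi(z):=h(z)/z$ on $(0,1-X]$. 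Convexity gives $h(0)\ge h(z)-zh'(z)$. Therefore
\[
\phi'(z)=\frac{zh'(z)-h(z)}{z^2}\ge \frac{-h(0)}{z^2}>0,
\]
so $\phi$ is increasing. Then $a_j^s=\max\{0,\phi\}$ is non-decreasing.
\item \emph{Continuity.} $a_j^s$ is continuous on $(0,1-X]$. It vanishes near $0$ because $h(0)<0$, so it is continuous at $0$ as well.
\item \emph{Range.} Nonnegativity comes from the $\max$. For $a_j^s\le 1$ I need $\auxfun_m(X+z)-c\le z$. This follows from $c\ge \auxfun_{m+1}(X)$ together with \Cref{lem:g-props}(c), which gives $\auxfun_{m+1}(X)+z\ge \auxfun_m(X+z)$.
\end{itemize}

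\emph{Propagating the invariant to $s+1$.} Set $x=x_j^s$, so that $P_j^s=P_j^{s-1}+x f_j^s(x)$.
\begin{itemize}
\item \emph{Non-predicted, uncapped.} Here $1-R+P_j^s=c+xc/(1-x)=c/(1-x)$. Using the hypothesis and then \Cref{lem:g-props}(b),
\[
\frac{c}{1-x}\ge \frac{\auxfun_{m+1}(X)}{1-x}\ge \auxfun_m(X+x).
\]
\item \emph{Non-predicted, capped.} Here $r_j^s(x)=1$, so $1-R+P_j^s=c+x\ge \auxfun_{m+1}(X)+x\ge \auxfun_m(X+x)$ by \Cref{lem:g-props}(c).
\item \emph{Predicted, $a_j^s(x)>0$.} Then $1-R+P_j^s=c+h(x)=\auxfun_m(X+x)$ holds with equality.
\item \emph{Predicted, $a_j^s(x)=0$.} Then $h(x)\le 0$, that is, $c\ge \auxfun_m(X+x)$, and $P_j^s=P_j^{s-1}$.
\end{itemize}
In every case $1-R+P_j^s\ge \auxfun_{k-s}(X_j^s)$, which closes the induction.

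I expect the main obstacle to be the monotonicity of $a_j^s$. A quotient of the form $h(z)/z$ need not be monotone in general. The argument hinges on combining convexity of $\auxfun_m$ with the strict sign $h(0)<0$, and that sign is exactly what the invariant, via \Cref{lem:g-props}(b), provides.
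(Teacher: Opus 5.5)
Your proof is correct and follows essentially the same route as the paper. It uses the same invariant $1-R+P_j^{s}\ge \auxfun_{k-s}(X_j^{s})$ with the same four-case propagation via \Cref{lem:g-props}(b),(c), the same range and continuity arguments, and the same convexity argument for monotonicity of $a_j^s$: your tangent-line bound $h(0)\ge h(z)-zh'(z)$ is exactly the paper's observation that $\auxfun_{k-s}(X+z)-z\,\auxfun_{k-s}'(X+z)$ is non-increasing in $z$, and folding the invariant and the validity claims into a single induction over $s$ is just a slightly cleaner organization of the same argument.
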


\subsection{Proof of Consistency and Robustness Certificates and \Cref{thm:main}}
\label{sec:proofs-main-theorem}



We finally prove the sufficient consistency and robustness certificates for our penalty functions, which we stated in \Cref{lem:conditional-rob} and \Cref{lem:conditional-cons}.
Before we start, recall that $R_k = 1 - (1 - \nicefrac{1}{k})^k$ and $C_k(R) = k(1-R)^{1/k} + R - (k-1)$.
We use that $R \leq C_k(R)\leq 1$, which is easy to verify.

We start with the consistency certificate \eqref{eq:k-cons-prefix}. Its proof relies on the following auxiliary lemma, which characterizes the weighted penalty $P_j^h$ if $j$ was not predicted in previous stages $[h-1]$
and its penalty function was never capped, that is, $r_j^d(x_j^d)<1$ for every $d\in [h]$.
Then we can inductively substitute  $r_j^d$ and obtain the following closed-form expression (proof in \Cref{app:proofs-conditions}).

\begin{restatable}{lemma}{lemmaUncappedNonpredicted}\label{claim:uncapped-nonpredicted-product}
Fix a supply vertex $j \in S$ and an index $h\in\{0,1,\ldots,k-1\}$ 
such that $j\notin A_d$ for every $d\in [h]$. 
If $r_j^d(x_j^d)<1$ for every $d\in [h]$, then
\[
    1-R+P_j^h
    =
    \frac{1-R}{\prod_{\ell=1}^h(1-x_j^\ell)} .
\]
Moreover, if $t$ is the first index in $[h]$ such that $r_j^t(x_j^t)=1$, 
then the following two conclusions hold:
\begin{enumerate}[label=(\roman*),nosep]
    \item $r_j^d(x_j^d)=1$ for every $d\in\{t,\ldots,h\}$, and
    \item consequently, $P_j^d-X_j^d=P_j^{t-1}-X_j^{t-1}$ for every $d \in \{t,\ldots,h\}$.
\end{enumerate}
\end{restatable}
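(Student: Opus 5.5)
We prove the closed form by induction on $d\in\{0,\ldots,h\}$, and the two consequences by a separate induction starting from $t$.

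\textbf{Closed form.} For $d=0$ we have $P_j^0=0$, so both sides equal $1-R$. Now suppose the formula holds for $d-1<h$. Since $j\notin A_d$, the stage-$d$ penalty is $f_j^d=r_j^d$. Because $r_j^d(x_j^d)<1$, the minimum in its definition is attained by the second term, and in particular $x_j^d<1$. Hence
\[
    x_j^d r_j^d(x_j^d)=\frac{x_j^d\,(1-R+P_j^{d-1})}{1-x_j^d}.
\]
Adding this to $1-R+P_j^{d-1}$ gives
\[
    1-R+P_j^d=(1-R+P_j^{d-1})\Bigl(1+\frac{x_j^d}{1-x_j^d}\Bigr)=\frac{1-R+P_j^{d-1}}{1-x_j^d},
\]
and the induction hypothesis yields the product formula.

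\textbf{Part (i).} Let $t$ be the first capped index and argue by induction for $d\ge t$. Assume $r_j^{d}(x_j^{d})=1$, so $1-R+P_j^{d-1}\ge 1-x_j^{d}$ (for $x_j^d<1$; if $x_j^d=1$ the vertex is saturated, and we treat that edge case separately). Then
\[
    1-R+P_j^{d}=1-R+P_j^{d-1}+x_j^{d}\ge 1.
\]
For any $z\in[0,1)$ this gives $(1-R+P_j^{d})/(1-z)\ge 1$, so $r_j^{d+1}\equiv 1$ on its whole domain. In particular $r_j^{d+1}(x_j^{d+1})=1$, where we use $j\notin A_{d+1}$ so that $f_j^{d+1}=r_j^{d+1}$.

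The saturated edge case needs attention. If $x_j^d=1$ at some step, then $X_j^d=1$, the later domains are $\{0\}$, and the later loads are $0$. The continuous extension gives $r(1)=1$, but the evaluation at $0$ is $\min\{1,1-R+P\}$, which is again $1$ once $P_j^d\ge x_j^d\cdot 1=1$. More generally we must argue that $1-R+P_j^{d}\ge 1$ holds whenever stage $d$ was capped. If the cap arose with $x_j^d<1$, this follows from the min condition as above. If $x_j^d=1$, it follows from $P_j^d\ge P_j^{d-1}+1\cdot 1$. In both cases the induction goes through, and I expect this edge-case bookkeeping to be the only real obstacle.

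\textbf{Part (ii).} For each $d\in\{t,\ldots,h\}$, part (i) gives $P_j^d=P_j^{d-1}+x_j^d$ and by definition $X_j^d=X_j^{d-1}+x_j^d$. Hence $P_j^d-X_j^d=P_j^{d-1}-X_j^{d-1}$, and telescoping from $d$ down to $t$ gives $P_j^d-X_j^d=P_j^{t-1}-X_j^{t-1}$.
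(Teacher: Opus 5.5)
Your proof is correct and follows the paper's argument essentially step for step. The closed form comes from the recurrence $1-R+P_j^d=(1-R+P_j^{d-1})/(1-x_j^d)$ on uncapped stages. Parts (i)--(ii) come from showing $1-R+P_j^d\ge 1$ after a capped stage, treating $x_j^d<1$ and $x_j^d=1$ separately; this forces every later non-predicted penalty to be capped, so $P_j^d-X_j^d$ stays constant. The saturated case you flag is not a real obstacle: your general observation that $(1-R+P_j^d)/(1-z)\ge 1$ for all $z\in[0,1)$ already covers the domain $\{0\}$, just as the paper's maintained invariant $1-R+P_j^{d-1}\ge 1$ does.
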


\begin{lemma}\label{Lemma:feasibility_consistency1}
The penalty functions satisfy \eqref{eq:k-cons-prefix}, that is, 
for every $s\in[k]$, every $j\in A_s$, and every $p\in [s-1]$, it holds that
    \(
        P_j^p+1-X_j^p \geq C_k(R).
    \)
\end{lemma}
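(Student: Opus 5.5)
The plan is to reduce the claim to the closed form of \Cref{claim:uncapped-nonpredicted-product}, minimize over the loads with AM-GM, and finish by monotonicity of $C_m$ in $m$.

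\textbf{Reduction.} Fix $s\in[k]$, $j\in A_s$ and $p\in[s-1]$. The sets $A_1,\ldots,A_k$ are pairwise disjoint, so $j\notin A_d$ for every $d\le p$. Hence in all stages $d\in[p]$ the algorithm uses the non-predicted penalty $f_j^d=r_j^d$, and \Cref{claim:uncapped-nonpredicted-product} applies with $h=p$. I split into two cases according to whether some $r_j^d(x_j^d)$ with $d\le p$ equals $1$.

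\textbf{Uncapped case.} Suppose $r_j^d(x_j^d)<1$ for all $d\in[p]$. Write $u_\ell:=1-x_j^\ell\in(0,1]$; all $u_\ell>0$, since $x_j^\ell=1$ would force $r_j^\ell(x_j^\ell)=1$. The lemma gives $P_j^p = \frac{1-R}{\prod_{\ell=1}^p u_\ell} - (1-R)$, and $1-X_j^p = 1-p+\sum_{\ell} u_\ell$. Therefore
\[
P_j^p+1-X_j^p \;=\; \frac{1-R}{\prod_{\ell=1}^p u_\ell} + R - p + \sum_{\ell=1}^p u_\ell .
\]
Set $\tau:=(\prod_\ell u_\ell)^{1/p}$. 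By AM-GM (\Cref{am-gm}), $\sum_\ell u_\ell\ge p\tau$, so the expression is at least $\phi(\tau):=(1-R)\tau^{-p}+p\tau+R-p$. Minimizing the convex function $\phi$ over $\tau>0$ gives the stationary point $\tau^{p+1}=1-R$, with minimum value
\[
(p+1)(1-R)^{1/(p+1)}+R-p \;=\; C_{p+1}(R).
\]
It remains to compare $C_{p+1}(R)$ with $C_k(R)$, using $p+1\le s\le k$.

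\textbf{Monotonicity of $C_m$.} I will show that $C_m(R)=m\bigl((1-R)^{1/m}-1\bigr)+1+R-1\cdot 0$, i.e.\ the term $m\bigl((1-R)^{1/m}-1\bigr)$, is non-increasing in $m$; this is the main step needing care.
\begin{itemize}[nosep]
\item Write $c=\ln(1-R)\le 0$ and $y=1/m$. Then $m\bigl((1-R)^{1/m}-1\bigr) = c\cdot\frac{e^{cy}-1}{cy}$.
\item The map $x\mapsto (e^x-1)/x$ is increasing, and $x=cy$ decreases as $y$ grows. So $\frac{e^{cy}-1}{cy}$ decreases in $y$.
\item Multiplying by $c\le 0$ makes the expression increasing in $y$, hence decreasing in $m$.
\end{itemize}
This gives $C_{p+1}(R)\ge C_k(R)$. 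The case $R=0$, where $c=0$ and all $C_m=1$, is trivial.

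\textbf{Capped case.} Let $t\le p$ be the first index with $r_j^t(x_j^t)=1$. By part (ii) of \Cref{claim:uncapped-nonpredicted-product},
\[
P_j^p+1-X_j^p \;=\; P_j^{t-1}+1-X_j^{t-1}.
\]
If $t=1$, this equals $1\ge C_k(R)$. Otherwise stages $1,\ldots,t-1$ are uncapped, so the uncapped case with $p$ replaced by $t-1$ gives a bound of at least $C_t(R)\ge C_k(R)$, again by monotonicity. This covers every possibility and proves \eqref{eq:k-cons-prefix}.
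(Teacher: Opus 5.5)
Your proof is correct. The reduction is the same as the paper's: disjointness of the $A_d$, the case split on the first capped stage $t$, the product formula for $1-R+P_j^{t-1}$, and the invariance of $P_j^d-X_j^d$ after the first cap.

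The difference lies in the final inequality. The paper adds $(k-1)-R$ to $P_j^{t-1}+1-X_j^{t-1}$. It then notes that the result is the sum of $k$ numbers whose product is $1-R$: $1-x_j^1,\ldots,1-x_j^{t-1}$, then $k-t$ copies of $1$, and finally $(1-R)/\prod_{\ell<t}(1-x_j^\ell)$. A single AM-GM then yields $C_k(R)$ directly.

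You instead do three steps:
\begin{enumerate}
\item apply AM-GM only to the $p$ quantities $u_\ell$;
\item minimize the convex one-variable function $\phi$ to get $C_{p+1}(R)$;
\item prove separately, via $(e^x-1)/x$, that $C_m(R)$ is non-increasing in $m$.
\end{enumerate}
Each step checks out, but the paper's trick absorbs the last two. AM-GM on the $p+1$ numbers $u_1,\ldots,u_p,(1-R)/\prod_\ell u_\ell$ gives $C_{p+1}(R)$ with no calculus. Padding with ones is itself a one-line proof of $C_m(R)\ge C_{m+1}(R)$: apply AM-GM to $m$ copies of $(1-R)^{1/m}$ together with a single $1$.

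What your route buys is a slightly more informative intermediate statement, $P_j^p+1-X_j^p\ge C_t(R)$. Here $t$ is the first capped stage, or $p+1$ if there is none. The bound thus depends only on the number of uncapped prefix stages. You also get the extremal configuration $\tau^{p+1}=1-R$ for the prefix certificate. The paper's argument is shorter and needs no auxiliary monotonicity lemma.

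One minor point: the stray term $-1\cdot 0$ in your expression for $C_m(R)$ is a harmless typo.
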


\begin{proof}
Fix a stage $s\in[k]$, a supply vertex $j\in A_s$, and a prefix stage $p\in [s-1]$.
Since vertex $j$ is not predicted in any stage $d\leq p \leq s-1$, we have $f_j^d=r_j^d$.
Let $t$ be the first index in $[p]$ such that $r_j^t(x_j^t)=1$; if no such index exists, set $t=p+1$.
If $t=p+1$, then \Cref{claim:uncapped-nonpredicted-product} with $h=p$ gives
\[
    1-R+P_j^{t-1}
    =
    \frac{1-R}{\prod_{\ell=1}^{t-1}(1-x_j^\ell)}
\]
and the identity $P_j^p-X_j^p=P_j^{t-1}-X_j^{t-1}$ is trivial. If $t\leq p$, then \Cref{claim:uncapped-nonpredicted-product} with $h=t-1$ gives the same product formula, and the second part of the claim, applied with $h=p$, gives
\(
    P_j^p-X_j^p=P_j^{t-1}-X_j^{t-1}.
\)
Using these identities, we obtain
\begin{align*}
    P_j^p+1-X_j^p+(k-1)-R
    &= P_j^{t-1}+1-X_j^{t-1}+(k-1)-R \\
    &= \frac{1-R}{\prod_{\ell=1}^{t-1}(1-x_j^\ell)} - X_j^{t-1}+k-1 \\
    &= \sum_{\ell=1}^{t-1}(1-x_j^\ell) + \sum_{\ell=1}^{k-t}1 + \frac{1-R}{\prod_{\ell=1}^{t-1}(1-x_j^\ell)}.
\end{align*}
More explicitly, the last expression consists of the $t-1$ numbers
$1-x_j^1,\ldots,1-x_j^{t-1}$, followed by $k-t$ copies of $1$, and the final number
$(1-R)/\prod_{\ell=1}^{t-1}(1-x_j^\ell)$. Thus there are
$(t-1)+(k-t)+1=k$ nonnegative numbers, and their product is $1-R$.
By AM-GM their sum is at least $k(1-R)^{1/k}$. We conclude by rearranging
\[
    P_j^p+1-X_j^p \geq k(1-R)^{1/k}+R-(k-1)=C_k(R). \qedhere
\]
\end{proof}

The proof of property \eqref{eq:k-cons-s} in the lemma below is deferred to \Cref{app:proofs-conditions}.

\begin{restatable}{lemma}{lemmaFeasibilityConsistencyTwo}\label{Lemma:feasibility_consistency2}
The penalty functions satisfy \eqref{eq:k-cons-s}, that is, for every $s\in[k-1]$ and every $j\in A_s$, it holds that
    \(
        P_j^{k-1}+1-f_j^s(x_j^s) \geq C_k(R) .  
    \)
\end{restatable}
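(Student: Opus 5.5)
The approach is to drop the future penalty mass and reduce the claim to a one-stage inequality, which is then closed by the same AM-GM argument as in \Cref{Lemma:feasibility_consistency1}. Fix $s\in[k-1]$ and $j\in A_s$, and abbreviate $m:=k-s$, $X:=X_j^{s-1}$, $x:=x_j^s$ and $Q:=1-R+P_j^{s-1}$.

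\textbf{Step 1: reduction to stage $s$.} All penalty values lie in $[0,1]$ by \Cref{lem:penalties-valid}, so $P_j^{k-1}\ge P_j^s$. It therefore suffices to show
\[
P_j^s+1-a_j^s(x)\ge C_k(R).
\]

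\textbf{Step 2: the case $a_j^s(x)=0$.} Here the left side is $P_j^s+1\ge 1\ge C_k(R)$, and we are done.

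\textbf{Step 3: the case $a_j^s(x)>0$.} Now $x>0$, and the definition of $a_j^s$ gives $x\,a_j^s(x)=g_m(X+x)-Q$. Substituting $P_j^s=P_j^{s-1}+x\,a_j^s(x)$ and simplifying, the target inequality becomes
\[
R+\frac{Q-(1-x)\,g_m(X+x)}{x}\;\ge\; C_k(R).
\]
Multiplying by $x$ and using $C_k(R)-R=k(1-R)^{1/k}-(k-1)$, this is equivalent to
\[
Q+(k-1)x\;\ge\;(1-x)\,g_m(X+x)+k x(1-R)^{1/k}. \qquad (\ast)
\]

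\textbf{Step 4: controlling $Q$ through the history.} Since $j\in A_s$ and the sets $A_d$ are disjoint, $j\notin A_d$ for every $d<s$. Hence all earlier penalties were $r_j^d$, and \Cref{claim:uncapped-nonpredicted-product} applies.
\begin{itemize}
\item In the uncapped case, $Q=(1-R)/\prod_{\ell<s}(1-x_j^\ell)$.
\item In the capped case, let $t$ be the first capped stage. Then $P_j^{s-1}-X=P_j^{t-1}-X_j^{t-1}$, and $Q$ is expressed through the uncapped prefix $1,\ldots,t-1$ plus an additive load term, exactly as in the proof of \Cref{Lemma:feasibility_consistency1}. I will treat this case by the same substitution.
\end{itemize}
Write $\pi:=\prod_{\ell<s}(1-x_j^\ell)$, and note $1-X\le\sum_{\ell<s}(1-x_j^\ell)-(s-2)$.

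\textbf{Step 5: AM-GM.} The plan is to bound the right side of $(\ast)$ by a single AM-GM instance with $k$ factors. Expanding $g_m(X+x)=\bigl(1-\frac{1-X-x}{m}\bigr)^m$, I would apply AM-GM to $(1-x)g_m(X+x)$ with the factor $(1-x)$ together with $m$ copies of $1-\frac{1-X-x}{m}$. This bounds it by an arithmetic mean of terms that are linear in $X$ and $x$. I would then use a second AM-GM over the $k$ numbers $1-x_j^1,\ldots,1-x_j^{s-1}$, the stage-$s$ quantity, the appropriate number of copies of $1$, and $(1-R)/(\pi\cdot\text{stage-}s\text{ factor})$. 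Their product is $1-R$, so their sum is at least $k(1-R)^{1/k}$, which matches the term $kx(1-R)^{1/k}$ after weighting by $x$.

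A cleaner fallback is to view $\phi(x):=Q+(k-1)x-(1-x)g_m(X+x)-kx(1-R)^{1/k}$ as a function of $x$ on $[0,1-X]$. We have $\phi(0)=Q-g_m(X)\ge0$ by the invariant together with \Cref{lem:g-props}(b). It would then remain to check $\phi$ at the right endpoint, or at its unique interior critical point, reducing that check to the $k$-term AM-GM above.

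\textbf{Main obstacle.} The hard step is verifying $(\ast)$ uniformly over the history $(x_j^1,\ldots,x_j^{s-1})$ and the current load $x$. The issue is that the ratio $(Q-(1-x)g_m(X+x))/x$ is not obviously monotone in $x$. I expect the worst case to be all earlier loads equal with the bound attained at equality in AM-GM. That is where $C_k(R)$ should emerge, and this is the step I would first try to settle by the convexity and critical-point computation.
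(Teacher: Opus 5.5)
\paragraph{Verdict: there is a genuine gap in Step~5.} Your Steps 1--3 are correct and match the paper: it suffices to prove $(\ast)$ when $a_j^s(x)>0$. But Step~5 carries the whole content of the lemma, it is not carried out, and the route you describe cannot succeed.

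\paragraph{Why the proposed AM-GM fails.} Applying AM-GM to $(1-x)$ together with $m$ copies of $1-\frac{1-X-x}{m}$ gives exactly $(1-x)\,g_m(X+x)\le g_{m+1}(X)$. This bound discards all dependence on $x$, while $(\ast)$ has almost no slack. Take $k=2$, $s=1$, $X=0$, $x=1$, so that $a_j^1(1)=R>0$.
\begin{itemize}
\item Here $(\ast)$ reads $2-R\ge 2\sqrt{1-R}$, with slack $(1-\sqrt{1-R})^2$.
\item Replacing $(1-x)g_1(1)=0$ by $g_2(0)=1/4$ would require $(1-\sqrt{1-R})^2\ge 1/4$.
\item That fails for every $R\in(0,R_2)$ with $R_2=3/4$.
\end{itemize}
So no subsequent lower bound on $Q$ can rescue this route.

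\paragraph{The fallback does not close the gap either.} The function $\phi$ is convex in $x$, so its minimum is in general at an interior critical point. Bounding $\phi$ there uniformly over all histories is exactly the problem you leave open.

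\paragraph{A smaller point about Step~4.} The capped case needs no substitution, because it cannot occur when $a_j^s(x)>0$. If some earlier stage $t<s$ had $r_j^t(x_j^t)=1$, then $P_j^{t}=P_j^{t-1}+x_j^t\ge R$. Hence $Q\ge 1\ge g_m(X+x)$, which forces $a_j^s(x)=0$. So in Step~3 you always have $Q=(1-R)/\pi$.

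\paragraph{The missing idea.} Apply AM-GM to $g_m(X+x)$ itself, not to $(1-x)g_m(X+x)$, and use $Q/g_m(X+x)$ as the last factor. Take the $k$ numbers
\[
1-x_j^1,\ldots,1-x_j^{s-1},\qquad m \text{ copies of } \frac{m-1+X+x}{m},\qquad \frac{Q}{g_m(X+x)}.
\]
Their product is $\pi\cdot g_m(X+x)\cdot Q/g_m(X+x)=1-R$. Their sum is $k-2+x+Q/g_m(X+x)$. AM-GM therefore gives
\[
Q\;\ge\; g_m(X+x)\bigl(k(1-R)^{1/k}-k+2-x\bigr).
\]
Substitute this into $(\ast)$. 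The left side minus the right side is then at least
\[
\bigl(g_m(X+x)-x\bigr)\bigl(k(1-R)^{1/k}-k+1\bigr).
\]
This is nonnegative for two reasons. First, $g_m(X+x)\ge X+x\ge x$ by \Cref{lem:g-props}(a). Second, $k(1-R)^{1/k}\ge k-1$, and this is exactly where the hypothesis $R\le R_k$ enters.
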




We next prove the sufficient robustness certificate of our penalty functions.

\begin{lemma}\label{Lemma:feasibility_robustness1}
The penalty functions satisfy \eqref{eq:k-rob-s} for every
$s\in [k-1]$.
\end{lemma}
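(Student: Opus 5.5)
We need to show $P_j^{k-1}+1-f_j^s(x_j^s)\ge R$ for every $s\in[k-1]$ and $j\in S$. Since $P_j^{k-1}\ge P_j^s$ (all summands $x_j^d f_j^d(x_j^d)$ are nonnegative), it suffices to prove the local statement $P_j^s+1-f_j^s(x_j^s)\ge R$, i.e.,
\[
1-R+P_j^{s-1}+x_j^s f_j^s(x_j^s)\ \ge\ f_j^s(x_j^s).
\]
We split into cases according to which branch defines $f_j^s$.

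\textbf{Non-predicted case $j\notin A_s$, so $f_j^s=r_j^s$.}
\begin{itemize}
\item If $r_j^s(x_j^s)<1$, then by construction $r_j^s(x_j^s)(1-x_j^s)=1-R+P_j^{s-1}$. Rearranging gives equality in the display.
\item If $r_j^s(x_j^s)=1$, the display reads $1-R+P_j^{s-1}+x_j^s\ge 1$. This holds because the cap is active exactly when $1-R+P_j^{s-1}\ge 1-x_j^s$.
\item The endpoint $x_j^s=1$ (only possible if $X_j^{s-1}=0$) is covered by continuity: there $r=1$ and the inequality becomes $1-R+P_j^{s-1}+1\ge1$, which is trivial.
\end{itemize}

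\textbf{Predicted case $j\in A_s$, so $f_j^s=a_j^s$.} Write $m=k-s\ge1$, $X=X_j^{s-1}$ and $z=x_j^s$.
\begin{itemize}
\item If $a_j^s(z)=0$, the inequality is $1-R+P_j^{s-1}\ge 0$, which is immediate since $R\le1$.
\item Otherwise $a_j^s(z)\,z=g_m(z+X)-(1-R+P_j^{s-1})$, so $1-R+P_j^s=g_m(X_j^s)$. The goal becomes $g_m(X+z)\ge a_j^s(z)$, that is,
\[
z\,g_m(X+z)\ \ge\ g_m(X+z)-(1-R+P_j^{s-1}),
\qquad\text{equivalently}\qquad
1-R+P_j^{s-1}\ \ge\ (1-z)\,g_m(X+z).
\]
\end{itemize}
This is where the invariant $1-R+P_j^{s-1}\ge g_{m+1}(X)$ enters. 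The paper already alludes to this invariant when defining $a_j^s$; I would prove it by induction on $s$, using Lemma~\ref{lem:g-props}(b),(c). Given the invariant, Lemma~\ref{lem:g-props}(b) with $y=z$ gives $g_{m+1}(X)\ge(1-z)g_m(X+z)$, which closes the case.

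\textbf{Proving the invariant (the main obstacle).} The base case is $s=1$: $1-R\ge g_k(0)=(1-1/k)^k$, which is exactly $R\le R_k$. For the step, assume $1-R+P_j^{s-1}\ge g_{m+1}(X)$ and show $1-R+P_j^s\ge g_m(X+z)$ in each branch.
\begin{itemize}
\item \emph{Uncapped $r$:} $1-R+P_j^s=(1-R+P_j^{s-1})/(1-z)\ge g_{m+1}(X)/(1-z)\ge g_m(X+z)$, using Lemma~\ref{lem:g-props}(b).
\item \emph{Capped $r$:} $1-R+P_j^s=1-R+P_j^{s-1}+z\ge g_{m+1}(X)+z\ge g_m(X+z)$, using Lemma~\ref{lem:g-props}(c).
\item \emph{Predicted, $a>0$:} equality $1-R+P_j^s=g_m(X+z)$ holds by construction.
\item \emph{Predicted, $a=0$:} here the numerator is nonpositive, so $1-R+P_j^{s-1}\ge g_m(X+z)$ directly, and $P_j^s=P_j^{s-1}$.
\end{itemize}
If the invariant is already established later in the paper, I would simply cite it. Otherwise, this induction is the part needing care, especially the boundary cases $X=1$ and $z=1-X$, which are handled by the continuous extensions.
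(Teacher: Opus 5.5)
Your proof is correct. The reduction to the local inequality via $P_j^{k-1}\ge P_j^s$ matches the paper's argument, and so does the non-predicted case $j\notin A_s$. The genuine difference is the predicted case $j\in A_s$.

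The paper does not treat that case directly. It cites \Cref{Lemma:feasibility_consistency2}, i.e.\ the stronger certificate $P_j^{k-1}+1-a_j^s(x_j^s)\ge C_k(R)\ge R$. That lemma's proof needs the closed-form product of \Cref{claim:uncapped-nonpredicted-product}, an argument that no earlier non-predicted penalty was capped, and a $k$-term AM-GM.

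You instead derive the weaker bound $R$ from the safety invariant alone, combined with \Cref{lem:g-props}(b) at $y=x_j^s$:
\[
1-R+P_j^{s-1}\ge g_{k-s+1}(X_j^{s-1})\ge(1-x_j^s)\,g_{k-s}(X_j^s).
\]
The invariant is \Cref{lem:g-invariant} in the paper, so you may cite it rather than reprove it. Whenever $a_j^s(x_j^s)>0$ we have $1-R+P_j^s=g_{k-s}(X_j^s)$, so the display is exactly $1-R+P_j^s\ge a_j^s(x_j^s)$.

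What each route buys:
\begin{itemize}
\item The paper's route costs one line, because the consistency certificate must be proven anyway, and it gives the stronger constant.
\item Your route makes robustness independent of the consistency calibration. It shows the predicted branch cannot hurt robustness simply because it stops at the safety curve.
\item The same short argument adapts to the exponential curve of the online setting via \Cref{lem:online-exp-props}(b).
\end{itemize}

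One small gloss in your sketch of the invariant concerns a predicted vertex with $x_j^s=0$. There $a_j^s(0)=0$ holds by the continuous extension, so it does not by itself show the numerator is nonpositive. You need $g_{m+1}(X)\ge g_m(X)$ instead (part (b) with $y=0$), as the paper does.
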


\begin{proof}
We first prove \eqref{eq:k-rob-s}. Fix $s\in[k-1]$ and $j\in S$.
Recall that $C_k(R)\geq R$.
If $j\in A_s$, then \eqref{eq:k-cons-s} from \Cref{Lemma:feasibility_consistency2} immediately gives
$P_j^{k-1}+1-f_j^s(x_j^s) \geq C_k(R) \geq R$.
It remains to consider the case $j \notin A_s$. Here we have  $f_j^s=r_j^s$.
We claim that
\begin{equation}\label{eq:robust-stage-local}
    1+P_j^{s-1}+(x_j^s-1) \cdot r_j^s(x_j^s) \geq R.
\end{equation}
If $x_j^s=1$, then $r_j^s(x_j^s)=1$, and \eqref{eq:robust-stage-local} becomes
$1+P_j^{s-1}\geq R$, which is true since $R\leq 1$ and $P_j^{s-1}\geq 0$.
Now assume $x_j^s<1$. If $r_j^s(x_j^s)=1$, then the definition of $r_j^s$ gives
\[
    \frac{1-R+P_j^{s-1}}{1-x_j^s}\geq 1,
\]
which is equivalent to
\(
    1+P_j^{s-1}+(x_j^s-1) \geq R.
\)
This is exactly \eqref{eq:robust-stage-local} if $r_j^s(x_j^s)=1$.
If $r_j^s(x_j^s)<1$, that is, $ r_j^s(x_j^s)$ is uncapped,
then $1+P_j^{s-1}+(x_j^s-1)r_j^s(x_j^s)=R$.
This proves \eqref{eq:robust-stage-local} in all cases.

Using \eqref{eq:robust-stage-local}, we conclude
\begin{align*}
    P_j^{k-1}+1-f_j^s(x_j^s)
    &= P_j^{k-1}+1-r_j^s(x_j^s) \\
    &\geq P_j^s+1-r_j^s(x_j^s) 
    = 1+P_j^{s-1}+(x_j^s-1) \cdot r_j^s(x_j^s) 
    \geq R.
\end{align*}
Thus \eqref{eq:k-rob-s} holds.
\end{proof}

Finally, the proof of \eqref{eq:k-rob-k} in the lemma below is deferred to \Cref{app:proofs-conditions}.

\begin{restatable}{lemma}{lemmaRobustnessTwo}\label{Lemma:feasibility_robustness2}
	The penalty functions satisfy \eqref{eq:k-rob-k}.
\end{restatable}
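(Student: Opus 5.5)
We need to show $P_j^{k-1}+1-X_j^{k-1}\ge R$ for all $j\in S$. The plan is to prove by induction on $s$ the safety invariant announced in \Cref{sec:penalty-functions},
\[
  1-R+P_j^s \;\ge\; \auxfun_{k-s}(X_j^s) \qquad\text{for all } s\in\{0,1,\ldots,k-1\},
\]
and then read off \eqref{eq:k-rob-k} at $s=k-1$. There $\auxfun_1(z)=z$, so the invariant becomes $1-R+P_j^{k-1}\ge X_j^{k-1}$, which is exactly $P_j^{k-1}+1-X_j^{k-1}\ge R$.

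\emph{Base case $s=0$.} Here $P_j^0=0$ and $X_j^0=0$, so we need $1-R\ge \auxfun_k(0)=(1-\nicefrac1k)^k$. This is precisely $R\le R_k$.

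\emph{Induction step.} Assume $1-R+P_j^{s-1}\ge \auxfun_{k-s+1}(X_j^{s-1})$, and write $z=X_j^{s-1}$, $y=x_j^s$, $m=k-s\ge1$. We have $P_j^s=P_j^{s-1}+y f_j^s(y)$, and we split into cases according to which branch defines $f_j^s$.

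\emph{Case 1: $j\in A_s$.} Then $f_j^s=a_j^s$. If $y>0$, the definition of $a_j^s$ gives
\[
  y\,a_j^s(y)\ \ge\ \auxfun_m(z+y)-(1-R+P_j^{s-1}),
\]
so $1-R+P_j^s\ge \auxfun_m(z+y)$ directly. The max with $0$ only adds slack. If $y=0$, we use \Cref{lem:g-props}(c) with $y=0$ to get $\auxfun_{m+1}(z)\ge \auxfun_m(z)$, and the hypothesis finishes the case.

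\emph{Case 2a: $j\notin A_s$, uncapped.} Here $f_j^s=r_j^s$ with $r_j^s(y)<1$ (so $y<1$). Then
\[
  1-R+P_j^s=(1-R+P_j^{s-1})\Bigl(1+\frac{y}{1-y}\Bigr)=\frac{1-R+P_j^{s-1}}{1-y}.
\]
By the hypothesis and \Cref{lem:g-props}(b), this is at least $\auxfun_{m+1}(z)/(1-y)\ge \auxfun_m(z+y)$.

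\emph{Case 2b: $j\notin A_s$, capped.} Here $r_j^s(y)=1$, so $1-R+P_j^s=1-R+P_j^{s-1}+y$. By the hypothesis this is at least $\auxfun_{m+1}(z)+y$, and \Cref{lem:g-props}(c) gives $\auxfun_{m+1}(z)+y\ge \auxfun_m(z+y)$. This covers the boundary $y=1$, and also $z=1$ where $y=0$.

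The step requiring the most care is Case 1 (the predicted branch). We must confirm that the minimal penalty $a_j^s$ actually enforces the target. This holds with equality whenever the numerator is positive. When the max is attained at $0$, the numerator is nonpositive, so the invariant already holds, which is what the max encodes. One subtlety is that \Cref{lem:g-props} is stated with index $s\ge2$. In our application the relevant index is $m+1=k-s+1\ge2$, since $s\le k-1$, so all uses fall within its range. Domain conditions like $z+y\le1$ follow from feasibility of $(\CP_s)$.
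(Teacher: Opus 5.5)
Your proposal is correct and matches the paper's proof. The paper also derives \eqref{eq:k-rob-k} by applying the safety invariant $1-R+P_j^s\ge \auxfun_{k-s}(X_j^s)$ (\Cref{lem:g-invariant}) at $s=k-1$, where $\auxfun_1(z)=z$; it proves that invariant by the same induction, with the base case $R\le R_k$ and the same predicted/uncapped/capped case split using parts (b) and (c) of \Cref{lem:g-props}.
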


%
\begin{proof}[Proof of \Cref{thm:main}]
By \Cref{lem:conditional-convexity,lem:penalties-valid}, the stage-wise convex programs are well-defined and compute solutions that are feasible with the matching decisions taken previously.

\Cref{Lemma:feasibility_consistency1,Lemma:feasibility_consistency2} show that our penalty functions satisfy the certificates of \Cref{lem:conditional-cons}; hence \Cref{lem:conditional-cons} implies that the algorithm is $C_k(R)$-consistent.
\Cref{Lemma:feasibility_robustness1,Lemma:feasibility_robustness2} show that our penalty functions satisfy the certificates of \Cref{lem:conditional-rob}, which implies that the algorithm is $R$-robust.
\end{proof}


\section{Online Fractional Bipartite Matching with Predictions}\label{sec:online}

We now describe our algorithm for the classical online setting, where demands arrive one by one and the number of demands is not known in advance. The goal is to prove \Cref{thm:online}. 

An online input with one demand vertex at a time can of course be viewed as a $k$-stage instance with singleton stages, where $k$ is the total number of arrivals. This observation already gives a useful corollary of \Cref{thm:main} assuming that the number of demands is known in advance. It does not, however, immediately give the online result in \Cref{thm:online}. The reason is that our $k$-stage algorithm of \Cref{sec:algorithm} uses the value of $k$ inside the penalties through the safety curve
$\auxfun_{k-s}(z)=(1-\frac{1-z}{k-s})^{k-s}$,
and it also treats the last stage differently by setting the penalty to zero. 
The purpose of this section is to identify penalty 
functions that are independent of the number of remaining stages
and to explain why the same proof strategy still works.

The key observation is that the $k$-stage safety curves have the stationary limit
\[
\auxfun_m(z)= \left(1-\frac{1-z}{m} \right)^m
    \xrightarrow{m \to \infty}
    \exp(z-1) =: \auxfun_\infty(z).
\]
In the $k$-stage analysis, $\auxfun_m(z)$ is the safety target for the
normalized quantity $1-R+P_j^s$ after cumulative load $z$ when there are $m$
future stages to protect against. In the online model there is no known final
stage where the algorithm can stop hedging; we thus replace it with
$\auxfun_\infty(z)=\exp(z-1)$.

We use the same notation as for the $k$-stage setting.
Demand vertices arrive one by one, equivalently $|D_t|=1$.  When $D_t$ arrives, the algorithm learns the incident edges and the part of the integral predicted matching incident to $D_t$.  Let $A_t\subseteq S$ be the set of supply vertices used by the prediction at time $t$; since the prediction is a feasible matching, the sets $A_t$ are pairwise disjoint. 
Let $T$ denote the length of the online input sequence, which is unknown to the algorithm.

Fix $R\in[0,1-\nicefrac1e]$ and define $C_\infty(R):=1+R+\ln(1-R)$.
At time $t$, the algorithm solves the same convex program as in \Cref{sec:algorithm}, with $t$ in place of the stage index. For 
each supply vertex $j$, we again use the definitions $X_j^{t-1}:=\sum_{d<t}x_j^d$ and $P_j^{t-1}:=\sum_{d<t}x_j^d \cdot f_j^d(x_j^d)$.
On the feasible interval $z\in[0,1-X_j^{t-1}]$, define the non-predicted penalty by
\[
    \frob_j^t(z):=
    \min\left\{1,\frac{1-R+P_j^{t-1}}{1-z}\right\}
\]
for $z < 1$ and $\frob_j^t(1):=1$,
and define the predicted penalty by
\[
    \fadv_j^t(z):=
    \max\left\{0,\frac{\exp(X_j^{t-1}+z-1)-(1-R+P_j^{t-1})}{z}\right\}
\]
for $z > 0$ and $\fadv_j^t(0)$ as the limit when $z \to 0$. 
As in the $k$-stage setting, we use the penalties
\[
    f_j^t(z):=
    \begin{cases}
        \fadv_j^t(z), & \text{if } j\in A_t,\\
        \frob_j^t(z), & \text{otherwise.}
    \end{cases}
\]
We show that the penalties are non-decreasing and take values in $[0,1]$. Hence $(\CP_t)$ is well-defined.

The interpretation of these penalties is the same as in the $k$-stage
algorithm. The exponential curve is the common safety target in the analysis.
The non-predicted penalty does not use it explicitly; instead, whenever the cap
at $1$ is inactive, it makes the local robustness certificate tight, and the
exponential inequalities below show that this update preserves the target. The
predicted penalty is more aggressive and uses the target directly: it is obtained
by solving the equality version of
$1-R+P_j^t \ge \auxfun_\infty(X_j^t)=\exp(X_j^t-1)$.
Thus a predicted vertex is made attractive to the current arrival, but only up to the point where enough protection remains for the unknown future.

The analysis is best viewed as the limit of the $k$-stage proof, not as a fundamentally new
dual-fitting argument. The dual variables are defined from the KKT multipliers
exactly as in \Cref{sec:dual_fitting}, and the same types of load certificates
imply robustness and consistency. What changes is only the scalar invariant
calculus. The two elementary update rules for the finite curve $\auxfun_m$
become the exponential inequalities
$\frac{\exp(z-1)}{1-y}\ge \exp(z+y-1)$
and
$\exp(z-1)+y\ge \exp(z+y-1)$.
These are the direct analogues of the two updates used in the $k$-stage
invariant: an uncapped non-predicted step updates the invariant
multiplicatively, while a capped or predicted step pays for load additively.

The consistency calculation has the same origin. In the $k$-stage proof,
the AM-GM step over $k$ quantities yields $C_k(R)=k(1-R)^{1/k}+R-(k-1)$, 
as in \Cref{Lemma:feasibility_consistency1}. 
In the online setting, this becomes its logarithmic form and gives the prefix certificate $1+R+\ln(1-R)$. 
Thus, the curve in \Cref{thm:online} is the limiting load inequality of the multi-stage analysis. 
The direct proof applies to every finite stopping time $T$, although the algorithm never knows $T$ in advance; see \Cref{app:online}. 

\section{Fractional AdWords with Fractional Predictions}\label{sec:AdWords}

Our framework also extends to the $k$-stage fractional AdWords problem with
fractional predictions.  
We now sketch the main ideas; 
the full proof with all details is
deferred to \Cref{app:AdWords}.

The main difficulty of fractional predictions is that they no longer identify a disjoint set
of predicted supply vertices.  
A single advertiser $j \in S$ can be fractionally
recommended in many stages, and the amount recommended in a stage depends
on bids rather than on a unit-capacity matching structure. 
We handle this
by replacing every advertiser $j \in S$ by stage-indexed \emph{virtual advertisers}.
For every non-final stage $s\in[k-1]$, create a copy $(j,s)$ with budget
$\widehat B_{(j,s)}:=\hat x_j^s B_j$, where
\(
    \hat x_j^s
    :=
    \frac{1}{B_j}
    \sum_{i\in D_s} b_{ij} \hx_{ij}^s
\)
is the fraction of $j$'s budget used by the prediction in stage $s$.
The final copy is a \emph{residual-capacity} copy, not a copy whose budget is
required to equal the spend of the final-stage residual prediction. Define
\(
    \hat\rho_j := 1-\sum_{s=1}^{k-1}\hat x_j^s
\)
and set $\widehat B_{(j,k)}:=\hat\rho_j B_j$.
Thus, $(j,k)$ contains all budget of advertiser $j$ that remains after the
predicted prefix. The completion $\hx^k$, which is chosen only for the
analysis as an optimal allocation in the final residual instance, is lifted to
this residual-capacity copy; it is feasible there but need not fill it.
If $\widehat B_h = 0$ we simply omit the virtual advertiser $h$.
Let $\mathcal H$ denote the set of all virtual advertisers.
Every original edge $(i,j) \in E$ is copied to every virtual advertiser $(j,s)$, $s \in [k]$, with the same bid $b_{i,(j,s)} := b_{ij}$.
For $s<k$, the predicted allocation in stage $s$ is lifted to the copy $(j,s)$;
the final residual allocation $\hx^k$ is lifted to $(j,k)$.
Thus, the lifted prediction exactly fills each non-final predicted copy, while
it may only partially use the residual-capacity copy.  The virtual instance
restores the structural feature used by the matching analysis in the following
sense: every virtual advertiser has a unique designated prediction stage, and
every edge of the lifted prediction in stage $s$ uses only advertisers
designated for stage $s$.

The same construction also covers vertex-weighted fractional bipartite matching
with fractional predictions without introducing any additional budget structure.
Indeed, in the special case $b_{ij}=B_j=w_j$, the virtual budget
$\widehat B_{(j,s)}=\hx_j^s w_j$ is just the value-scaled form of a virtual copy
that contains $\hx_j^s$ units of the unit capacity of supply vertex $j$. When
virtual allocations are aggregated over all copies of $j$, the resulting
allocation uses at most one unit of the original supply capacity, and the value
is unchanged because every copied edge incident to $j$ has value $w_j$.


Any allocation in the virtual instance aggregates over the copies of each advertiser to a feasible allocation in the original AdWords instance with the same value, and conversely any feasible allocation on an original advertiser can be split among its virtual copies subject to their budgets. 
Thus the offline optimum is preserved. 
The prediction benchmark is preserved as well: after the lifted non-final prediction is followed, all non-final copies $(j,s)$, $s<k$, are saturated, leaving for each advertiser $j$ only the residual-capacity copy $(j,k)$, whose budget equals the remaining budget of $j$ in the original residual instance. 
Hence the final residual LP is the same in the virtual and original instances, and therefore $\OPT_{\mathrm{virt}}=\OPT$ and $\advice_{\mathrm{virt}}=\advice$.

We then run our penalty algorithm from \Cref{sec:algorithm} on the virtual instance, using
normalized load
\(
    x_h^s
    :=
    \frac{1}{\widehat B_h}
    \sum_{i\in D_s} b_{ih}x_{ih}^s
\)
for each virtual advertiser $h \in \mathcal H$. 
Formally, in each stage $s \in [k]$, we solve the following convex problem, where $E_s^{\mathrm{virt}}$ denotes the set of all virtual edges in stage $s$:
\begin{alignat}{3}
    (\CP_s^{\mathrm{Ad}}) \quad\max \quad
    &\sum_{h\in\mathcal H}\hB_h\left(x_h^s-\int_0^{x_h^s}f_h^s(t)\,dt\right) \notag \\
    \text{s.t.}\quad
    &\sum_{h:(i,h)\in E_s^{\mathrm{virt}}}x_{ih}^s\le 1 &&\qquad \forall i\in D_s \notag\\
    &x_h^s=\frac{1}{\hB_h}\sum_{i\in D_s:(i,h)\in E_s^{\mathrm{virt}}} b_{ih}x_{ih}^s &&\qquad \forall h\in\mathcal H \notag\\
    &x_h^s\le 1-X_h^{s-1} &&\qquad \forall h\in\mathcal H \notag\\
    &x_{ih}^s\ge 0 &&\qquad \forall (i,h)\in E_s^{\mathrm{virt}} . \notag
\end{alignat}
We use the same penalty functions $a_h^s$ and $r_h^s$ for all $h \in \mathcal H$ as defined in \Cref{sec:penalty-functions}. For each $h \in \mathcal H$ and non-final stage $s \in [k-1]$, we then set 
for all $z \in [0,1-X_h^{s-1}]$
\[
    f_h^s(z):=
    \begin{cases}
        a_h^s(z), & \text{if } h=(j,\tau) \text{ with } \tau=s,\\
        r_h^s(z), & \text{otherwise}
    \end{cases}
\]
and set $f^k_h := 0$ for the final stage.

The analysis is
the same load-based dual fitting as before, with the only change that the
AdWords dual constraint is scaled by bids,
that is, $\alpha_i+ b_{ih} \beta_h / \widehat B_h \ge b_{ih}$.
Equivalently, after substituting the normalized-load equality defining $x_h^s$,
the current-stage capacity constraint for $h$ is
$\sum_i (b_{ih}/\widehat B_h)x_{ih}^s\le 1-X_h^{s-1}$.
The KKT conditions of this reduced stage program have marginal value
$b_{ih}(1-f_h^s(x_h^s))$, exactly matching the scaled dual constraint.
Consequently, the robustness and consistency load inequalities from 
\Cref{lem:conditional-rob} and \Cref{lem:conditional-cons}
apply directly to every virtual advertiser. 
Using our analysis of the penalty functions in \Cref{sec:proofs-main-theorem},
this
gives $\ALG_{\mathrm{virt}}\ge R \cdot \OPT_{\mathrm{virt}}$
and $\ALG_{\mathrm{virt}}\ge C_k(R) \cdot \advice_{\mathrm{virt}}$
for the virtual instance. As we argued before, these bounds immediately translate 
to the original instance. In particular, the solution of the algorithm
on the virtual instance
can be mapped to a solution on the original instance.

A final issue is online implementability: the definition of the virtual instance
requires to know the predictions for each stage a priori. 
We can resolve this by a lazy construction.
Before stage $s$, all unrevealed future copies of advertiser $j$ are
represented by one \emph{residual virtual advertiser} of budget
$(1-\sum_{\tau<s}\hat x_j^\tau)B_j$.
When the stage-$s$ prediction is revealed, this residual advertiser is split
into the newly predicted copy $(j,s)$ and a new residual copy. 
Previously allocated mass is split proportionally. 
This aggregation is exact: if several future copies have
the same history and penalty, replacing them by one residual advertiser
does not change the current convex program. 
This follows from concavity of the objective function of $(\CP_s^{\mathrm{Ad}})$ and Jensen's inequality, and since any aggregate solution
can be expanded proportionally among the copies. Thus the virtual algorithm
is implementable when stages arrive online.
The same lazy splitting idea applies even if the number of stages is unknown to the algorithm; 
everything is analogous and we omit further details here.

\section{Conclusion and Open Questions}

We introduced a penalty-calibration framework for learning-augmented bipartite allocation that yields consistency--robustness tradeoffs in multi-stage, online, and AdWords allocation settings.

Several questions remain open. The most immediate one is to determine the optimal consistency–robustness tradeoff for $k\ge3$. The two-stage upper bound 
\cite{JinM22} applies to our setting, but still leaves a gap to our curve $C_k(R)$ for $k\ge3$. Closing this gap, either by proving a matching upper bound for $C_k(R)$ or by designing algorithms with a stronger tradeoff, is a major open problem.
In particular, showing an improved upper bound over \cite{JinM22}'s
tradeoff upper bound $C_2(R)$ for $k \ge 2$ would be an interesting result.

A separate modeling question concerns the form of predictions. Our model uses an output prediction: a globally feasible allocation that may, for example, be obtained by optimizing against a forecast of the full instance. The prediction is revealed stage by stage but is fixed independently of the algorithm's decisions. In particular, later recommendations are not recomputed against the algorithm's residual capacities. This is the direct multi-stage analogue of the two-stage output benchmark of Jin and Ma~\cite{JinM22}.
It remains open whether comparable guarantees are possible from more \emph{adaptive} prediction models, which may also require different benchmarks and types of guarantees.

Another natural direction is to extend the guarantees to integral online matching. 
Our algorithms are fractional marginal-value algorithms, and preserving their tradeoffs would require more than simply interpreting fractional allocation values as matching probabilities. 
General lossless online rounding of fractional online matching algorithms is impossible~\cite{DevanurJK13}. 
At the same time, lossless rounding can be possible for fractional algorithms with additional structure, as shown in~\cite{DBLP:conf/soda/BuchbinderNW23} for sound two-choice fractional algorithms via additional non-convex constraints. 
Our stage-wise convex-programming algorithms are not two-choice algorithms and their allocations do not obviously satisfy such roundability constraints. 
Whether they admit a lossless randomized implementation, or whether similar tradeoffs can be achieved by a different integral algorithm, remains open.



\section*{Acknowledgments}

This work was supported by the Deutsche Forschungsgemeinschaft (DFG, German Research Foundation) through project no. 547924951 and under Germany's Excellence Strategy EXC-3036 project no. 533607631 and EXC-2046/2.
Part of this work was done while AL and NM were participants in the Simons Institute program on Algorithmic
Foundations for Emerging Computing Technologies.

\printbibliography

\newpage 
\appendix
\crefalias{section}{appendix}
\crefalias{subsection}{appendix}
\crefalias{subsubsection}{appendix}
\section*{Appendices}

\section{Omitted Proofs from \Cref{sec:algorithm} (Algorithm)}
\label{app:algorithm}

\lemmaKKT*

\begin{proof}
    Let $s \in [k]$.
    There exist non-negative numbers $(\lambda^s_i)_{ i \in D_s}$, $(\theta^s_j)_{ j \in S}$, and $(\gamma^s_{ij})_{ (i,j) \in E_s}$
	that satisfy the following KKT conditions~\cite{BV2014}.
	For each $(i,j) \in E_s$, stationarity gives
	 \[
		 w_j(1-f^s_j(x^s_j)) = \lambda_i^s + \theta_j^s - \gamma_{ij}^s \ , 
	\]
	and complementary slackness gives
		\begin{align*}
			\lambda_i^s(1-x_i^s) &=0 \quad \forall i \in D_s \ , \\
			\theta_j^s(1- X_j^{s-1} - x_j^s) &= 0 \quad \forall j \in S \ , \text{ and} \\
			\gamma_{ij}^s x_{ij}^s &= 0 \quad \forall (i,j) \in E_s \ .
		\end{align*}

	We now show that $(\lambda_i^s)_{i \in D_s}$ satisfies the properties of the lemma.
	\begin{enumerate}[label=(\alph*),nosep]
		\item Let $(i,j) \in E_s$ be an edge with $x_{ij}^s > 0$. By complementary slackness, $\gamma_{ij}^s = 0$, so stationarity gives $w_j (1-f^s_j(x_j^s)) = \lambda_i^s + \theta_j^s \geq \lambda_i^s$.
		
		\item Consider $(i,j),(m,j) \in E_s$ with $x_{mj}^s > 0$. By complementary slackness for the used edge $(m,j)$, we have $\gamma_{mj}^s = 0$. Hence stationarity for $(m,j)$ gives
		\[
		    w_j(1-f_j^s(x_j^s)) = \lambda_m^s + \theta_j^s .
		\]
		For edge $(i,j)$, stationarity gives
		\[
		    w_j(1-f_j^s(x_j^s)) = \lambda_i^s + \theta_j^s - \gamma_{ij}^s \leq \lambda_i^s + \theta_j^s,
		\]
		because $\gamma_{ij}^s\geq 0$. Therefore, $\lambda_i^s \geq \lambda_m^s$.
		
		\item If $x_i^s < 1$ for some $i \in D_s$, then $\lambda_i^s = 0$ by complementary slackness.
		
		\item Consider $(i,j) \in E_s$ 
		with $X_j^s < 1$. Since $X_j^{s-1} + x_j^s = X_j^s < 1$, the capacity constraint $x_j^s \leq 1 - X_j^{s-1}$ is strictly inactive. By complementary slackness, 	we have $\theta^s_j = 0$. Thus $w_j (1-f^s_j(x_j^s)) = \lambda_i^s-\gamma_{ij}^s \leq \lambda_i^s$.
	\end{enumerate} 
\end{proof}

\section{Omitted Proofs from \Cref{sec:dual_fitting} (Dual Fitting Analysis)}

\subsection{Omitted Proofs from \Cref{sec:dual-objective} (Dual Objective Value)}
\label{app:dual_objective}

\lemmaDualObjective*

\begin{proof}
For every $j\in S$, the definition of $\beta_j$ gives
\[
    \beta_j
    +
    \sum_{s=1}^{k}\sum_{i\in D_s}\alpha_i x_{ij}^s
    =
    w_jX_j^k .
\]
Therefore,
\begin{align*}
    \sum_{j\in S}\beta_j+\sum_{i\in D}\alpha_i
    &=
    \sum_{j\in S}\beta_j + \sum_{s=1}^{k}\sum_{i\in D_s}\alpha_i x_i^s \\
    &=
    \sum_{j\in S}
    \biggl(
        \beta_j + \sum_{s=1}^{k}\sum_{i\in D_s}\alpha_i x_{ij}^s
    \biggr) 
    =
    \sum_{j\in S}w_jX_j^k
    =
    \ALG .
\end{align*}
The first equality uses \Cref{lem:kkt}(c):
if $x_i^s<1$, then $\alpha_i=\lambda_i^s=0$, while if $x_i^s=1$ then
$\alpha_i=\alpha_i x_i^s$.
\end{proof}

\subsection{Omitted Proofs from \Cref{sec:dual-feasibility} (Dual Feasibility)}
\label{app:dual_feasibility}

\lemmaDualConstraintLowerBound*

\begin{proof}
First, for every $j\in S$,
\begin{align*}
    \beta_j
    &=
    \sum_{d=1}^{k}
    \biggl(
        w_jx_j^d-\sum_{i\in D_d}\lambda_i^d x_{ij}^d
    \biggr) \\
    &\ge
    w_j\sum_{d=1}^{k}
    x_j^d f_j^d(x_j^d)
    =
    w_jP_j^k
    =
    w_jP_j^{k-1},
\end{align*}
where the inequality uses \Cref{lem:kkt}(a)
on every edge with positive allocation, and the last equality uses
$f_j^k = 0$.

Now consider an edge $(i,j)\in E_s$ with $s\in[k]$.

If $X_j^s<1$, then \Cref{lem:kkt}(d) gives
\[
    \alpha_i=\lambda_i^s
    \ge
    w_j(1-f_j^s(x_j^s)).
\]
Together with $\beta_j\ge w_jP_j^{k-1}$, this implies
\[
    \alpha_i+\beta_j
    \ge
    w_j\left(P_j^{k-1}+1-f_j^s(x_j^s)\right).
\]

Next suppose that $X_j^s=1$. Then $q_j\le s$.

If $q_j<s$, then $j$ was already saturated before stage $s$, so
$x_j^d=0$ for every $d>q_j$. Hence $P_j^{k-1}=P_j^{q_j}$, and
\[
    \alpha_i+\beta_j
    \ge
    \beta_j
    \ge
    w_jP_j^{k-1}
    =
    w_jP_j^{q_j}.
\]

It remains to consider the case $q_j=s$. Let $\theta_j^s$ be the KKT
multiplier for the supply constraint of $j$ in stage $s$ (see proof of \Cref{lem:kkt}). 
By
stationarity, for every edge $(r,j)\in E_s$,
\[
    \lambda_r^s
    \ge
    w_j(1-f_j^s(x_j^s))-\theta_j^s .
\]
Moreover, for every used edge $(m,j)$ with $x_{mj}^s>0$,
complementary slackness gives
\[
    \lambda_m^s
    =
    w_j(1-f_j^s(x_j^s))-\theta_j^s .
\]
Therefore, the stage-$s$ contribution to $\beta_j$ is
\[
    w_jx_j^s-\sum_{m\in D_s}\lambda_m^s x_{mj}^s
    =
    w_jx_j^s f_j^s(x_j^s)+\theta_j^s x_j^s .
\]
All earlier stages contribute at least $w_jP_j^{s-1}$, and all later
stage contributions are nonnegative. Thus
\begin{align*}
    \alpha_i+\beta_j
    &\ge
    \lambda_i^s
    +
    w_jP_j^{s-1}
    +
    w_jx_j^s f_j^s(x_j^s)
    +
    \theta_j^s x_j^s \\
    &\ge
    w_j(1-f_j^s(x_j^s))-\theta_j^s
    +
    w_jP_j^{s-1}
    +
    w_jx_j^s f_j^s(x_j^s)
    +
    \theta_j^s x_j^s \\
    &=
    w_j\left(
        1-f_j^s(x_j^s)
        +
        x_j^s f_j^s(x_j^s)
        +
        P_j^{s-1}
    \right)
    -
    \theta_j^s(1-x_j^s).
\end{align*}
Since $q_j=s$, some edge $(m,j)$ with $x_{mj}^s>0$ exists. Hence
\[
    0\le \lambda_m^s
    =
    w_j(1-f_j^s(x_j^s))-\theta_j^s,
\]
so
\[
    \theta_j^s\le w_j(1-f_j^s(x_j^s)).
\]
Substituting this bound yields
\begin{align*}
    \alpha_i+\beta_j
    &\ge
    w_j\left(
        1-f_j^s(x_j^s)
        +
        x_j^s f_j^s(x_j^s)
        +
        P_j^{s-1}
    \right)
    -
    w_j(1-f_j^s(x_j^s))(1-x_j^s) \\
    &=
    w_j\left(P_j^{s-1}+x_j^s\right).
\end{align*}

Finally, let $(i,j)\in E_k$. If $X_j^k<1$, the first case gives
\[
    \alpha_i+\beta_j
    \ge
    w_j(P_j^{k-1}+1)
    \ge
    w_j(P_j^{k-1}+1-X_j^{k-1}).
\]
If $X_j^k=1$ and $q_j<k$, then $X_j^{k-1}=1$, and the second case gives
\[
    \alpha_i+\beta_j
    \ge
    w_jP_j^{q_j}
    =
    w_jP_j^{k-1}
    =
    w_j(P_j^{k-1}+1-X_j^{k-1}).
\]
If $X_j^k=1$ and $q_j=k$, then $x_j^k=1-X_j^{k-1}$, and the third case gives
\[
    \alpha_i+\beta_j
    \ge
    w_j(P_j^{k-1}+x_j^k)
    =
    w_j(P_j^{k-1}+1-X_j^{k-1}).
\]
This proves the final-stage bound.
\end{proof}

\lemmaConditionalRobustness*

\begin{proof}
    We first show that if $\alpha_i + \beta_j \geq R w_j$ for all $(i,j) \in E$, then the algorithm is $R$-robust.
    If $R=0$, this is immediate. Hence assume $R>0$.
    Then scaling our dual solution down by a factor of $R$ gives a feasible solution to $(\DP)$.
    By \Cref{lem:dual_objective} and weak duality, we conclude
    \[
        \frac{1}{R} \alg 
        = 
        \sum_{i \in D} \frac{\alpha_i}{R} + \sum_{j \in S} \frac{\beta_j}{R}
        \geq
        \DP(G) \geq \LP(G) = \opt \ .
    \]
    Thus $\alg \geq R \cdot \opt$, so the algorithm is $R$-robust.

	It remains to prove the claimed approximate feasibility.
	Let $s \in [k-1]$ and $(i,j) \in E_s$. One of the following cases applies:
	\begin{itemize}
		\item If $X_j^s < 1$, then $\alpha_i + \beta_j \geq w_j (P_j^{k-1} + 1 - f_j^s(x_j^s)) \geq R w_j$ by \Cref{lem:intermed_feasibility} and \eqref{eq:k-rob-s}.
		\item If $q_j < s$, then note that $X_j^{k-1} = 1$ implies $P_{j}^{k-1} + 1 - X_j^{k-1} = P_j^{k-1}$. Thus,
		by \Cref{lem:intermed_feasibility} and \eqref{eq:k-rob-k},
        we have $\alpha_i + \beta_j \geq w_j P_j^{q_j} = w_j P_j^{k-1} = w_j \left( 1 - X_j^{k-1} + P_j^{k-1} \right) \geq w_j R$.
        \item Similarly, if $q_j = s$ then
        $\alpha_i + \beta_j \geq w_j (P_j^{s-1} + x_j^s) \geq w_j P_j^s = w_j P_j^{q_j} \geq R w_j$ by \Cref{lem:intermed_feasibility} and \eqref{eq:k-rob-k}, because $f_j^s(x_j^s) \leq 1$. 
	\end{itemize}

	Finally, for any \((i,j)\in E_k\), \Cref{lem:intermed_feasibility} gives
\[
    \alpha_i+\beta_j
    \ge
    w_j\left(P_j^{k-1}+1-X_j^{k-1}\right)
    \ge
    Rw_j,
\]
where the last inequality uses \eqref{eq:k-rob-k}.	
\end{proof}

\lemmaConditionalConsistency*

\begin{proof}
	Let $\widehat M= \widehat M_1\cup\cdots\cup \widehat M_k$ be the predicted matching (as a set of edges). We show that
	\begin{equation}
	    \alpha_i+\beta_j\ge C_k(R)w_j
	    \tag{$*$}\label{eq:predicted-edge-dual}
	\end{equation}
	for all $(i,j)\in \widehat M$.
	Since $\widehat M$ is a matching and all dual variables are nonnegative,
	\begin{align*}
	    \ALG
	    =
	    \sum_{i\in D}\alpha_i+\sum_{j\in S}\beta_j 
	    \ge
	    \sum_{(i,j)\in \widehat M}(\alpha_i+\beta_j) 
	    \overset{\eqref{eq:predicted-edge-dual}}{\ge}
	    C_k(R)\sum_{(i,j)\in \widehat M}w_j 
	    =
	    C_k(R)\cdot \advice \ ,
	\end{align*}
	as desired.
	It remains to prove \eqref{eq:predicted-edge-dual} for every predicted edge.

	To this end, consider an edge $(i,j) \in \widehat M_s$ for some stage $s \in [k-1]$. Since $\widehat M_s$ is the predicted matching for stage $s$, we have $j\in A_s$. One of the following cases applies:
	\begin{itemize}
		\item If $X_j^s < 1$, then
        \[
            \alpha_i + \beta_j \geq w_j(P_j^{k-1} + 1 - f_j^s(x_j^s)) \geq w_j \cdot C_k(R)
        \]
        using \Cref{lem:intermed_feasibility} and \eqref{eq:k-cons-s}.
		\item If $q_j = s$, then $x_j^s = 1 - X_j^{s - 1}$, so
		\[
		\alpha_i + \beta_j \geq w_j (P_j^{s-1} + x_j^s) = w_j(P_j^{s-1} + 1 - X_j^{s-1}) \geq w_j \cdot C_k(R) 
		\]
		using \Cref{lem:intermed_feasibility} and \eqref{eq:k-cons-prefix} with $p=s-1$.
		\item If $q_j < s$, then 
		\[
		\alpha_i + \beta_j \geq w_j P_j^{q_j} = w_j (P_j^{q_j} + 1 - X_j^{q_j}) \geq w_j \cdot C_k(R) 
		\]
		using \Cref{lem:intermed_feasibility} and \eqref{eq:k-cons-prefix} with $p=q_j$.
	\end{itemize}

	Finally, consider a predicted final-stage edge $(i,j)\in \widehat M_k$. Then
	$j\in A_k$. By \Cref{lem:intermed_feasibility},
	\[
	    \alpha_i+\beta_j
	    \ge
	    w_j\left(P_j^{k-1}+1-X_j^{k-1}\right).
	\]
	Using \eqref{eq:k-cons-prefix} with $s=k$ and $p=k-1$, we get
	\[
	    P_j^{k-1}+1-X_j^{k-1}\ge C_k(R).
	\]
	Therefore
	\[
	    \alpha_i+\beta_j\ge C_k(R)w_j.
	\]
\end{proof}

\subsection{Omitted Proofs from 
\Cref{sec:penalty-functions} (Properties of Penalty Functions)}
\label{app:properties-penalties}

\subsubsection{Properties of the Safety Curve}\label{app:reserve-props}

\lemmaLambdaProps*

\begin{proof}
For part (a),
the upper bound $\auxfun_s(z) \leq 1$ follows directly from the definition of $\auxfun_s(z)$. For the lower bound,
we have
\[
\auxfun_s(z) = \left( \frac{s-1+z}{s} \right)^s = \left( \frac{1}{s} \left(z + \sum_{i=1}^{s-1} 1 \right) \right)^s \geq \left( \left(z \cdot \prod_{i=1}^{s-1} 1 \right)^{1/s} \right)^s = z
\]
using \Cref{am-gm}.

For part (b), note that
\begin{align*}
	(1-y) \cdot \auxfun_{s-1}(z+y)
	&=
  (1-y) \cdot \prod_{i=1}^{s-1} \frac{(s-1) - 1 + (z+y)}{s-1} \\
  &\leq \left( \frac{1}{s} \left(1-y + \sum_{i=1}^{s-1} \frac{s-2+z+y}{s-1} \right) \right)^s \\
  &= \left( \frac{s-1+z}{s} \right)^s \\
  &= \auxfun_s(z)
\end{align*}
using \Cref{am-gm}.

For part (c), define
\[
L(y) := \auxfun_s(z)+y-\auxfun_{s-1}(z+y).
\]
Then, it holds for the derivative that
\[
L'(y) = 1-\left(\frac{s-2+z+y}{s-1}\right)^{s-2}\ge 0,
\]
because $z+y \leq 1$. Thus $L$ is non-decreasing in $y$ (and $L'(1-z)=0$). Hence
\[
L(y)\ge L(0)=\auxfun_s(z)-\auxfun_{s-1}(z)\ge 0,
\]
using the inequality in (b) with $y=0$.
\end{proof}

\subsubsection{Validity of the Penalty Functions}
\label{app:validity-penalties}

In this section we prove that the penalty functions used by the algorithm satisfy the assumptions of \Cref{lem:conditional-convexity}. Namely, we show that for every stage $s\in[k]$ and every supply vertex $j\in S$, the function $f_j^s$ is continuous, non-decreasing, and takes values in $[0,1]$ on its domain $[0,1-X_j^{s-1}]$.

\LemmaPenaltiesValid*

The proof proceeds in three steps. First, we establish an invariant for the accumulated penalty load, namely $1-R+P_j^s\ge g_{k-s}(X_j^s)$. This invariant implies monotonicity of the predicted penalty branch $a_j^s$. The non-predicted branch $r_j^s$ is monotone directly from its definition. Finally, the same invariant and the basic properties of $g_m$ imply that both branches take values in $[0,1]$.


For the following proof, we set $P_j^0 = 0$ and $X_j^0 = 0$ for all $j \in S$.

\begin{lemma}\label{lem:g-invariant}
	For all $s \in \{0,\ldots,k-1\}$ and $j \in S$, it holds that $1 - R + P_j^s \geq \auxfun_{k-s}(X^s_j)$. 
\end{lemma}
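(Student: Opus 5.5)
We prove the invariant by induction on $s$, for a fixed supply vertex $j\in S$. The induction uses only the defining formulas of $a_j^s$ and $r_j^s$ evaluated at the realized load $x_j^s$. It uses neither monotonicity nor range properties of the penalties, so there is no circularity with \Cref{lem:penalties-valid}.

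\textbf{Base case and setup.} For $s=0$ we have $P_j^0=X_j^0=0$, so the claim reads $1-R\ge \auxfun_k(0)=(1-\nicefrac1k)^k$. This is equivalent to $R\le R_k$, which holds by assumption. For the step, fix $s\in[k-1]$ and assume $1-R+P_j^{s-1}\ge \auxfun_{k-s+1}(X_j^{s-1})$. Write $m=k-s\ge 1$, $X=X_j^{s-1}$ and $y=x_j^s\in[0,1-X]$, so that $X_j^s=X+y$ and $P_j^s=P_j^{s-1}+y\,f_j^s(y)$. The goal is $1-R+P_j^s\ge \auxfun_m(X+y)$. Since $m+1\ge 2$, \Cref{lem:g-props}(b) and (c) apply with index $m+1$. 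The case $m=0$ never occurs, since $s\le k-1$.

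\textbf{Case 1: $j\notin A_s$ and $r_j^s(y)<1$ (uncapped).} Then $y<1$ and $r_j^s(y)=(1-R+P_j^{s-1})/(1-y)$. Hence
\[
1-R+P_j^s=(1-R+P_j^{s-1})\Bigl(1+\tfrac{y}{1-y}\Bigr)=\frac{1-R+P_j^{s-1}}{1-y}\ge \frac{\auxfun_{m+1}(X)}{1-y}\ge \auxfun_m(X+y),
\]
where the last step is \Cref{lem:g-props}(b).

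\textbf{Case 2: $j\notin A_s$ and $r_j^s(y)=1$ (capped).} Then $1-R+P_j^s=1-R+P_j^{s-1}+y\ge \auxfun_{m+1}(X)+y\ge \auxfun_m(X+y)$ by \Cref{lem:g-props}(c).

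\textbf{Case 3: $j\in A_s$.} If $y>0$, then $y\,a_j^s(y)\ge \auxfun_m(X+y)-(1-R+P_j^{s-1})$ directly from the maximum in the definition of $a_j^s$. Adding $1-R+P_j^{s-1}$ to both sides gives $1-R+P_j^s\ge \auxfun_m(X+y)$. This is exactly why $a_j^s$ was chosen: it is the equality version of the invariant. If $y=0$, then $P_j^s=P_j^{s-1}$ and $X_j^s=X$. In that case we use $\auxfun_{m+1}(X)\ge \auxfun_m(X)$, which is \Cref{lem:g-props}(b) with $y=0$.

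\textbf{Expected difficulty.} I expect the only delicate points to be the boundary cases. These are $y=0$ for the predicted branch, where $a_j^s(0)$ is defined by continuous extension, and $y=1$ for the non-predicted branch, which forces $X=0$ and the capped value $r_j^s(1)=1$ and therefore falls into Case 2. The inequality chain in each case is a single application of \Cref{lem:g-props}. The cases also match the roles described after that lemma: part (b) handles the multiplicative uncapped update, and part (c) handles the additive capped update.
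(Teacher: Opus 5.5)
Your proposal is correct and follows the paper's proof essentially step for step: the same induction on $s$, the same base case $1-R\ge \auxfun_k(0)$, \Cref{lem:g-props}(b) for the uncapped non-predicted update, (c) for the capped one, and the defining formula of $a_j^s$ for the predicted branch. The only differences are cosmetic: you merge the paper's two subcases $a_j^s(x_j^s)=0$ and $a_j^s(x_j^s)>0$ into the single inequality $y\,a_j^s(y)\ge \auxfun_m(X+y)-(1-R+P_j^{s-1})$, and for $y=0$ you invoke part (b) rather than part (c); both give $\auxfun_{m+1}(X)\ge \auxfun_m(X)$.
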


\begin{proof}
	We show the lemma via induction over $s$. If $s=0$, we have that $1-R \geq 1 - R_k = (1-1/k)^k = \auxfun_{k-0}(X_j^0)$.

	Now let $s \geq 1$ and suppose that $1 - R + P_j^{s-1} \geq \auxfun_{k-s+1}(X_j^{s-1})$ holds.
	We distinguish between two main cases:

	\medskip
	\noindent
	\textbf{Case 1:} If $j \notin A_s$, then $f_j^s = r_j^s$. If $r_j^s(x_j^s) = 1$, then
	\[
		1-R+P_j^s
		=
		1 - R + P_j^{s-1} + x_j^s r_j^s(x_j^s) 
		=
		1 - R + P_j^{s-1} + x_j^s 
		\geq
		\auxfun_{k-s+1}(X_j^{s-1}) + x_j^s
		\geq 
			\auxfun_{k-s}(X_j^{s})  \ ,
	\] 
	where the first inequality uses the induction hypothesis and the second uses \Cref{lem:g-props}(c) with
	$m=k-s+1$, $z=X_j^{s-1}$, and $y=x_j^s$. Here $m\in\{2,\ldots,k\}$ because $s\in[k-1]$, and feasibility gives $y\in[0,1-z]$.
	If $r_j^s(x_j^s) < 1$, then
	\begin{align*}
		1-R+P_j^s
		&=
		1 - R + P_j^{s-1} + x_j^s r_j^s(x_j^s) \\
		&=
		1 - R + P_j^{s-1} + x_j^s \frac{1 - R + P_j^{s-1}}{1-x_j^s} \\
		&=
		\frac{1 - R + P_j^{s-1}}{1-x_j^s} \\
		&\geq
		\frac{\auxfun_{k-s+1}(X_j^{s-1})}{1-x_j^s}
		\geq \auxfun_{k-s}(X_j^{s}) \ ,
\end{align*}
	where the first inequality uses the induction hypothesis. For the second, apply \Cref{lem:g-props}(b) with
	$m=k-s+1$, $z=X_j^{s-1}$, and $y=x_j^s$. Feasibility gives $y\in[0,1-z]$, while $r_j^s(x_j^s)<1$ gives $y<1$, so dividing the resulting inequality by $1-y$ is valid.

	\medskip
	\noindent
	\textbf{Case 2:} If $j \in A_s$, so $f_j^s = a_j^s$.
	If $x_j^s = 0$, then
	\[
	1 - R + P_j^{s-1} + x_j^s a_j^s(x_j^s) 
	= 
	1 - R + P_j^{s-1}
	\geq
	\auxfun_{k-s+1}(X_j^{s-1}) 
	\geq
	\auxfun_{k-s}(X_j^{s-1}) \ ,
	\]
	where the first inequality uses the induction hypothesis and the second inequality uses \Cref{lem:g-props}(c) with $y=0$.
	If $x_j^s > 0$ and $a_j^s(x_j^s) = 0$, then by definition of $a_j^s$, it must hold that $\auxfun_{k-s}(X_j^{s-1} + x_j^s) - (1 - R + P_j^{s-1}) \leq 0$, so
	\[
	1 - R + P_j^{s-1} + x_j^s a_j^s(x_j^s) 
	=
	1 - R + P_j^{s-1} 
	\geq
	\auxfun_{k-s}(X_j^{s}) \ .
	\]
	Finally, if $x_j^s > 0$ and $a_j^s(x_j^s) > 0$, then by the definition of $a_j^s$, we have
	\[
	1 - R + P_j^{s-1} + x_j^s a_j^s(x_j^s)
	=
	1 - R + P_j^{s-1} + \auxfun_{k-s}(X_j^{s-1} + x_j^s) - (1 - R + P_j^{s-1})
	= \auxfun_{k-s}(X_j^{s}) \ .
	\]

	This completes the proof of the invariant.
\end{proof}

\begin{lemma}\label{lem:cons-pen-monotone}
    The function $a_j^s$ is non-decreasing on $[0,1-X_j^{s-1}]$ for all $s \in [k-1]$ and $j \in A_s$.
\end{lemma}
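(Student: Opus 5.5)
Fix $s\in[k-1]$ and $j\in A_s$, and write $m=k-s\ge 1$, $X=X_j^{s-1}$ and $c=1-R+P_j^{s-1}$. On $(0,1-X]$ we have $a_j^s(z)=\max\{0,h(z)\}$, where
\[
h(z):=\frac{\auxfun_m(X+z)-c}{z}.
\]
The maximum of two non-decreasing functions is non-decreasing, so it suffices to show that $h$ is non-decreasing wherever $h>0$. To make this precise, let $z^\ast$ be the largest zero of the numerator $N(z):=\auxfun_m(X+z)-c$ in $[0,1-X]$. Since $\auxfun_m$ is increasing, $N$ is increasing. Hence $N\le 0$ on $[0,z^\ast]$, where $a_j^s\equiv 0$, and $N>0$ on $(z^\ast,1-X]$. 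If $X=1$ the domain is the single point $\{0\}$ and there is nothing to prove.

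The key observation is that $h(z)$ is the slope of the secant of $N$ from the point $(0,N(0))$ to $(z,N(z))$. Moreover, $N(0)=\auxfun_m(X)-c<0$ when $X<1$. This follows from \Cref{lem:g-invariant}, which gives $c\ge \auxfun_{m+1}(X)$, together with the strict version of \Cref{lem:g-props}(b) at $y=0$, which gives $\auxfun_{m+1}(X)>\auxfun_m(X)$.

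Differentiating, $h'(z)=\bigl(zN'(z)-N(z)\bigr)/z^2$, so it suffices to show $zN'(z)\ge N(z)$ for $z>z^\ast$. Write $\phi(z):=zN'(z)-N(z)$. Then $\phi(0)=-N(0)>0$, and $\phi'(z)=zN''(z)=z\,\auxfun_m''(X+z)\ge 0$, because $\auxfun_m$ is a power of an affine function with nonnegative base and exponent $m\ge 1$, hence convex on $[0,1]$. So $\phi$ is non-decreasing, and $\phi(z)\ge\phi(0)>0$ for all $z$. This gives $h'>0$ on the whole of $(0,1-X]$, and in particular on the region where $a_j^s=h$. (For $m=1$, $\auxfun_1(z)=z$ is linear and the same argument goes through with $\phi\equiv -N(0)$.)

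Finally, $a_j^s$ is continuous. Near $0$ it is identically zero, as noted after its definition, and at $z^\ast$ both branches equal $0$. So a continuous function that is piecewise non-decreasing on $[0,z^\ast]$ and on $[z^\ast,1-X]$ is non-decreasing on the whole domain.

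The only substantive step is establishing $N(0)\le 0$, i.e.\ $c\ge \auxfun_m(X)$, which is exactly what \Cref{lem:g-invariant} supplies. Without it the secant argument fails, since $h$ could decrease from $+\infty$ near $0$. The convexity of $\auxfun_m$ is routine.
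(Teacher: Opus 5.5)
Your proof is correct and matches the paper's argument: the same function $h$, the same derivative computation, and the same auxiliary function (your $\phi$ equals $1-R+P_j^{s-1}$ minus the paper's $\phi$). Like the paper, you use convexity of $g_{k-s}$ and get $1-R+P_j^{s-1}\ge g_{k-s}(X_j^{s-1})$ from \Cref{lem:g-invariant} together with \Cref{lem:g-props}(b). The strict inequality $N(0)<0$, the point $z^\ast$ (which need not exist if $N<0$ throughout), and the continuity paragraph are unnecessary: $h'\ge 0$ on all of $(0,1-X_j^{s-1}]$ together with $a_j^s(0)=0$ already suffices.
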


\begin{proof}
	Fix $s \in [k-1]$ and $j \in A_s$. 
	By \Cref{lem:g-invariant} and \Cref{lem:g-props}(b) we have
	\begin{equation}
	    1 - R + P_j^{s-1} 
		\geq \auxfun_{k-s+1}(X_j^{s-1})
		\geq \auxfun_{k-s}(X_j^{s-1}) . \label{eq:monotonicity-1}
	\end{equation}
    For $z>0$, let
	\[
	    h(z) := \frac{\auxfun_{k-s}(z + X_j^{s-1}) + R - 1 - P_j^{s-1}}{z}.
	\]
	Computing the derivative gives
	\[
	    h'(z)=\frac{1 - R + P_j^{s-1} - \left(\auxfun_{k-s}(X_j^{s-1}+z)-z\auxfun'_{k-s}(X_j^{s-1}+z)\right)}{z^2}.
	\]
	We now focus on
    \[
        \phi(z) := \auxfun_{k-s}(X^{s-1}_j + z) - z \auxfun_{k-s}'(X^{s-1}_j + z).
    \]
    Since $\auxfun_{k-s}$ is convex on $[0,1]$, we have $\auxfun''_{k-s}(y) \geq 0$ for all $y \in [0,1]$. Thus,
	\[
	    \phi'(z) = \auxfun_{k-s}'(X_j^{s-1}  + z) - \auxfun_{k-s}'(X_j^{s-1}  + z) - z \auxfun_{k-s}''(X_j^{s-1}  + z) \leq 0 .
	\]
	This means that $\phi$ is non-increasing, and thus,
	\[
	\phi(z) \leq \phi(0) = \auxfun_{k-s}(X_j^{s-1}) \leq 1 - R + P_j^{s-1}
	\]
	using \eqref{eq:monotonicity-1}. This implies that $h'(z) \geq 0$, so $h$ is non-decreasing on $(0,1-X_j^{s-1}]$. 
	Since $a_j^s(z)=\max\{0,h(z)\}$ for $z>0$, the function $a_j^s$ is non-decreasing on $(0,1-X_j^{s-1}]$. Finally, $a_j^s(0)=0\leq a_j^s(z)$ for every $z>0$, so $a_j^s$ is non-decreasing on the entire feasible interval.
\end{proof}

\begin{lemma}\label{lem:rob-pen-monotone}
	The function $r_j^s$ is non-decreasing on $[0,1-X_j^{s-1}]$ for all $s \in [k-1]$ and $j \in S$.
\end{lemma}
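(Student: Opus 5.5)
The proof will work directly from the definition of $r_j^s$. Fix $s\in[k-1]$ and $j\in S$, and abbreviate $c:=1-R+P_j^{s-1}$. The history $P_j^{s-1}$ is fixed before stage $s$ and is nonnegative, and $R\le R_k<1$, so $c>0$. On $[0,1)\cap[0,1-X_j^{s-1}]$ we have $r_j^s(z)=\min\{1,c/(1-z)\}$.

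The map $z\mapsto c/(1-z)$ is strictly increasing on $[0,1)$, since $1-z$ is positive and decreasing there. The constant function $1$ is trivially non-decreasing. The pointwise minimum of two non-decreasing functions is non-decreasing, so $r_j^s$ is non-decreasing on $[0,1)\cap[0,1-X_j^{s-1}]$.

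The only remaining point is $z=1$. It lies in the domain only when $X_j^{s-1}=0$, and there $r_j^s(1)=1$ by the continuous extension. Every value $r_j^s(z)$ is at most $1$ because of the cap, so $r_j^s(z)\le r_j^s(1)$ for all $z$ in the domain. This gives monotonicity on the whole interval $[0,1-X_j^{s-1}]$. The extension is indeed continuous: $c/(1-z)\to\infty$ as $z\to1^-$, so the minimum equals $1$ on a left neighbourhood of $1$.

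There is no real obstacle here. The only care needed is at the endpoint $z=1$ and in observing that $c\ge0$, so the fraction has the correct sign and is monotone. Positivity of $c$ is not even necessary: $c\ge0$ suffices, and it follows from $R\le 1$ and $P_j^{s-1}\ge0$.
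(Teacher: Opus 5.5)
Your proof is correct and matches the paper's argument: $1-R+P_j^{s-1}\ge 0$ makes $z\mapsto (1-R+P_j^{s-1})/(1-z)$ non-decreasing on $[0,1)$, the minimum with $1$ preserves this, and $r_j^s(1)=1$ is compatible with monotonicity at the endpoint. The only difference is the source of $1-R+P_j^{s-1}\ge 0$. The paper takes it from the invariant of \Cref{lem:g-invariant}, namely $1-R+P_j^{s-1}\ge g_{k-s+1}(X_j^{s-1})\ge 0$. You instead use $P_j^{s-1}\ge 0$ directly, which is fine but tacitly requires an easy induction showing that every earlier penalty $f_j^d(x_j^d)$, $d<s$, is nonnegative.
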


\begin{proof}
    By \Cref{lem:g-invariant}, it holds that
    $1-R+P_j^{s-1}\geq \auxfun_{k-s+1}(X_j^{s-1})\geq 0$.
    Therefore the function
    \[
        z\mapsto \frac{1-R+P_j^{s-1}}{1-z}
    \]
    is non-decreasing on $[0,1)$. Taking the minimum with $1$ preserves monotonicity, and the definition $r_j^s(1)=1$ is consistent with monotonicity at the endpoint. Hence $r_j^s$ is non-decreasing on the feasible interval.
\end{proof}

\begin{lemma}\label{lem:penalty-in-0-1}
	For all $s \in [k-1]$ and $j \in S$, it holds that $f_j^s(z) \in [0,1]$ for all $z \in [0, 1 - X_j^{s-1}]$.
\end{lemma}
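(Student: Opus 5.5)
The plan is to treat the two branches of $f_j^s$ separately, fixing $s\in[k-1]$ and $j\in S$, and to use nonnegativity of $1-R+P_j^{s-1}$ together with the invariant of \Cref{lem:g-invariant}.

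\emph{Non-predicted branch.} For $j\notin A_s$ we have $f_j^s=r_j^s$. By \Cref{lem:g-invariant} and \Cref{lem:g-props}(a), $1-R+P_j^{s-1}\ge \auxfun_{k-s+1}(X_j^{s-1})\ge 0$. Hence, for $z<1$, the quotient $(1-R+P_j^{s-1})/(1-z)$ is nonnegative. Taking the minimum with $1$ therefore gives $r_j^s(z)\in[0,1]$, and the endpoint value $r_j^s(1)=1$ also lies in $[0,1]$.

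\emph{Predicted branch, lower bound and trivial cases.} For $j\in A_s$ we have $f_j^s=a_j^s$. By definition $a_j^s=\max\{0,\cdot\}$, so $a_j^s\ge 0$ on its whole domain, and the continuous extension gives $a_j^s(0)=0$. If $X_j^{s-1}=1$ the domain is $\{0\}$ and there is nothing more to prove. So fix $z\in(0,1-X_j^{s-1}]$. If the numerator $\auxfun_{k-s}(z+X_j^{s-1})-(1-R+P_j^{s-1})$ is nonpositive, then $a_j^s(z)=0$ and we are done.

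\emph{Predicted branch, upper bound.} Otherwise it remains to show $\auxfun_{k-s}(X_j^{s-1}+z)-(1-R+P_j^{s-1})\le z$. Apply the invariant $1-R+P_j^{s-1}\ge \auxfun_{k-s+1}(X_j^{s-1})$ and then \Cref{lem:g-props}(c) with $m=k-s+1\ge 2$, base point $X_j^{s-1}$, and $y=z\in[0,1-X_j^{s-1}]$. This yields
\[
\auxfun_{k-s+1}(X_j^{s-1})+z\ge \auxfun_{k-s}(X_j^{s-1}+z),
\]
which is exactly the needed bound. Dividing by $z>0$ gives $a_j^s(z)\le 1$.

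One technicality deserves care: \Cref{lem:g-props}(c) is stated for $s\in\{2,\dots,k\}$, so we need $k-s+1\in\{2,\ldots,k\}$, which holds since $s\in[k-1]$. When $s=k-1$, the target $\auxfun_{1}(x)=x$ is used, and (c) still applies with $m=2$. This step, combining the invariant with the additive update rule (c), is the only nontrivial part; everything else follows directly from the definitions. For $s=k$ we have $f_j^k=0$, but the lemma only concerns $s\le k-1$.
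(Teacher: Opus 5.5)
Your proof is correct and follows the paper's argument essentially line for line. For the non-predicted branch you use nonnegativity of $1-R+P_j^{s-1}$ from \Cref{lem:g-invariant}. For the predicted branch you combine that invariant with the additive update rule \Cref{lem:g-props}(c) at $m=k-s+1$ to get $a_j^s(z)\le 1$ for $z>0$.
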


\begin{proof}
    Fix a stage $s \in [k-1]$, a vertex $j \in S$, and a value $z \in [0, 1 - X_j^{s-1}]$. We show that $0 \leq f_j^s(z) \leq 1$ by analyzing the two cases for the penalty function.
    
    \medskip
    \noindent
    \textbf{Case 1: $j \notin A_s$.} In this case, $f_j^s(z) = r_j^s(z)$. The upper bound $r_j^s(z)\leq 1$ follows directly from the definition. For the lower bound, if $z=1$, then $r_j^s(1)=1$. If $z<1$, then $1-z>0$, and \Cref{lem:g-invariant} gives
    \[
        1-R+P_j^{s-1}\geq \auxfun_{k-s+1}(X_j^{s-1})\geq 0.
    \]
    Hence $(1-R+P_j^{s-1})/(1-z)\geq 0$, and therefore $r_j^s(z)\geq 0$.
        
    \medskip
    \noindent
    \textbf{Case 2: $j \in A_s$.} Here, $f_j^s(z) = a_j^s(z)$. The lower bound $a_j^s(z) \geq 0$ is immediate from the definition. To establish the upper bound $a_j^s(z) \leq 1$, it is enough to consider $z > 0$, since $a_j^s(0)=0$. By \Cref{lem:g-invariant},
    \[
        1 - R + P_j^{s-1} \geq \auxfun_{k-s+1}(X_j^{s-1}).
    \]
    Moreover, by \Cref{lem:g-props}(c),
    \[
        \auxfun_{k-s+1}(X_j^{s-1}) + z \geq \auxfun_{k-s}(X_j^{s-1}+z).
    \]
    Combining these two inequalities gives
    \[
        1 - R + P_j^{s-1} + z \geq \auxfun_{k-s}(X_j^{s-1}+z).
    \]
    Hence
    \[
        \frac{\auxfun_{k-s}(X_j^{s-1}+z)-(1-R+P_j^{s-1})}{z}\leq 1.
    \]
    Taking the maximum with $0$ preserves the upper bound, so $a_j^s(z)\leq 1$.
\end{proof}

\begin{proof}[Proof of \Cref{lem:penalties-valid}]
For $s=k$, we have $f_j^k = 0$, so the claim is immediate.
The non-final stages $s\in[k-1]$ are handled by \Cref{lem:g-invariant,lem:cons-pen-monotone,lem:rob-pen-monotone,lem:penalty-in-0-1}.
It remains only to check continuity at the defined endpoints. The function $r_j^s$ is continuous on $[0,1)$, and if its feasible domain reaches $1$, then
$1-R+P_j^{s-1}>0$ implies that $r_j^s(z)=1$ for all $z$ sufficiently close to $1$. Hence its extension $r_j^s(1)=1$ is continuous.
For $a_j^s$, set $m=k-s$ and $X=X_j^{s-1}$. If $X=1$, its domain is the singleton $\{0\}$. If $X<1$, then \Cref{lem:g-invariant} and the strict form of \Cref{lem:g-props}(b) at $y=0$ give
\[
    1-R+P_j^{s-1}\ge g_{m+1}(X)>g_m(X).
\]
By continuity of $g_m$, the numerator defining $a_j^s(z)$ is therefore negative for all sufficiently small $z>0$, so $a_j^s(z)=0$ near zero. Thus $a_j^s(0)=0$ is its continuous extension. Continuity elsewhere follows directly from the formula. Consequently every $f_j^s$ is continuous on its feasible interval.
\end{proof}

\subsection{Omitted Proofs from \cref{sec:proofs-main-theorem} (Proof of Main Theorem)}
\label{app:proofs-conditions}

\lemmaUncappedNonpredicted*

\begin{proof}
For $h=0$, the lemma is trivial. Now assume $h\geq 1$.
If $r_j^d(x_j^d)<1$, then $x_j^d<1$ and the cap in the definition of $r_j^d$ is inactive. Hence
\[
    r_j^d(x_j^d)=\frac{1-R+P_j^{d-1}}{1-x_j^d}.
\]
Therefore
\begin{align*}
    1-R+P_j^d
    &=
    1-R+P_j^{d-1}+x_j^d r_j^d(x_j^d) \\
    &=
    1-R+P_j^{d-1}+x_j^d\frac{1-R+P_j^{d-1}}{1-x_j^d} \\
    &=\frac{1-R+P_j^{d-1}}{1-x_j^d}.
\end{align*}
Iterating this recurrence from $d=1$ to $h$ proves the first statement.

For the second statement, if $x_j^t<1$, then $r_j^t(x_j^t)=1$ implies
\[
    \frac{1-R+P_j^{t-1}}{1-x_j^t}\geq 1,
\]
and hence
\[
    1-R+P_j^t=1-R+P_j^{t-1}+x_j^t\geq 1.
\]
If $x_j^t=1$, then $1-R+P_j^t=1-R+P_j^{t-1}+1\geq1$ as well.
We now induct over $d=t+1,\ldots,h$, maintaining $1-R+P_j^{d-1}\geq1$.
If $x_j^d<1$, then
\[
    \frac{1-R+P_j^{d-1}}{1-x_j^d}\geq 1,
\]
so $r_j^d(x_j^d)=1$. If $x_j^d=1$, then $r_j^d(x_j^d)=1$ by definition. In either case,
$P_j^d=P_j^{d-1}+x_j^d\ge P_j^{d-1}$, so the induction invariant also holds for the next index.
Thus $r_j^d(x_j^d)=1$ for all $d\in\{t,\ldots,h\}$. During these stages, $P_j^d-X_j^d$ does not change, since its increment is $x_j^d \cdot r_j^d(x_j^d)-x_j^d=0$. This proves the second statement.
\end{proof}

\lemmaFeasibilityConsistencyTwo*

\begin{proof}
Fix $s\in[k-1]$ and $j\in A_s$.
Since $f_j^s = a_j^s$ and $x_j^d f_j^d(x_j^d) \geq 0$ for all $d \in [k-1]$,
\begin{align*}
    P_j^{k-1}+1-f_j^s(x_j^s)
    \geq P_j^s+1-a_j^s(x_j^s) 
    = 1+P_j^{s-1}+(x_j^s-1)a_j^s(x_j^s).
\end{align*}
Thus it suffices to prove
\begin{equation}\label{eq:stage-consistency-suffices}
    1+P_j^{s-1}+(x_j^s-1)a_j^s(x_j^s) \geq C_k(R).
\end{equation}
If $a_j^s(x_j^s)=0$, then the left-hand side of \eqref{eq:stage-consistency-suffices} is
$1+P_j^{s-1}\geq 1\geq C_k(R)$.
Hence assume that $a_j^s(x_j^s)>0$. In particular, $x_j^s>0$.

Note that $f_j^d = r_j^d$ for all $d \in [s-1]$.
We first show that $r_j^d(x_j^d) < 1$ for all $d \in [s-1]$. Suppose for contradiction that
$r_j^t(x_j^t)=1$ for some $t \leq s-1$. 
If $x_j^t<1$, then the definition of $r_j^t$ gives
$1-R+P_j^{t-1}\geq 1-x_j^t$, and hence $P_j^{t-1}+x_j^t\geq R$.
If $x_j^t=1$, the same inequality $P_j^{t-1}+x_j^t\geq R$ is immediate.
Since $r_j^t(x_j^t)=1$, we get $P_j^t\geq R$, and therefore $P_j^{s-1}\geq R$.
Here the last implication uses that $P_j^d$ is non-decreasing in $d$, since
$P_j^d-P_j^{d-1}=x_j^d f_j^d(x_j^d)\geq0$ for every $d\in[k-1]$.
But then the numerator in the definition of $a_j^s(x_j^s)$ satisfies
\[
    \auxfun_{k-s}(X_j^s)-(1-R+P_j^{s-1})
    \leq
    \auxfun_{k-s}(X_j^s)-1
    \leq 0,
\]
contradicting $a_j^s(x_j^s)>0$.

Since we have shown that $r_j^d(x_j^d)<1$ for every $d\in[s-1]$, \Cref{claim:uncapped-nonpredicted-product} with $h=s-1$ gives
\[
    1-R+P_j^{s-1}
    =
    \frac{1-R}{\prod_{\ell=1}^{s-1}(1-x_j^\ell)} \ .
\]
Subtracting $1-R$ from both sides yields
\begin{equation}\label{eq:stage-consistency-product-S}
    P_j^{s-1}
    =
    \frac{1-R}{\prod_{\ell=1}^{s-1}(1-x_j^\ell)}-(1-R) \ ,
\end{equation}
and, since $a_j^s(x_j^s)>0$, we have
\begin{equation}\label{eq:stage-consistency-product-a}
    x_j^s \cdot a_j^s(x_j^s)
    =
    \auxfun_{k-s}(X_j^s)
    -
    \frac{1-R}{\prod_{\ell=1}^{s-1}(1-x_j^\ell)} \ .
\end{equation}
For clarity, set $Q:=\frac{1-R}{\prod_{\ell=1}^{s-1}(1-x_j^\ell)}$.
Then \eqref{eq:stage-consistency-product-S} and
\eqref{eq:stage-consistency-product-a} say that
$P_j^{s-1}=Q-(1-R)$ and $x_j^s a_j^s(x_j^s)=\auxfun_{k-s}(X_j^s)-Q$.
Using also $C_k(R)=k(1-R)^{1/k}+R-(k-1)$, we obtain
\begin{align}
&x_j^s\left(1+P_j^{s-1}+(x_j^s-1)a_j^s(x_j^s)-C_k(R)\right) \notag\\
&=x_j^s(1-C_k(R))+x_j^s\bigl(Q-(1-R)\bigr)
  +(x_j^s-1)(\auxfun_{k-s}(X_j^s)-Q) \notag\\
&=Q-\auxfun_{k-s}(X_j^s)+x_j^s\bigl(\auxfun_{k-s}(X_j^s)+R-C_k(R)\bigr) \notag\\
&= \frac{1-R}{\prod_{\ell=1}^{s-1}(1-x_j^\ell)}
   -\auxfun_{k-s}(X_j^s)
   +x_j^s\left(\auxfun_{k-s}(X_j^s)+k-k(1-R)^{1/k}-1\right). \label{eq:stage-consistency-main-expression}
\end{align}

Next we lower-bound the first term in \eqref{eq:stage-consistency-main-expression}.
Since $x_j^s>0$, we have $\auxfun_{k-s}(X_j^s)>0$.
We apply \Cref{am-gm} to the $k$ non-negative numbers
\[
    1-x_j^1,\ldots,1-x_j^{s-1},
    \underbrace{\frac{k-s-1+X_j^s}{k-s},\ldots,
    \frac{k-s-1+X_j^s}{k-s}}_{k-s\text{ copies}},
    \frac{1-R}{\left(\prod_{\ell=1}^{s-1}(1-x_j^\ell)\right)\auxfun_{k-s}(X_j^s)},
\]
whose product is $1-R$. Indeed, the product of these $k$ numbers is
\[
\biggl(\prod_{\ell=1}^{s-1}(1-x_j^\ell)\biggr)
\left(\frac{k-s-1+X_j^s}{k-s}\right)^{k-s}
\frac{1-R}{\left(\prod_{\ell=1}^{s-1}(1-x_j^\ell)\right)\auxfun_{k-s}(X_j^s)}
=1-R,
\]
because
\[
\auxfun_{k-s}(X_j^s)=\left(\frac{k-s-1+X_j^s}{k-s}\right)^{k-s}.
\]
By \Cref{am-gm}, we obtain
\begin{align*}
    (1-R)^{1/k}
    &\leq
    \frac{1}{k}\left(
        \sum_{\ell=1}^{s-1}(1-x_j^\ell)
        +(k-s)\frac{k-s-1+X_j^s}{k-s}
        +\frac{1-R}{\left(\prod_{\ell=1}^{s-1}(1-x_j^\ell)\right)\auxfun_{k-s}(X_j^s)}
    \right) \\
    &=
    \frac{1}{k}\left(
        k-2+x_j^s
        +\frac{1-R}{\left(\prod_{\ell=1}^{s-1}(1-x_j^\ell)\right)\auxfun_{k-s}(X_j^s)}
    \right).
\end{align*}
Rearranging yields
\begin{equation}\label{eq:stage-consistency-amgm-bound}
    \frac{1-R}{\prod_{\ell=1}^{s-1}(1-x_j^\ell)}
    \geq
    \auxfun_{k-s}(X_j^s)\left(k(1-R)^{1/k}-k+2-x_j^s\right).
\end{equation}
Substituting \eqref{eq:stage-consistency-amgm-bound} into
\eqref{eq:stage-consistency-main-expression} gives
\begin{align*}
&x_j^s\left(1+P_j^{s-1}+(x_j^s-1)a_j^s(x_j^s)-C_k(R)\right) \\
&\geq
\auxfun_{k-s}(X_j^s)(k(1-R)^{1/k}-k+2-x_j^s)
-\auxfun_{k-s}(X_j^s)
+x_j^s\left(\auxfun_{k-s}(X_j^s)+k-k(1-R)^{1/k}-1\right) \\
&=
\left(\auxfun_{k-s}(X_j^s)-x_j^s\right)(k(1-R)^{1/k}-k+1).
\end{align*}
Finally, recall that $R\in[0,R_k]$ and $R_k=1-(1-1/k)^k$. Hence
$(1-R)^{1/k}\geq1-1/k$, so $k(1-R)^{1/k}-k+1\geq 0$. Moreover,
$\auxfun_{k-s}(X_j^s)\geq X_j^s\geq x_j^s$ by \Cref{lem:g-props}(a).
Therefore 
\[
\left(\auxfun_{k-s}(X_j^s)-x_j^s\right)(k(1-R)^{1/k}-k+1) \geq 0 \ .
\]
Since $x_j^s>0$, we have
\[
1+P_j^{s-1}+(x_j^s-1)a_j^s(x_j^s)-C_k(R) \geq 0,
\]
which proves
\eqref{eq:stage-consistency-suffices}, and hence the lemma.
\end{proof}

\lemmaRobustnessTwo*

\begin{proof}
By \Cref{lem:g-invariant} with $s=k-1$,
\[
    1-R+P_j^{k-1} \geq \auxfun_1(X_j^{k-1})=X_j^{k-1}.
\]
Rearranging gives
\[
    P_j^{k-1}+1-X_j^{k-1}\geq R,
\]
which is exactly \eqref{eq:k-rob-k}.
\end{proof}

\section{Omitted Proofs from \Cref{sec:online} (Online Setting)}\label{app:online}

We now give the full proof of \Cref{thm:online}.  Throughout this appendix section we use the online penalty functions and notation defined in \Cref{sec:online}.

\subsection{The Exponential Safety Curve}

We first give elementary inequalities that replace \Cref{lem:g-props} for the online setting.

\begin{lemma}\label{lem:online-exp-props}
The following statements hold.
\begin{enumerate}[label=(\alph*),nosep]
    \item For every $z\in[0,1]$, it holds that $z \le \auxfun_\infty(z)\le 1$.
    \item For every $z\in[0,1]$ and $y\in[0,1-z]$, it holds that
    \(
        \auxfun_\infty(z)\ge (1-y) \cdot \auxfun_\infty(z+y).
    \)
    \item For every $z\in[0,1]$ and $y\in[0,1-z]$, it holds that
    \(
        \auxfun_\infty(z)+y\ge \auxfun_\infty(z+y).
    \)
\end{enumerate}
\end{lemma}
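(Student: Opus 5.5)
All three parts reduce to the elementary inequality $e^u \ge 1+u$ for all real $u$, together with monotonicity of $\auxfun_\infty(z)=\exp(z-1)$. I would prove each part directly rather than pass to the limit $m\to\infty$ in \Cref{lem:g-props}. A limiting argument would also work, since all three inequalities are non-strict and $\auxfun_m\to\auxfun_\infty$ pointwise, but the direct route is cleaner.

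For (a), the upper bound follows because $z\le 1$ gives $z-1\le 0$, so $\exp(z-1)\le 1$. For the lower bound, apply $e^u\ge 1+u$ with $u=z-1$ to get $\exp(z-1)\ge z$.

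For (b), divide both sides by $\exp(z-1)>0$. The claim becomes $1\ge (1-y)e^{y}$, which is equivalent to $e^{-y}\ge 1-y$. This is again $e^u\ge 1+u$, now with $u=-y$. The inequality holds for every real $y$, so the constraint $y\in[0,1-z]$ is not needed beyond keeping the arguments in the domain.

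For (c), define $L(y):=\exp(z-1)+y-\exp(z+y-1)$. Then $L(0)=0$ and $L'(y)=1-\exp(z+y-1)\ge 0$, because $z+y\le 1$. Hence $L(y)\ge L(0)=0$ on $[0,1-z]$. This mirrors the proof of \Cref{lem:g-props}(c).

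There is no real obstacle here. The only point needing care is part (c), where the hypothesis $y\le 1-z$ is genuinely used to make the derivative nonnegative; parts (a) and (b) hold more generally.
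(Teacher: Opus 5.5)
Your proof is correct and takes essentially the paper's approach: part (c) is identical, and parts (a) and (b) rest on the same elementary exponential inequality. The differences are cosmetic. In (b) you reduce to $e^{-y}\ge 1-y$, which holds for all real $y$ and so avoids the separate endpoint case $z=0$, $y=1$ that the paper handles after using $-\ln(1-y)\ge y$ for $y<1$. In (a) you cite $e^u\ge 1+u$ where the paper checks the sign of the derivative of $\exp(z-1)-z$.
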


\begin{proof}
The upper bound in (a) is immediate. For the lower bound, the function $z\mapsto \exp(z-1)-z$ is equal to $0$ at $z=1$ and has derivative $\exp(z-1)-1\le 0$ on $[0,1]$, hence $\exp(z-1)-z\ge 0$ for all $z\in[0,1]$.

For (b), when $y<1$, it is enough to show $1/(1-y)\ge \exp(y)$, which follows from $-\ln(1-y)\ge y$. The only remaining endpoint is $z=0$, $y=1$, where the claimed product-form inequality is immediate because its right-hand side is zero. For (c), define
\[
    L(y):=\exp(z-1)+y-\exp(z+y-1).
\]
Then $L(0)=0$ and $L'(y)=1-\exp(z+y-1) \ge 0$ because $z+y\le 1$. Hence $L(y)\ge 0$.
\end{proof}

\begin{lemma}\label{lem:online-reserve-invariant}
For every $j \in S$ and every time $t\in\{0,1,\ldots,T\}$, we have
\(
    1-R+P_j^t \ge \exp(X_j^t-1) .
\)
\end{lemma}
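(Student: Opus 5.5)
We prove the invariant by induction over $t$, mirroring the proof of \Cref{lem:g-invariant}, with \Cref{lem:online-exp-props} taking the place of \Cref{lem:g-props}. The key point is that the safety curve $\auxfun_\infty$ is stationary, so the same curve appears on both sides of every update step. For the base case $t=0$ we have $P_j^0=X_j^0=0$, so we need $1-R\ge \exp(-1)$. This is exactly the assumption $R\le 1-\nicefrac1e$.

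For the inductive step, fix $t\ge 1$ and assume $1-R+P_j^{t-1}\ge \exp(X_j^{t-1}-1)$. Recall the update $P_j^t=P_j^{t-1}+x_j^t f_j^t(x_j^t)$ and $X_j^t=X_j^{t-1}+x_j^t$, with $x_j^t\in[0,1-X_j^{t-1}]$. We split into the same cases as before.

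\textbf{Case 1: $j\notin A_t$.}
If $\frob_j^t(x_j^t)=1$, then $1-R+P_j^t=1-R+P_j^{t-1}+x_j^t\ge \exp(X_j^{t-1}-1)+x_j^t\ge \exp(X_j^t-1)$. The first inequality is the induction hypothesis and the second is \Cref{lem:online-exp-props}(c) with $z=X_j^{t-1}$ and $y=x_j^t$.
If $\frob_j^t(x_j^t)<1$, then $x_j^t<1$ and the cap is inactive. The same telescoping computation as in \Cref{claim:uncapped-nonpredicted-product} gives $1-R+P_j^t=(1-R+P_j^{t-1})/(1-x_j^t)$. By the induction hypothesis this is at least $\exp(X_j^{t-1}-1)/(1-x_j^t)$, and dividing \Cref{lem:online-exp-props}(b) by $1-x_j^t>0$ shows this is at least $\exp(X_j^t-1)$.

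\textbf{Case 2: $j\in A_t$.}
If $x_j^t>0$ and $\fadv_j^t(x_j^t)>0$, then by the definition of $\fadv_j^t$ the invariant holds with equality.
If $x_j^t>0$ and $\fadv_j^t(x_j^t)=0$, then the numerator in the definition of $\fadv_j^t$ is nonpositive, which is exactly the claim $1-R+P_j^{t-1}\ge\exp(X_j^t-1)$.
If $x_j^t=0$, then $X_j^t=X_j^{t-1}$ and $P_j^t=P_j^{t-1}$, so the claim is the induction hypothesis.

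The only delicate point is well-definedness at $x_j^t=0$ in the predicted case. Here we need $\fadv_j^t(0)$, defined as a limit, to be finite and to contribute nothing. We have $x_j^t\fadv_j^t(x_j^t)=0$ whenever $x_j^t=0$ as long as $\fadv_j^t(0)$ is finite. Finiteness holds because, by the induction hypothesis, the numerator of $\fadv_j^t$ tends to a nonpositive value as $z\to 0$. The limit is therefore either $0$, or a finite derivative-type quotient bounded by $\exp(\cdot)\le 1$ in the tight case.

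Since the argument never uses the number of remaining arrivals, it holds for every $t\le T$ even though $T$ is unknown to the algorithm.
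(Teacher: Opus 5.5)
Your proof is correct and matches the paper's argument step for step. Both use the same induction with base case $1-R\ge\exp(-1)$, the same capped/uncapped split for non-predicted vertices via parts (c) and (b) of the exponential-curve lemma, and the same three predicted subcases. Your remark that $\fadv_j^t(0)$ is finite by the induction hypothesis, so $x_j^t\fadv_j^t(x_j^t)=0$ when $x_j^t=0$, is left implicit in the paper's proof here and is addressed only in the subsequent validity lemma.
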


\begin{proof}
We prove the claim by induction on $t$. For $t=0$, it is exactly $1-R\ge \exp(-1)$.
Assume the invariant holds before time $t$.

If $j\notin A_t$, then $f_j^t=\frob_j^t$. If $\frob_j^t(x_j^t)=1$, then by the induction hypothesis and \Cref{lem:online-exp-props}(c),
\[
    1-R+P_j^t
    =1-R+P_j^{t-1}+x_j^t 
    \ge \exp(X_j^{t-1}-1)+x_j^t 
    \ge \exp(X_j^t-1).
\]
If $\frob_j^t(x_j^t)<1$, then $x_j^t<1$ and the cap in the definition of $\frob_j^t$ is inactive. Therefore
\[
\begin{aligned}
    1-R+P_j^t
    &=1-R+P_j^{t-1}+x_j^t\frac{1-R+P_j^{t-1}}{1-x_j^t} \\
    &=\frac{1-R+P_j^{t-1}}{1-x_j^t} 
    \ge \frac{\exp(X_j^{t-1}-1)}{1-x_j^t} 
    \ge \exp(X_j^t-1),
\end{aligned}
\]
where the last step uses \Cref{lem:online-exp-props}(b).

If $j\in A_t$, then $f_j^t=\fadv_j^t$. If $x_j^t=0$, then $P_j^t=P_j^{t-1}$ and $X_j^t=X_j^{t-1}$, so the induction hypothesis gives
\[
    1-R+P_j^t=1-R+P_j^{t-1}\ge \exp(X_j^{t-1}-1)=\exp(X_j^t-1).
\]
If $x_j^t>0$ and $\fadv_j^t(x_j^t)=0$, then the definition of $\fadv_j^t$ gives $1-R+P_j^{t-1}\ge \exp(X_j^t-1)$, so the invariant remains true. Finally, if $x_j^t>0$ and $\fadv_j^t(x_j^t)>0$, then the definition of $\fadv_j^t$ gives equality:
\[
    1-R+P_j^t
    =1-R+P_j^{t-1}+x_j^t \cdot \fadv_j^t(x_j^t)
    =\exp(X_j^t-1).
\]
This completes the induction.
\end{proof}

\begin{lemma}\label{lem:online-penalties-valid}
For every time $t$ and supply vertex $j$, the penalty function $f_j^t$ is non-decreasing and takes values in $[0,1]$ on $[0,1-X_j^{t-1}]$.
\end{lemma}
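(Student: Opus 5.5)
The plan is to mirror the $k$-stage proofs of \Cref{lem:rob-pen-monotone,lem:cons-pen-monotone,lem:penalty-in-0-1}, with $\auxfun_{k-s}$ replaced by $\auxfun_\infty(z)=\exp(z-1)$. The only structural input is the invariant of \Cref{lem:online-reserve-invariant} applied at time $t-1$, namely $1-R+P_j^{t-1}\ge \exp(X_j^{t-1}-1)$, together with \Cref{lem:online-exp-props}. Fix $t$ and $j$, and write $X:=X_j^{t-1}$ and $Q:=1-R+P_j^{t-1}$, so that $Q\ge \exp(X-1)>0$.

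\emph{Non-predicted branch.} Since $Q\ge 0$, the map $z\mapsto Q/(1-z)$ is non-decreasing on $[0,1)$. Taking the minimum with $1$ preserves this, and setting $\frob_j^t(1)=1$ is consistent with monotonicity at the endpoint. The values lie in $[0,1]$ because $Q\ge0$ and because of the cap.

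\emph{Predicted branch.} For $z\in(0,1-X]$ put $h(z):=(\exp(X+z-1)-Q)/z$, so that $\fadv_j^t=\max\{0,h\}$. For the upper bound, \Cref{lem:online-exp-props}(c) gives $Q+z\ge \exp(X-1)+z\ge\exp(X+z-1)$, hence $h(z)\le1$ and $\fadv_j^t(z)\le 1$. Nonnegativity is by definition.

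For monotonicity, compute $h'(z)=\bigl(Q-\phi(z)\bigr)/z^2$ with $\phi(z):=\exp(X+z-1)-z\exp(X+z-1)$. Then $\phi'(z)=-z\exp(X+z-1)\le0$, which reflects the convexity of $\auxfun_\infty$. Hence $\phi(z)\le\phi(0)=\exp(X-1)\le Q$, so $h'\ge0$ and $\max\{0,h\}$ is non-decreasing on $(0,1-X]$.

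At $z=0$ the value $\fadv_j^t(0)$ is defined as a limit, and this is the one point needing care. I expect it to be the main (though mild) obstacle. Unlike the $k$-stage case, we no longer have the strict gap $\auxfun_{m+1}>\auxfun_m$, since the curve is stationary. The invariant may therefore hold with equality $Q=\exp(X-1)$, for instance after a predicted step that was charged. In that case $h(z)\to \exp(X-1)-0$ behaves like the derivative, so $\lim_{z\to0}h(z)=\exp(X-1)\ge 0$ rather than $0$. If $Q>\exp(X-1)$, the numerator is negative near $0$ and the limit is $0$.

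In either case the limit exists: $h$ is monotone and bounded on $(0,1-X]$, or one can use l'H\^opital. The limit lies in $[0,1]$ because $h\le 1$, and since $h$ is non-decreasing, $\fadv_j^t(0)=\lim_{z\downarrow0}\max\{0,h(z)\}\le \fadv_j^t(z)$ for all $z>0$. So monotonicity and the range $[0,1]$ extend to the whole interval, and continuity at $0$ holds by construction. The degenerate case $X=1$, where the domain is $\{0\}$, is handled by declaring the value $0$, exactly as in the $k$-stage definition.
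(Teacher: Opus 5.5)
Your proof is correct and follows the paper's argument essentially step for step. You use the invariant of \Cref{lem:online-reserve-invariant}, the derivative $h'(z)=(Q-\exp(X+z-1)(1-z))/z^2$ with its non-increasing factor $\exp(X+z-1)(1-z)$, the upper bound via \Cref{lem:online-exp-props}(c), and the right-limit treatment at $z=0$, where your explicit split into $Q=\exp(X-1)$ versus $Q>\exp(X-1)$ is more detailed than the paper's. One small nit: when $Q>\exp(X-1)$, $h$ itself is unbounded below near $0$, so the limit should be taken for the bounded, non-decreasing function $\max\{0,h\}$ rather than for $h$.
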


\begin{proof}
The non-predicted branch is immediate from the definition and \Cref{lem:online-reserve-invariant}: the numerator $1-R+P_j^{t-1}$ is nonnegative, and $z\mapsto (1-R+P_j^{t-1})/(1-z)$ is non-decreasing on $[0,1)$.

Now consider the predicted branch. Fix $t$ and $j$. For $z>0$, define
\[
    h(z):=\frac{R-1+\exp(X_j^{t-1}+z-1)-P_j^{t-1}}{z}.
\]
Then $\fadv_j^t(z)=\max\{0,h(z)\}$. Computing the derivative of $h$ gives
\[
    h'(z)=\frac{1-R+P_j^{t-1}-\exp(X_j^{t-1}+z-1)(1-z)}{z^2}.
\]
The function $z\mapsto \exp(X_j^{t-1}+z-1)(1-z)$ has derivative $-z\exp(X_j^{t-1}+z-1)\le0$, so it is at most its value at $0$, namely $\exp(X_j^{t-1}-1)\le 1-R+P_j^{t-1}$. Hence $h'(z)\ge0$, and $\fadv_j^t$ is non-decreasing on $(0,1-X_j^{t-1}]$; defining the endpoint by the right limit preserves monotonicity on the closed interval and shows that the limit exists.

The lower bound $\fadv_j^t(z)\ge0$ is by definition. For the upper bound, \Cref{lem:online-reserve-invariant} and \Cref{lem:online-exp-props}(c) imply
\[
    1-R+P_j^{t-1}+z\ge \exp(X_j^{t-1}-1)+z\ge \exp(X_j^{t-1}+z-1).
\]
Thus $h(z)\le1$ for every $z>0$, and therefore $\fadv_j^t(z)\le1$. The endpoint $z=0$ follows by taking the right limit.
\end{proof}

\subsection{Dual Fitting}

Since \Cref{lem:online-penalties-valid} gives non-decreasing penalties in $[0,1]$, the convex program in each stage is well-defined by the same argument as \Cref{lem:conditional-convexity}. The KKT conditions of \Cref{lem:kkt} also apply to every time $t$.
For every time $t$, let $(\lambda_i^t)_{i\in D_t}$ be the KKT multipliers. Define dual variables for $(\DP)$ as follows:
\begin{itemize}
    \item For every demand vertex $i\in D_t$, define $\alpha_i:=\lambda_i^t$.
    \item For every supply vertex $j \in S$, define $\beta_j:=\sum_{t=1}^T\left(w_jx_j^t-\sum_{i\in D_t}\lambda_i^t x_{ij}^t\right)$.
\end{itemize}
Analogously to the proof of \Cref{lem:dual_objective}, the dual objective equals the algorithm's objective value:
\[
    \sum_{i\in D}\alpha_i+\sum_{j\in S}\beta_j=\ALG.
\]
We next give the online analogue of \Cref{lem:intermed_feasibility}. Let $q_j$ be the first time at which $X_j^{q_j}=1$, and set $q_j=\infty$ if $j$ is never saturated.

\begin{lemma}\label{lem:online-dual-certificates}
For every $j\in S$, $\beta_j\ge w_jP_j^T$. Moreover, for every edge $(i,j)\in E_t$, one of the following cases applies:
\begin{enumerate}[label=(\alph*),nosep]
    \item If $X_j^t<1$, then
    \(
        \alpha_i+\beta_j\ge w_j(P_j^T+1-f_j^t(x_j^t)).
    \)
    \item If $q_j<t$, then
    \(
        \alpha_i+\beta_j\ge w_jP_j^{q_j}.
    \)
    \item If $q_j=t$, then
    \(
        \alpha_i+\beta_j\ge w_j(P_j^{t-1}+x_j^t).
    \)
\end{enumerate}
\end{lemma}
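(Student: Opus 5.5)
The plan is to repeat the proof of \Cref{lem:intermed_feasibility} essentially verbatim, with the final stage $k$ replaced by the stopping time $T$. The only structural difference is that the online algorithm never switches to $f_j^T=0$. The $k$-stage proof used $f_j^k=0$ only to identify $P_j^k$ with $P_j^{k-1}$, and here we simply keep $P_j^T$ throughout. The KKT facts from \Cref{lem:kkt} hold at every time $t$, since each $(\CP_t)$ is a well-defined convex program by \Cref{lem:online-penalties-valid}.

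First, the bound $\beta_j\ge w_jP_j^T$. Apply \Cref{lem:kkt}(a) to every edge with $x_{ij}^t>0$, giving $\lambda_i^t\le w_j(1-f_j^t(x_j^t))$. Then the time-$t$ contribution to $\beta_j$ is
\[
w_jx_j^t-\sum_{i\in D_t}\lambda_i^t x_{ij}^t\ \ge\ w_jx_j^t f_j^t(x_j^t).
\]
Summing over $t=1,\dots,T$ gives $w_jP_j^T$. Note also that every time-$t$ contribution is nonnegative, because $f_j^t\le 1$.

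Next, the three cases for an edge $(i,j)\in E_t$.

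\emph{Case (a), $X_j^t<1$.} \Cref{lem:kkt}(d) gives $\alpha_i=\lambda_i^t\ge w_j(1-f_j^t(x_j^t))$. Adding $\beta_j\ge w_jP_j^T$ yields the claim.

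\emph{Case (b), $q_j<t$.} Here $x_j^d=0$ for all $d>q_j$, so $P_j^T=P_j^{q_j}$. The claim then follows from $\alpha_i\ge 0$ and the first bound.

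\emph{Case (c), $q_j=t$.} This is the main computation and mirrors the $k$-stage argument exactly. Let $\theta_j^t$ be the supply multiplier of $j$ at time $t$. Stationarity gives $\lambda_i^t\ge w_j(1-f_j^t(x_j^t))-\theta_j^t$, with equality for used edges $(m,j)$. Hence the time-$t$ contribution to $\beta_j$ equals $w_jx_j^tf_j^t(x_j^t)+\theta_j^tx_j^t$. Earlier times contribute at least $w_jP_j^{t-1}$, and later times contribute nonnegatively. Summing these bounds gives
\[
\alpha_i+\beta_j\ \ge\ w_j\bigl(1-f_j^t(x_j^t)+x_j^tf_j^t(x_j^t)+P_j^{t-1}\bigr)-\theta_j^t(1-x_j^t).
\]

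The only delicate point, as before, is eliminating $\theta_j^t$. Since $q_j=t$, some used edge $(m,j)$ exists. Its multiplier satisfies $0\le\lambda_m^t=w_j(1-f_j^t(x_j^t))-\theta_j^t$, so $\theta_j^t\le w_j(1-f_j^t(x_j^t))$. Since $1-x_j^t\ge 0$, we may substitute this upper bound, and the expression collapses to $w_j(P_j^{t-1}+x_j^t)$.

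I expect no real obstacle. The only things to check are that no step relied on knowing the horizon or on a zero final penalty, and that the predicted/non-predicted distinction plays no role here; both hold, since these bounds use only the KKT conditions and $f_j^t\in[0,1]$.
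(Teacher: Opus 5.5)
Your proposal is correct and matches the paper's proof, which likewise reruns the argument of \Cref{lem:intermed_feasibility} with $T$ in place of $k$, keeping $P_j^T$ instead of collapsing it to $P_j^{k-1}$ via $f_j^k=0$. One small slip: each per-time contribution to $\beta_j$ is nonnegative because it is at least $w_jx_j^tf_j^t(x_j^t)\ge 0$. That uses $f_j^t\ge 0$, not $f_j^t\le 1$.
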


\begin{proof}
The proof is the same as the proof of \Cref{lem:intermed_feasibility}, with $T$ in place of the known final stage $k$. We include the details for completeness.

By the KKT conditions, whenever $x_{ij}^t>0$ we have $\lambda_i^t\le w_j(1-f_j^t(x_j^t))$. Hence
\[
    \beta_j
    =\sum_{t=1}^T\left(w_jx_j^t-\sum_{i\in D_t}\lambda_i^t x_{ij}^t\right)
    \ge \sum_{t=1}^T w_jx_j^t f_j^t(x_j^t)
    =w_jP_j^T.
\]
If $X_j^t<1$, then the capacity constraint of $j$ has slack after time $t$, and the KKT condition gives $\alpha_i=\lambda_i^t\ge w_j(1-f_j^t(x_j^t))$. Together with $\beta_j\ge w_jP_j^T$, this proves (a).

If $q_j<t$, then no load is assigned to $j$ after $q_j$, so $P_j^T=P_j^{q_j}$. Since $\alpha_i\ge0$, the lower bound on $\beta_j$ proves (b).

Finally suppose $q_j=t$. The same KKT calculation as in \Cref{lem:intermed_feasibility} shows that the contribution of time $t$ to $\beta_j$, together with $\alpha_i$, is at least $w_jx_j^t$ plus the earlier penalty mass $w_jP_j^{t-1}$. Thus (c) follows.
\end{proof}

\subsection{Consistency and Robustness Certificates}

The remaining work is to prove the consistency and robustness certificates. We first isolate the basic calculation that is used whenever a vertex has only seen non-predicted stages and the non-predicted penalty has not yet reached its cap.
This is the analogue to \Cref{claim:uncapped-nonpredicted-product}.

\begin{lemma}[Uncapped non-predicted prefixes]\label{lem:online-uncapped-prefix}
Fix a supply vertex $j\in S$ and an index $h\in\{0,1,\ldots,T\}$ such that $j\notin A_d$ for every $d\in[h]$.
If $\frob_j^d(x_j^d)<1$ for every $d\in[h]$, then
\[
    1-R+P_j^h=
    \frac{1-R}{\prod_{\ell=1}^{h}(1-x_j^\ell)}.
\]
Moreover, if $t$ is the first index in $[h]$ such that $\frob_j^t(x_j^t)=1$, then for every $d\in\{t,\ldots,h\}$, it holds that $\frob_j^d(x_j^d)=1$ and $P_j^d-X_j^d=P_j^{t-1}-X_j^{t-1}$.
\end{lemma}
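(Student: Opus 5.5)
The plan mirrors the proof of \Cref{claim:uncapped-nonpredicted-product}. The only change is that the horizon is now $T$ rather than $k-1$. The non-predicted penalty $\frob_j^d$ has exactly the same form as $r_j^d$, so the argument transfers verbatim. Throughout, the hypothesis $j\notin A_d$ for $d\in[h]$ gives $f_j^d=\frob_j^d$.

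\emph{First statement.} We induct on $h$, and the case $h=0$ is trivial. Suppose $\frob_j^d(x_j^d)<1$. Then $x_j^d<1$, since $\frob_j^d(1)=1$, and the cap is inactive, so $\frob_j^d(x_j^d)=(1-R+P_j^{d-1})/(1-x_j^d)$. Substituting into $P_j^d=P_j^{d-1}+x_j^d\frob_j^d(x_j^d)$ gives the one-step recurrence
\[
1-R+P_j^d=\frac{1-R+P_j^{d-1}}{1-x_j^d}.
\]
Iterating this from $d=1$ to $h$, starting from $P_j^0=0$, yields the product formula.

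\emph{Second statement.} Let $t$ be the first capped index. We first show $1-R+P_j^t\ge 1$. If $x_j^t<1$, the cap $\frob_j^t(x_j^t)=1$ means $(1-R+P_j^{t-1})/(1-x_j^t)\ge1$, that is, $1-R+P_j^{t-1}\ge 1-x_j^t$. Adding $x_j^t\cdot 1$ then gives $1-R+P_j^t\ge1$. If $x_j^t=1$, the bound follows from $1-R+P_j^{t-1}\ge0$, which holds by \Cref{lem:online-reserve-invariant}.

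We then induct on $d=t+1,\dots,h$, maintaining $1-R+P_j^{d-1}\ge1$. If $x_j^d<1$, this invariant forces $(1-R+P_j^{d-1})/(1-x_j^d)\ge1$, so $\frob_j^d(x_j^d)=1$. If $x_j^d=1$, then $\frob_j^d(x_j^d)=1$ by definition. In either case $P_j^d=P_j^{d-1}+x_j^d\ge P_j^{d-1}$, so the invariant propagates to the next index.

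Finally, at every capped step the increment of $P_j^d-X_j^d$ is $x_j^d\cdot1-x_j^d=0$. Hence $P_j^d-X_j^d=P_j^{t-1}-X_j^{t-1}$ for all $d\in\{t,\dots,h\}$.

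The only mildly delicate point is the base of the capped induction, namely showing $1-R+P_j^t\ge1$ in both subcases $x_j^t<1$ and $x_j^t=1$. This needs nonnegativity of $1-R+P_j^{t-1}$, which the online reserve invariant supplies. Everything else is routine algebra.
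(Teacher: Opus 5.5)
Your proof is correct and follows the paper's argument essentially step for step: you iterate the one-step recurrence $1-R+P_j^d=(1-R+P_j^{d-1})/(1-x_j^d)$ over the uncapped prefix, show $1-R+P_j^t\ge 1$ at the first cap, and then propagate the cap forward with zero increment of $P_j^d-X_j^d$. Your one addition is citing \Cref{lem:online-reserve-invariant} for $1-R+P_j^{t-1}\ge 0$ in the subcase $x_j^t=1$; this makes explicit what the paper leaves implicit, and it is not circular because that invariant is proved independently of this lemma.
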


\begin{proof}
For $h=0$, the product is empty and the identity is $1-R=1-R$. Now suppose that the cap is inactive in some non-predicted time $d$. Then $x_j^d<1$ and
\[
    \frob_j^d(x_j^d)=\frac{1-R+P_j^{d-1}}{1-x_j^d}.
\]
Substituting this into the definition $P_j^d=P_j^{d-1}+x_j^d\frob_j^d(x_j^d)$ gives
\[
    1-R+P_j^d
    =1-R+P_j^{d-1}
    +x_j^d\frac{1-R+P_j^{d-1}}{1-x_j^d}
    =\frac{1-R+P_j^{d-1}}{1-x_j^d}.
\]
Iterating this equality over the uncapped times $1,\ldots,h$ gives the product formula.

Now let $t$ be the first capped time in $[h]$. If $x_j^t<1$, then the definition of the cap gives
\[
    \frac{1-R+P_j^{t-1}}{1-x_j^t}\ge1,
\]
which is equivalent to $1-R+P_j^{t-1}\ge1-x_j^t$. Hence $1-R+P_j^t=1-R+P_j^{t-1}+x_j^t\ge1$. If $x_j^t=1$, then $1-R+P_j^t=1-R+P_j^{t-1}+1\ge1$ as well.

Once $1-R+P_j^d\ge1$ holds after some time $d$, the next non-predicted penalty is also capped: if $x_j^{d+1}<1$, then
\[
    \frac{1-R+P_j^d}{1-x_j^{d+1}}\ge1,
\]
and if $x_j^{d+1}=1$, the definition gives $\frob_j^{d+1}(x_j^{d+1})=1$. Thus all later non-predicted penalties up to time $h$ are capped. During every capped time $d$, the increment of $P_j^d-X_j^d$ is
\[
    x_j^d \cdot \frob_j^d(x_j^d)-x_j^d=0,
\]
so $P_j^d-X_j^d$ remains equal to its value just before the first cap.
\end{proof}

The next lemma is the analogue of \Cref{Lemma:feasibility_consistency1}.

\begin{lemma}\label{lem:online-prefix-consistency}
For every time $s$, every $j\in A_s$, and every $p\in\{0,1,\ldots,s-1\}$,
we have $P_j^p+1-X_j^p\ge C_\infty(R)$.
\end{lemma}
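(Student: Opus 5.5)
We mirror the proof of \Cref{Lemma:feasibility_consistency1}, replacing AM-GM by its logarithmic limit. Fix a time $s$, a vertex $j\in A_s$, and $p\in\{0,\ldots,s-1\}$. Since the sets $A_t$ are pairwise disjoint, $j\notin A_d$ for every $d\le p$, so $f_j^d=\frob_j^d$ throughout the prefix. Let $t$ be the first index in $[p]$ with $\frob_j^t(x_j^t)=1$, and set $t=p+1$ if no such index exists. We apply \Cref{lem:online-uncapped-prefix} twice. First, with $h=t-1$ it gives
\[
1-R+P_j^{t-1}=\frac{1-R}{\prod_{\ell=1}^{t-1}(1-x_j^\ell)}.
\]
Second, with $h=p$ (when $t\le p$) it gives $P_j^p-X_j^p=P_j^{t-1}-X_j^{t-1}$; when $t=p+1$ this identity is trivial. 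For $p=0$ the claim reads $1\ge C_\infty(R)$, which holds because $R+\ln(1-R)\le 0$.

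Write $y_\ell:=1-x_j^\ell\in(0,1]$ for $\ell<t$; these are positive because uncapped steps have $x_j^\ell<1$. Then
\[
P_j^p+1-X_j^p=\frac{1-R}{\prod_{\ell<t} y_\ell}+R-\sum_{\ell<t}(1-y_\ell).
\]
Set $u:=-\ln\prod_{\ell<t}y_\ell\ge 0$. The elementary inequality $1-y\le -\ln y$ gives $\sum_{\ell<t}(1-y_\ell)\le u$. Therefore
\[
P_j^p+1-X_j^p\ge (1-R)e^{u}+R-u.
\]
This step is the analogue of the AM-GM step: the bound $1-y\le-\ln y$ is the limit of AM-GM with infinitely many copies of $1$.

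It remains to minimize $\psi(u):=(1-R)e^u-u$ over $u\ge 0$. The function $\psi$ is convex with stationary point $u^*=-\ln(1-R)\ge 0$, which is admissible since $R\in[0,1-1/e]$. The minimum value is $\psi(u^*)=1+\ln(1-R)$. Adding $R$ gives exactly $C_\infty(R)=1+R+\ln(1-R)$, which proves the claim.

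I expect the main obstacle to be bookkeeping rather than the analysis. One point is to confirm that capping at $t$ freezes $P-X$; \Cref{lem:online-uncapped-prefix} handles this. The other is to ensure that the possibility $x_j^\ell=1$ in the uncapped regime is excluded, which holds because $\frob_j^\ell(1)=1$ by definition. The minimization itself needs no range restriction, because the unconstrained minimum of $\psi$ over all reals is already a valid lower bound.
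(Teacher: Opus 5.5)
Your proof is correct and follows the paper's argument: you choose the same first capped index $t$, apply \Cref{lem:online-uncapped-prefix} twice in the same way, and use the inequality $\ln y\le y-1$ on the factors $1-x_j^\ell$. The only cosmetic difference is that you handle the last factor $(1-R)/\prod_{\ell<t}(1-x_j^\ell)$ by minimizing $\psi(u)=(1-R)e^u-u$, which is exactly the paper's application of $\ln a\le a-1$ to $a=(1-R)e^u$.
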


\begin{proof}
Fix $s$, $j\in A_s$, and $p<s$. Since the predicted sets are disjoint, vertex $j$ is not predicted in any time $d\le p$, so $f_j^d=\frob_j^d$ for all $d\le p$. Let $t$ be the first index in $\{1,\ldots,p\}$ such that $\frob_j^t(x_j^t)=1$; if no such index exists, set $t=p+1$.

The choice of $t$ and \Cref{lem:online-uncapped-prefix} give
\begin{equation}	
    1-R+P_j^{t-1}
    =\frac{1-R}{\prod_{\ell=1}^{t-1}(1-x_j^\ell)}. \label{eq:online-prefix-consistency1}
\end{equation}

If $t=p+1$, the identity $P_j^p-X_j^p=P_j^{t-1}-X_j^{t-1}$ is immediate. If $t\le p$, the second part of \Cref{lem:online-uncapped-prefix} shows that the same identity holds because $P_j^d-X_j^d$ does not change after the first cap. Therefore
\[
    P_j^p+1-X_j^p=P_j^{t-1}+1-X_j^{t-1}.
\]
Using \eqref{eq:online-prefix-consistency1} and $X_j^{t-1}=\sum_{\ell=1}^{t-1}x_j^\ell$, we get
\[
\begin{aligned}
    P_j^p+1-X_j^p
    &=\frac{1-R}{\prod_{\ell=1}^{t-1}(1-x_j^\ell)}-(1-R)+1-
      \sum_{\ell=1}^{t-1}x_j^\ell \\
    &=\frac{1-R}{\prod_{\ell=1}^{t-1}(1-x_j^\ell)}+R-
      \sum_{\ell=1}^{t-1}x_j^\ell \\
    &=\frac{1-R}{\prod_{\ell=1}^{t-1}(1-x_j^\ell)}+
      \sum_{\ell=1}^{t-1}(1-x_j^\ell)-(t-1)+R.
\end{aligned}
\]
The following $t$ positive numbers have product $1-R$:
\[
    1-x_j^1,\ldots,1-x_j^{t-1},
    \frac{1-R}{\prod_{\ell=1}^{t-1}(1-x_j^\ell)}.
\]
Applying $\ln a\le a-1$ to each of these numbers gives
\[
\begin{aligned}
    \ln(1-R)
    &\le
    \sum_{\ell=1}^{t-1}\bigl((1-x_j^\ell)-1\bigr)
    +\left(\frac{1-R}{\prod_{\ell=1}^{t-1}(1-x_j^\ell)}-1\right) \\
    &=\frac{1-R}{\prod_{\ell=1}^{t-1}(1-x_j^\ell)}+
      \sum_{\ell=1}^{t-1}(1-x_j^\ell)-t.
\end{aligned}
\]
Combining this with the above gives
\[
    P_j^p+1-X_j^p
    \ge 1+R+\ln(1-R)
    =C_\infty(R).
\]
This completes the proof.
\end{proof}

\begin{lemma}\label{lem:online-current-consistency}
For every time $s$ and every $j\in A_s$ with $X_j^s<1$,
it holds that
\(
    P_j^s+1-f_j^s(x_j^s)\ge C_{\infty}(R).
\)
Moreover, this implies that
\(
    P_j^T+1-f_j^s(x_j^s)\ge C_{\infty}(R).
\)
\end{lemma}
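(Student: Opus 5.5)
I will mirror the proof of \Cref{Lemma:feasibility_consistency2}, replacing the AM-GM step by the logarithmic inequality $\ln a\le a-1$, as in \Cref{lem:online-prefix-consistency}. The second claim is immediate from the first, since $P_j^T\ge P_j^s$ because all increments $x_j^d f_j^d(x_j^d)$ are nonnegative. For the first claim, write
\[
P_j^s+1-\fadv_j^s(x_j^s)=1+P_j^{s-1}+(x_j^s-1)\fadv_j^s(x_j^s).
\]
If $\fadv_j^s(x_j^s)=0$, the right-hand side is at least $1\ge C_\infty(R)$, since $\ln(1-R)\le -R$. So assume $\fadv_j^s(x_j^s)>0$, hence $x_j^s>0$, and $x_j^s\le X_j^s<1$.

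\textbf{Step 1.} Show that all earlier non-predicted penalties were uncapped. Since the sets $A_t$ are disjoint, $f_j^d=\frob_j^d$ for $d<s$. If some $\frob_j^t(x_j^t)=1$, then, exactly as in the $k$-stage proof, $P_j^{s-1}\ge P_j^t\ge R$. This gives numerator $\exp(X_j^s-1)-(1-R+P_j^{s-1})\le \exp(X_j^s-1)-1\le 0$, contradicting positivity. Hence \Cref{lem:online-uncapped-prefix} with $h=s-1$ yields $1-R+P_j^{s-1}=Q:=(1-R)/\prod_{\ell<s}(1-x_j^\ell)$, and by definition $x_j^s\fadv_j^s(x_j^s)=\exp(X_j^s-1)-Q$.

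\textbf{Step 2.} Compute, as in \eqref{eq:stage-consistency-main-expression},
\[
x_j^s\bigl(1+P_j^{s-1}+(x_j^s-1)\fadv_j^s(x_j^s)-C_\infty(R)\bigr)=Q-e+x_j^s\bigl(e+R-C_\infty(R)\bigr),
\]
where $e:=\exp(X_j^s-1)$ and $R-C_\infty(R)=-1-\ln(1-R)$.

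\textbf{Step 3.} Lower bound $Q$ via $\ln a\le a-1$, applied to the numbers $1-x_j^1,\dots,1-x_j^{s-1}$ and $Q/e$, whose logarithms sum to $\ln(1-R)-(X_j^s-1)$. Since $\ln(1-x)\le -x$, the first $s-1$ logs contribute at most $-X_j^{s-1}$. Hence $\ln(Q/e)\ge \ln(1-R)+1-x_j^s$, i.e.\ $Q\ge e\exp(\ln(1-R)+1-x_j^s)$. It remains to show
\[
e\,\exp(c-x_j^s)-e+x_j^s(e-1-\ln(1-R))\ge 0,
\]
with $c:=1+\ln(1-R)\ge 0$, which holds because $R\le 1-1/e$.

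\textbf{Step 4.} Using $\exp(u)\ge 1+u$, the left side is at least $e(c-x_j^s)+x_j^s(e-c)=c(e-x_j^s)$. This is nonnegative since $c\ge0$ and $e=\exp(X_j^s-1)\ge X_j^s\ge x_j^s$ by \Cref{lem:online-exp-props}(a). Dividing by $x_j^s>0$ finishes the proof.

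The main obstacle is Step 3: choosing the right linearization so that the final expression factors as $c(e-x_j^s)$. If the crude bound in Step 4 fails somewhere, I would instead keep the exponential and minimize over $x_j^s$ directly.
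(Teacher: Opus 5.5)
Your Steps 1--4 match the paper's argument for the case $x_j^s>0$, $\fadv_j^s(x_j^s)>0$, and that part is correct. This includes the uncapped prefix via \Cref{lem:online-uncapped-prefix}, the bound $\ln(Q/e)\ge \ln(1-R)+1-x_j^s$, the linearization $\exp(u)\ge 1+u$, and the factorization into $(1+\ln(1-R))(\exp(X_j^s-1)-x_j^s)\ge 0$.

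\textbf{The gap} is the sentence ``assume $\fadv_j^s(x_j^s)>0$, hence $x_j^s>0$''. That implication comes from the $k$-stage algorithm. There $a_j^s(0)=0$ is justified by the strict gap $1-R+P_j^{s-1}\ge g_{m+1}(X_j^{s-1})>g_m(X_j^{s-1})$. In the online algorithm the safety curve $\exp(z-1)$ is the same at every time, so the invariant $1-R+P_j^{s-1}\ge \exp(X_j^{s-1}-1)$ can be tight. In that case $\fadv_j^s(0)$, which is defined as the right limit, equals $\exp(X_j^{s-1}-1)>0$.

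This case is reachable. Take $R=1-1/e$ and a predicted vertex with no earlier load, e.g.\ at the first arrival. Then $\fadv_j^s(0)=1/e$, and the algorithm can set $x_j^s=0$ when the arrival is absorbed by heavier non-predicted neighbours. There your Step 2 identity degenerates to $0=0$ and the final division by $x_j^s$ is impossible. What must actually be shown is $1+P_j^{s-1}-\exp(X_j^{s-1}-1)=R\ge C_\infty(R)$. This is false unless $R=1-1/e$, so an argument is genuinely needed.

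\textbf{Two ways to close it.}

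(i) Follow the paper. An earlier cap would give $P_j^{s-1}\ge R$. Together with tightness and $\exp(X_j^{s-1}-1)\le 1$, this forces $X_j^{s-1}=1$, contradicting $X_j^s<1$. So the product formula applies. Taking logarithms of the tight invariant gives $0=\ln(1-R)+1+\sum_{\ell<s}\bigl(-\ln(1-x_j^\ell)-x_j^\ell\bigr)$, where every term is nonnegative. Hence $R=1-1/e$, and then $C_\infty(R)=R$.

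(ii) Observe that Steps 1--4 never use that $x_j^s$ is the algorithm's actual load. They prove $1+P_j^{s-1}+(z-1)\fadv_j^s(z)\ge C_\infty(R)$ for every $z\in(0,1-X_j^{s-1})$, and the bound is trivial when $\fadv_j^s(z)=0$. Letting $z\to 0^+$ gives the bound at $z=0$, because $\fadv_j^s(0)$ is the right limit. This interval is nonempty precisely because $X_j^{s-1}\le X_j^s<1$.

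In both fixes, the hypothesis $X_j^s<1$ is what handles this endpoint case. Your write-up otherwise barely uses it, which is a sign the case was missed.
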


\begin{proof}
Fix $s$ and $j\in A_s$. Since $f_j^s=\fadv_j^s$, it suffices to prove
\[
    1+P_j^{s-1}+(x_j^s-1)\fadv_j^s(x_j^s)\ge C_\infty(R),
\]
because the left-hand side is exactly $P_j^s+1-\fadv_j^s(x_j^s)$.

First suppose $x_j^s=0$. Recall that $\fadv_j^s(0)$ is defined by the right limit at zero. Unlike its counterpart in the $k$-stage case, this limit can be positive.
If $\fadv_j^s(0)=0$, then the expression is $1+P_j^{s-1}\ge1\ge C_\infty(R)$. It remains to consider the endpoint case $\fadv_j^s(0)>0$. This can only happen if
$1-R+P_j^{s-1}=\exp(X_j^{s-1}-1)$,
and then $\fadv_j^s(0)=\exp(X_j^{s-1}-1)$.

No earlier non-predicted penalty can have been capped. Indeed, after a cap we would have $P_j^{s-1}\ge R$ by \Cref{lem:online-uncapped-prefix}. Together with $1-R+P_j^{s-1}=\exp(X_j^{s-1}-1)\le1$, this would imply $P_j^{s-1}=R$ and $X_j^{s-1}=1$, contradicting $X_j^s<1$.
Thus \Cref{lem:online-uncapped-prefix} gives
\begin{equation}
    1-R+P_j^{s-1}
    =\frac{1-R}{\prod_{\ell=1}^{s-1}(1-x_j^\ell)}.
    \label{eq:online-current-consistency1}
\end{equation}
Taking logarithms in the equality $1-R+P_j^{s-1}=\exp(X_j^{s-1}-1)$ and substituting \eqref{eq:online-current-consistency1} gives
\[
\begin{aligned}
    0
    &=\ln\left(\frac{1-R+P_j^{s-1}}{\exp(X_j^{s-1}-1)}\right) \\
    &=\ln\left(\frac{1-R}{\exp(X_j^{s-1}-1) \prod_{\ell=1}^{s-1}(1-x_j^\ell)}\right) \\
    &=\ln(1-R)+1+
      \sum_{\ell=1}^{s-1}\bigl(-\ln(1-x_j^\ell)-x_j^\ell\bigr).
\end{aligned}
\]
All terms on the right-hand side are nonnegative, since $\ln(1-R)+1\ge0$ and $-\ln(1-y)-y\ge0$ for $y\in[0,1)$. Hence equality can hold only when $R=1-\nicefrac1e$. For this value of $R$, we have $C_\infty(R)=R$, and therefore
\[
    1+P_j^{s-1}-\fadv_j^s(0)
    =1+P_j^{s-1}-\exp(X_j^{s-1}-1)
    =R
    =C_\infty(R).
\]
This proves the case $x_j^s=0$.

Now assume $x_j^s>0$. If $\fadv_j^s(x_j^s)=0$, then the desired inequality is again $1+P_j^{s-1}\ge1\ge C_\infty(R)$. It remains to consider the case $\fadv_j^s(x_j^s)>0$. In this case no earlier non-predicted penalty of $j$ can have been capped. Indeed, if some cap occurred before $s$, then $P_j^{s-1}\ge R$, and the numerator defining $\fadv_j^s(x_j^s)$ would be at most $\exp(X_j^s-1)-1\le0$, a contradiction.

Therefore \Cref{lem:online-uncapped-prefix} gives
\begin{equation}	
    1-R+P_j^{s-1}
    =\frac{1-R}{\prod_{\ell=1}^{s-1}(1-x_j^\ell)}.
    \label{eq:online-current-consistency2}
\end{equation}

Since $\fadv_j^s(x_j^s)>0$, the definition of the predicted penalty gives
\[
    x_j^s \cdot \fadv_j^s(x_j^s)
    =\exp(X_j^s-1)-(1-R+P_j^{s-1}).
\]
Multiplying the desired inequality by $x_j^s>0$ and substituting this identity shows that it is equivalent to
\begin{equation}	
    1-R+P_j^{s-1}-\exp(X_j^s-1)
    +x_j^s\bigl(\exp(X_j^s-1)-1-\ln(1-R)\bigr)
    \ge0. \label{eq:online-current-consistency3}
\end{equation}
We lower-bound the first term $1-R+P_j^{s-1}$ in terms of $\exp(X_j^s-1)$. Using \eqref{eq:online-current-consistency2},
\[
\begin{aligned}
    \ln\left(\frac{1-R+P_j^{s-1}}{\exp(X_j^s-1)}\right)
    &=\ln(1-R)-\sum_{\ell=1}^{s-1}\ln(1-x_j^\ell)-X_j^s+1 \\
    &=\ln(1-R)+1-x_j^s+
      \sum_{\ell=1}^{s-1}\bigl(-\ln(1-x_j^\ell)-x_j^\ell\bigr) \\
    &\ge \ln(1-R)+1-x_j^s.
\end{aligned}
\]
Since $\exp(y)\ge1+y$ for every $y$, this implies
\[
    1-R+P_j^{s-1}
    \ge \exp(X_j^s-1)\exp(\ln(1-R)+1-x_j^s)
    \ge \exp(X_j^s-1)\bigl(2+\ln(1-R)-x_j^s\bigr).
\]
Substituting this lower bound into \eqref{eq:online-current-consistency3} gives
\[
\begin{aligned}
    &1-R+P_j^{s-1}-\exp(X_j^s-1)
    +x_j^s\bigl(\exp(X_j^s-1)-1-\ln(1-R)\bigr) \\
    &\qquad\ge
    \bigl(\exp(X_j^s-1)-x_j^s\bigr)\bigl(1+\ln(1-R)\bigr).
\end{aligned}
\]
By \Cref{lem:online-exp-props}(a), we have $\exp(X_j^s-1)\ge X_j^s\ge x_j^s$, and $1+\ln(1-R)\ge0$. This proves the first claim. The second claim follows because $P_j^T\ge P_j^s$.
\end{proof}

\begin{lemma}[Robustness]\label{lem:online-robustness-certificates}
For every edge $(i,j)\in E_t$, the dual solution satisfies
$\alpha_i+\beta_j\ge Rw_j$.

\end{lemma}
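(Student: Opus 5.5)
The plan is to mirror the proof of \Cref{lem:conditional-rob}, using the case split of \Cref{lem:online-dual-certificates}. There are two certificates to establish. The first is a stage-local robustness bound $P_j^T+1-f_j^t(x_j^t)\ge R$ for every time $t$ with $X_j^t<1$. The second is a prefix bound $P_j^{p}+1-X_j^{p}\ge R$ for every $p$. Once both hold, the three cases of \Cref{lem:online-dual-certificates} give $\alpha_i+\beta_j\ge Rw_j$ directly.

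\textbf{Prefix bound.} By \Cref{lem:online-reserve-invariant} and \Cref{lem:online-exp-props}(a),
\[
1-R+P_j^p\ge \exp(X_j^p-1)\ge X_j^p ,
\]
which rearranges to $P_j^p+1-X_j^p\ge R$.

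\textbf{Case (b), $q_j<t$.} We have $X_j^{q_j}=1$, so the prefix bound with $p=q_j$ yields $P_j^{q_j}\ge R$.

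\textbf{Case (c), $q_j=t$.} We have $x_j^t=1-X_j^{t-1}$, so
\[
P_j^{t-1}+x_j^t=P_j^{t-1}+1-X_j^{t-1}\ge R
\]
by the prefix bound with $p=t-1$.

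\textbf{Case (a), $X_j^t<1$.} We split on whether $j$ is predicted at time $t$.

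If $j\in A_t$, \Cref{lem:online-current-consistency} gives $P_j^T+1-f_j^t(x_j^t)\ge C_\infty(R)$. We then need $C_\infty(R)\ge R$, i.e.\ $1+\ln(1-R)\ge 0$, which holds for $R\le 1-\nicefrac1e$.

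If $j\notin A_t$, then $f_j^t=\frob_j^t$. We argue exactly as in \Cref{Lemma:feasibility_robustness1} and show
\[
1+P_j^{t-1}+(x_j^t-1)\frob_j^t(x_j^t)\ge R .
\]
When the cap is inactive this holds with equality by the definition of $\frob_j^t$. When the cap is active, $\frob_j^t(x_j^t)=1$, and the cap condition $(1-R+P_j^{t-1})/(1-x_j^t)\ge1$ is exactly the claimed inequality. The endpoint $x_j^t=1$ is excluded by $X_j^t<1$. Finally, the left-hand side equals $P_j^t+1-\frob_j^t(x_j^t)$, and $P_j^T\ge P_j^t$ because all penalty increments are nonnegative by \Cref{lem:online-penalties-valid}.

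\textbf{Main obstacle.} The delicate point is the predicted case with $x_j^t=0$, where $\fadv_j^t(0)$ may be positive. This is already handled inside \Cref{lem:online-current-consistency}, so nothing further is needed there. Everything else is a direct transcription of the $k$-stage argument.
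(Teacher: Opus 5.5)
Your proposal is correct and follows essentially the same route as the paper: the same three-case split from \Cref{lem:online-dual-certificates}, \Cref{lem:online-current-consistency} together with $C_\infty(R)\ge R$ for predicted vertices, the cap/no-cap argument for $\frob_j^t$, and the reserve invariant combined with $\exp(z-1)\ge z$ for the saturated cases. The only differences are cosmetic: you package both saturated cases into a single prefix bound, and you rule out $x_j^t=1$ using $X_j^t<1$ where the paper treats that endpoint as a separate trivial case.
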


\begin{proof}
We use the cases of \Cref{lem:online-dual-certificates}. First suppose $X_j^t<1$. If $j\in A_t$, then \Cref{lem:online-current-consistency} gives
\[
    P_j^T+1-f_j^t(x_j^t)\ge C_\infty(R)\ge R.
\]
If $j\notin A_t$, then $f_j^t=\frob_j^t$. We claim that
\begin{equation}
    1+P_j^{t-1}+(x_j^t-1)\frob_j^t(x_j^t)\ge R. \label{eq:online-robustness-certificates}
\end{equation}
This is immediate if $x_j^t=1$. If $x_j^t<1$ and the cap is inactive, then \eqref{eq:online-robustness-certificates} holds with equality by the definition of $\frob_j^t$. If the cap is active, then the definition of the cap gives $1-R+P_j^{t-1}\ge1-x_j^t$, which is equivalent to \eqref{eq:online-robustness-certificates}. Hence
\[
    P_j^T+1-\frob_j^t(x_j^t)
    \ge P_j^t+1-\frob_j^t(x_j^t)
    =1+P_j^{t-1}+(x_j^t-1)\frob_j^t(x_j^t)
    \ge R.
\]
Together with \Cref{lem:online-dual-certificates}(a), this proves the desired inequality in the case $X_j^t<1$.

If $q_j<t$, then $X_j^{q_j}=1$. By \Cref{lem:online-reserve-invariant}, $1-R+P_j^{q_j}\ge \exp(0)=1$, so $P_j^{q_j}\ge R$. \Cref{lem:online-dual-certificates}(b) gives the result.

Finally, if $q_j=t$, then $x_j^t=1-X_j^{t-1}$. By \Cref{lem:online-reserve-invariant} and \Cref{lem:online-exp-props}(a),
\[
    1-R+P_j^{t-1}\ge \exp(X_j^{t-1}-1)\ge X_j^{t-1}.
\]
Thus $P_j^{t-1}+x_j^t=P_j^{t-1}+1-X_j^{t-1}\ge R$, and \Cref{lem:online-dual-certificates}(c) finishes the proof.
\end{proof}

\begin{lemma}[Consistency certificates]\label{lem:online-consistency-certificates}
Let $\widehat M$ be the integral prediction.
For every $(i,j)$ in $\widehat M$, the dual solution satisfies
\(
    \alpha_i+\beta_j\ge C_\infty(R)w_j.
\)
\end{lemma}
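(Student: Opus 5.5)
We mirror the proof of \Cref{lem:conditional-cons} in the online setting, with $T$ playing the role of $k$. Fix a predicted edge $(i,j)\in\widehat M$. It is revealed at some time $s$, so $j\in A_s$. Since the prediction is integral and revealed one arrival at a time, there is no separate residual final stage. Every predicted edge is therefore a current-time edge $(i,j)\in E_s$ with $j\in A_s$. We split into the three cases of \Cref{lem:online-dual-certificates}, according to whether $X_j^s<1$, $q_j<s$, or $q_j=s$.

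\emph{Case $X_j^s<1$.} Here \Cref{lem:online-dual-certificates}(a) gives
\[
\alpha_i+\beta_j\ge w_j\bigl(P_j^T+1-f_j^s(x_j^s)\bigr).
\]
The second statement of \Cref{lem:online-current-consistency}, which applies exactly because $j\in A_s$ and $X_j^s<1$, bounds the bracket from below by $C_\infty(R)$.

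\emph{Case $q_j=s$.} Then $x_j^s=1-X_j^{s-1}$, and \Cref{lem:online-dual-certificates}(c) yields
\[
\alpha_i+\beta_j\ge w_j\bigl(P_j^{s-1}+1-X_j^{s-1}\bigr).
\]
This is at least $C_\infty(R)w_j$ by \Cref{lem:online-prefix-consistency} with $p=s-1$. That lemma allows $p\in\{0,\dots,s-1\}$, so $s=1$ is covered: for $p=0$ the bound reads $1\ge C_\infty(R)$.

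\emph{Case $q_j<s$.} Then $X_j^{q_j}=1$, so
\[
P_j^{q_j}=P_j^{q_j}+1-X_j^{q_j}.
\]
\Cref{lem:online-dual-certificates}(b) combined with \Cref{lem:online-prefix-consistency} at $p=q_j\le s-1$ gives the claim.

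The only point requiring care is confirming that the predicted vertex $j$ was non-predicted at all earlier times, so that the prefix lemma applies. This holds by disjointness of the sets $A_t$ for an integral matching prediction, and \Cref{lem:online-prefix-consistency} already uses it. I expect no real obstacle: all the analytic work sits in \Cref{lem:online-current-consistency,lem:online-prefix-consistency}, and this lemma is a case assembly. Summing over $\widehat M$ afterward, with nonnegative duals and dual objective equal to $\ALG$, gives $C_\infty(R)$-consistency.
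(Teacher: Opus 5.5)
Your proposal is correct and follows the paper's proof essentially verbatim. It uses the same case split from \Cref{lem:online-dual-certificates}, closes the unsaturated case with \Cref{lem:online-current-consistency}, and closes the two saturated cases with \Cref{lem:online-prefix-consistency} at $p=s-1$ and $p=q_j$ (your remark that $p=0$ is covered via $1\ge C_\infty(R)$ is a correct detail the paper leaves implicit).
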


\begin{proof}
Let $(i,j) \in \widehat M$ be a predicted edge at time $s$, so $j\in A_s$. Again use the cases of \Cref{lem:online-dual-certificates}. If $X_j^s<1$, then \Cref{lem:online-current-consistency} gives
\[
    \alpha_i+\beta_j
    \ge w_j\left(P_j^T+1-f_j^s(x_j^s)\right)
    \ge C_\infty(R)w_j.
\]
If $q_j=s$, then $x_j^s=1-X_j^{s-1}$, and \Cref{lem:online-prefix-consistency} with $p=s-1$ gives
\[
    P_j^{s-1}+x_j^s=P_j^{s-1}+1-X_j^{s-1}\ge C_\infty(R).
\]
The claim follows from \Cref{lem:online-dual-certificates}(c). If $q_j<s$, then \Cref{lem:online-prefix-consistency} with $p=q_j$ gives
\[
    P_j^{q_j}=P_j^{q_j}+1-X_j^{q_j}\ge C_\infty(R),
\]
and \Cref{lem:online-dual-certificates}(b) applies.
\end{proof}

\begin{proof}[Proof of \Cref{thm:online}]
Consider any finite input sequence of length $T$. By \Cref{lem:online-robustness-certificates}, $(\alpha/R,\beta/R)$ is feasible for the offline dual whenever $R>0$. Hence weak duality and the equality between the dual objective and $\ALG$ give $\ALG\ge R \cdot \OPT$. The case $R=0$ is trivial.

For consistency, let $
\widehat M$ be the predicted matching. Since all dual variables are nonnegative, \Cref{lem:online-consistency-certificates} implies
\[
    \ALG
    =\sum_{i\in D}\alpha_i+\sum_{j\in S}\beta_j
    \ge \sum_{(i,j)\in 
    \widehat M}(\alpha_i+\beta_j)
    \ge C_\infty(R)\sum_{(i,j)\in 
    \widehat M}w_j
    =C_\infty(R)\cdot \advice.
\]
The extension to online fractional AdWords with fractional predictions follows by applying the virtual-advertiser construction of \Cref{app:AdWords} with the limiting online penalty functions. The dual fitting, load inequalities, and safety invariants are per virtual advertiser, so the proof above applies after the same aggregation and preservation arguments used in \Cref{lem:virtual-preserves-benchmarks} and \Cref{lem:lazy-virtual-implementation}.
\end{proof}

\section{Fractional AdWords with Fractional Predictions}
\label{app:AdWords}

We now extend the algorithm and analysis to fractional AdWords with fractional predictions.

\subsection{Virtual Advertiser Construction}

For the analysis, it is convenient to describe a full virtual instance that is based on the predictions of all stages.
We will later show that the virtual instance can be constructed lazily online, so we do not need to know the predictions of future stages in advance.

For every original advertiser $j$ and every non-final stage
$\tau\in[k-1]$, create a virtual advertiser $(j,\tau)$ with budget
\[
\widehat B_{(j,\tau)} := \hat x^\tau_j B_j,
\qquad\text{where}\qquad
\hat x_j^\tau := \frac{1}{B_j}\sum_{i\in D_\tau:(i,j)\in E_\tau} b_{ij}\hx_{ij}^\tau .
\]
In addition, create a residual-capacity virtual advertiser $(j,k)$ with budget
\[
\widehat B_{(j,k)}
:=
\hat\rho_j B_j,
\qquad\text{where}\qquad
\hat\rho_j:=1-\sum_{\tau=1}^{k-1}\hat x^\tau_j .
\]
This copy represents the full budget of $j$ left after the predicted prefix;
it is not required to equal the amount spent by the final-stage residual prediction.
If $\hB_h=0$, we omit virtual advertiser $h$. 
Every original edge $(i,j)$ is copied to every virtual advertiser $(j,\tau)$, with the same bid $b_{i,(j,\tau)}:=b_{ij}$.
Let $\mathcal H$ denote the set of virtual advertisers, and let $E_s^{\mathrm{virt}}$ denote the copied edge set in stage $s$.

The fractional prediction is lifted as follows. For every non-final stage $s<k$,
the predicted allocation $\hx_{ij}^s$ on an original edge $(i,j)\in E_s$ is placed
on the virtual edge $(i,(j,s))$. This exactly fills the virtual advertiser $(j,s)$ under the prediction, since
\[
    \sum_{i\in D_s:(i,j)\in E_s} b_{ij}\hx_{ij}^s
    =
    \hx_j^s B_j
    =
    \hB_{(j,s)} .
\]
The final residual allocation $\hx^k$ is placed on the residual-capacity edge
$(i,(j,k))$. This is feasible because $\hx^k$ is chosen in the residual instance
after the predicted prefix, and hence
\[
    \sum_{i\in D_k:(i,j)\in E_k} b_{ij}\hx_{ij}^k
    \le
    \left(1-\sum_{s=1}^{k-1}\hx_j^s\right)B_j
    = \hB_{(j,k)} .
\]

For vertex-weighted fractional bipartite matching, this is the same construction
with $b_{ij}=B_j=w_j$.  Equivalently, one may ignore budgets and split each
supply vertex $j$ into virtual copies of capacities $\hx_j^1,\ldots,\hx_j^{k-1}$
and residual capacity $1-\sum_{s=1}^{k-1}\hx_j^s$, all carrying value $w_j$ per
unit. The aggregation and splitting arguments below then become the usual
capacity-preservation arguments for fractional matchings.

\begin{lemma}
\label{lem:virtual-preserves-benchmarks}
Given a fractional prediction $\hx$,
let $\opt_{\mathrm{virt}}$ and $\advice_{\mathrm{virt}}$ denote the optimal objective value and predicted objective value of the virtual instance. Then
$\opt=\opt_{\mathrm{virt}}$ and $\advice=\advice_{\mathrm{virt}}$.
\end{lemma}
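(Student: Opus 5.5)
The argument rests on two maps between feasible allocations of the original instance and of the virtual instance, both preserving the objective. The optimum equality follows from applying them in both directions. The prediction equality follows by checking that the lifted prediction has the same value as $\hx$, with the final residual part interpreted consistently.

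\emph{Aggregation.} Given a feasible virtual allocation $(x_{ih})$, define $x_{ij}:=\sum_{\tau\in[k]} x_{i,(j,\tau)}$.
\begin{itemize}
\item Demand constraints are unchanged, since each demand's total over all copies is preserved.
\item Budget feasibility holds because
\[
\sum_i b_{ij}x_{ij}=\sum_\tau\sum_i b_{ij}x_{i,(j,\tau)}\le\sum_\tau \hB_{(j,\tau)}=\Bigl(\sum_{\tau<k}\hx_j^\tau+\hat\rho_j\Bigr)B_j=B_j.
\]
\item Every copied edge carries the same bid, so the objective is unchanged.
\end{itemize}

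\emph{Splitting.} Given a feasible original allocation $x$, spend $\sigma_j:=\sum_i b_{ij}x_{ij}\le B_j$ on $j$. Split it proportionally to budgets: $x_{i,(j,\tau)}:=x_{ij}\,\hB_{(j,\tau)}/B_j$. This uses spend $\sigma_j\hB_{(j,\tau)}/B_j\le\hB_{(j,\tau)}$ on each copy. Omitted copies have zero budget, and the remaining fractions $\hB_{(j,\tau)}/B_j$ sum to $1$, so demand totals and the objective are preserved. The two maps give $\opt\le\opt_{\mathrm{virt}}$ and $\opt_{\mathrm{virt}}\le\opt$. Equivalently, both are feasible-region-preserving projections of the same LP.

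\emph{Prediction.} Here I would use the lifting already described.
\begin{itemize}
\item The stage-$s<k$ prediction on $(i,j)$ is placed on $(i,(j,s))$. It exactly fills that copy and contributes the same bids.
\item The final component $\hx^k$ is placed on $(j,k)$ and is feasible by the displayed residual inequality.
\end{itemize}
So the lifted prediction has value $\sum_{s}\sum b_{ij}\hx^s_{ij}=\advice$. The one subtle point is that in the virtual model the final-stage predicted part is defined as an \emph{optimal residual allocation} after the non-final lifted prediction, so I must show this residual optimum equals the original one.

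This is the main obstacle, and the argument goes as follows. After the lifted non-final prediction, every copy $(j,s)$ with $s<k$ is saturated, and only $(j,k)$ remains, with budget $\hat\rho_jB_j$. That is exactly the remaining budget of $j$ in the original residual instance after $\hx^1,\dots,\hx^{k-1}$. Each copy $(j,k)$ is adjacent to the same stage-$k$ demands with the same bids as $j$. Hence the two residual LPs are literally identical after renaming $j\leftrightarrow(j,k)$, and their optima coincide. Therefore $\advice_{\mathrm{virt}}=\advice$. I would also note that omitted zero-budget copies affect neither argument.
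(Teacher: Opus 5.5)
Your proof is correct and follows the paper's argument. It uses aggregation and splitting of allocations to get $\opt=\opt_{\mathrm{virt}}$, and it shows the residual LPs coincide once the non-final copies are saturated, giving $\advice=\advice_{\mathrm{virt}}$. The only difference is that you split each advertiser's allocation proportionally to the virtual budgets instead of greedily filling them; this is slightly more explicit but otherwise equivalent.
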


\begin{proof}
First, any feasible allocation $z$ in the virtual instance maps to a feasible allocation $x$ in the original instance by aggregating over all virtual copies of each advertiser:
\[
    x_{ij}:=\sum_{\tau=1}^k z_{i,(j,\tau)} .
\]
The demand constraints are preserved because the virtual demand constraint already sums over all virtual advertisers. Moreover, for every original advertiser $j$,
\[
    \sum_{i \in D} b_{ij}x_{ij}
    =
    \sum_{\tau=1}^k\sum_{i \in D} b_{i,(j,\tau)}z_{i,(j,\tau)}
    \le
    \sum_{\tau=1}^k \hB_{(j,\tau)}
    =
    B_j .
\]
The objective value is unchanged by aggregation. Hence $\opt\ge \opt_{\mathrm{virt}}$.

Conversely, take any feasible allocation $x$ in the original instance. For every original advertiser $j$, define the spend assigned to edge $(i,j)$ as $p_{ij}:=b_{ij}x_{ij}$. Since
\[
    \sum_{i \in D} p_{ij}\le B_j=\sum_{\tau=1}^k \hB_{(j,\tau)},
\]
we can split these spends among the virtual advertisers of $j$: choose nonnegative numbers $p_{ij}^\tau$ such that $\sum_{\tau=1}^k p_{ij}^\tau=p_{ij}$ and $\sum_{i \in D} p_{ij}^\tau\le \hB_{(j,\tau)}$ for every $\tau\in[k]$.
This can be done greedily by filling the virtual budgets of $j$. Now set
\[
    z_{i,(j,\tau)}:=\frac{p_{ij}^\tau}{b_{ij}} .
\]
The demand constraints are preserved because
\[
    \sum_{j,\tau} z_{i,(j,\tau)}
    =
    \sum_j \frac{1}{b_{ij}}\sum_\tau p_{ij}^\tau
    =
    \sum_j x_{ij}
    \le 1,
\]
and the virtual budget constraints hold by construction. The objective value is unchanged. Hence $\opt_{\mathrm{virt}}\ge \opt$, and therefore $\opt_{\mathrm{virt}}=\opt$.

The prediction benchmark is preserved as well. After the lifted non-final
prediction is followed, each non-final copy $(j,s)$ with $s<k$ is saturated, and
therefore the only remaining virtual capacity corresponding to $j$ is the
residual-capacity copy $(j,k)$. Its budget is exactly
\[
    \left(1-\sum_{s=1}^{k-1}\hx_j^s\right)B_j,
\]
which is the budget left for $j$ in the original residual instance. Since the
final-stage original edges are copied to $(j,k)$ with the same bids, the final
residual LP in the virtual instance is identical to the final residual LP in the
original instance. Therefore their optimal residual completions have the same
value, and $\advice_{\mathrm{virt}}=\advice$.
\end{proof}

\subsection{Algorithm on the Virtual Instance}

Fix a virtual advertiser $h=(j,\tau)\in\mathcal H$. 
For every stage $s\in[k]$, define the normalized stage-$s$ load of
virtual advertiser $h$ by
\[
    x_h^s
    :=
    \frac{1}{\widehat B_h}
    \sum_{i\in D_s:(i,h)\in E_s^{\mathrm{virt}}} b_{ih}x_{ih}^s .
\]
For $q \in [k]$, define
$X_h^q:=\sum_{s=1}^q x_h^s$
and
$P_h^q:=\sum_{s=1}^q x_h^sf_h^s(x_h^s)$.
We also set $X_h^0=P_h^0=0$.

In a non-final stage $s$, the virtual advertiser $h=(j,\tau)$ is predicted if and only if $\tau=s$. Its penalty is defined exactly as in \Cref{sec:penalty-functions}. That is, on the feasible interval $z\in[0,1-X_h^{s-1}]$, define for all $z \in [0,1)$
\[
    r_h^s(z):=\min\left\{1,\frac{1-R+P_h^{s-1}}{1-z}\right\}
\]
and $r_h^s(1):=1$
and for all $z \in (0,1]$
\[
    a_h^s(z):=\max\left\{0,\frac{R-1+\auxfun_{k-s}(X_h^{s-1}+z)-P_h^{s-1}}{z}\right\}
\]
and $a_h^s(0):=0$.
Finally, set for all $z \in [0,1-X_h^{s-1}]$
\[
    f_h^s(z):=
    \begin{cases}
        a_h^s(z), & \text{if } h=(j,\tau) \text{ with } \tau=s,\\
        r_h^s(z), & \text{otherwise.}
    \end{cases}
\]
For the final stage $k$, we set $f^k_h := 0$ for all $h\in\mathcal H$. 

In each stage $s \in [k]$, after $D_s$, $E_s$, and the stage-$s$ prediction $\hx^s$ are revealed, the algorithm solves
\begin{alignat}{3}
    (\CP_s^{\mathrm{Ad}}) \quad \max \quad
    &\sum_{h\in\mathcal H}\hB_h\left(x_h^s-\int_0^{x_h^s}f_h^s(t)\,dt\right) \notag \label{eq:AdWords-stage-cp}\\
    \text{s.t.}\quad
    &\sum_{h:(i,h)\in E_s^{\mathrm{virt}}}x_{ih}^s\le 1 &&\qquad \forall i\in D_s \notag\\
    &x_h^s=\frac{1}{\hB_h}\sum_{i\in D_s:(i,h)\in E_s^{\mathrm{virt}}} b_{ih}x_{ih}^s &&\qquad \forall h\in\mathcal H \notag\\
    &x_h^s\le 1-X_h^{s-1} &&\qquad \forall h\in\mathcal H \notag\\
    &x_{ih}^s\ge 0 &&\qquad \forall (i,h)\in E_s^{\mathrm{virt}} . \notag
\end{alignat}
The equality defining $x_h^s$ is included to make the normalized load explicit. In the KKT derivation below, we use the equivalent formulation obtained by substituting this equality into the objective and the capacity constraint; thus the optimization variables are the edge allocations $x_{ih}^s$.

The allocation returned in the original AdWords instance is obtained by aggregating the virtual allocations over all copies of each original advertiser. This aggregation preserves feasibility and value by the same argument as in \Cref{lem:virtual-preserves-benchmarks}.

\begin{lemma}[AdWords KKT conditions]
\label{lem:AdWords:kkt}
Let $x^s$ be an optimal solution of $(\CP_s^{\mathrm{Ad}})$.
For every stage $s\in[k]$, there exist nonnegative multipliers
$\lambda_i^s$ for $i\in D_s$, $\theta_h^s$ for $h\in\mathcal H$, and
$\gamma_{ih}^s$ for $(i,h)\in E_s^{\mathrm{virt}}$ such that, for every
$(i,h)\in E_s^{\mathrm{virt}}$,
\begin{equation}\label{eq:AdWords-kkt-stationarity}
    b_{ih}\bigl(1-f_h^s(x_h^s)\bigr)
    =
    \lambda_i^s+\frac{b_{ih}}{\widehat B_h}\theta_h^s-\gamma_{ih}^s .
\end{equation}
Moreover, these multipliers satisfy complementary slackness:
\begin{align}
    \lambda_i^s\biggl(1-
        \sum_{h:(i,h)\in E_s^{\mathrm{virt}}}x_{ih}^s
    \biggr)&=0 &&\forall i\in D_s, \label{eq:AdWords-kkt-demand-cs}\\
    \theta_h^s(1-X_h^{s-1}-x_h^s)&=0 &&\forall h\in\mathcal H, \label{eq:AdWords-kkt-budget-cs}\\
    \gamma_{ih}^s x_{ih}^s&=0 &&\forall (i,h)\in E_s^{\mathrm{virt}}. \label{eq:AdWords-kkt-nonneg-cs}
\end{align}
In particular,
\begin{enumerate}[label=(\alph*),nosep]
    \item if $x_{ih}^s>0$, then
    $\lambda_i^s\le b_{ih}\bigl(1-f_h^s(x_h^s)\bigr)$;
    \item if $\sum_{h:(i,h)\in E_s^{\mathrm{virt}}}x_{ih}^s<1$, then
    $\lambda_i^s=0$;
    \item if $X_h^s<1$, then, for every $(i,h)\in E_s^{\mathrm{virt}}$,
    $\lambda_i^s\ge b_{ih}\bigl(1-f_h^s(x_h^s)\bigr)$.
\end{enumerate}
\end{lemma}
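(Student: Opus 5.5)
The plan is to mirror the proof of \Cref{lem:kkt}: write down the KKT system for the reduced stage program and read off (a)--(c).

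First, substitute the normalized-load equality into $(\CP_s^{\mathrm{Ad}})$. The only variables are then the edge allocations $x_{ih}^s\ge 0$. The objective becomes $\sum_h \hB_h G_h^s\bigl(\sum_i (b_{ih}/\hB_h)x_{ih}^s\bigr)$, where $G_h^s(y)=y-\int_0^y f_h^s(t)\,dt$. The constraints are the linear demand constraints and the linear budget constraints $\sum_i (b_{ih}/\hB_h)x_{ih}^s\le 1-X_h^{s-1}$.

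Each $f_h^s$ is continuous, non-decreasing and $[0,1]$-valued. For the virtual advertisers this is \Cref{lem:penalties-valid}, since the penalties are literally the same functions of $(X_h^{s-1},P_h^{s-1})$. Hence each $G_h^s$ is concave and $C^1$, and its composition with a linear map is concave. The feasible region is a nonempty polytope (it contains $0$), so a maximizer exists. Because all constraints are affine, Slater-type qualification is not needed, and the KKT conditions are necessary at the optimum $x^s$.

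By the chain rule, the partial derivative of the objective with respect to $x_{ih}^s$ is $\hB_h\cdot (b_{ih}/\hB_h)(1-f_h^s(x_h^s))=b_{ih}(1-f_h^s(x_h^s))$. The gradient of the budget constraint of $h$ in the direction $x_{ih}^s$ is $b_{ih}/\hB_h$, and the gradient of the demand constraint of $i$ is $1$. Introducing multipliers $\lambda_i^s,\theta_h^s,\gamma_{ih}^s\ge0$ for the demand, budget and nonnegativity constraints, stationarity is exactly \eqref{eq:AdWords-kkt-stationarity}. Complementary slackness gives \eqref{eq:AdWords-kkt-demand-cs}--\eqref{eq:AdWords-kkt-nonneg-cs}.

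The three consequences then follow as in \Cref{lem:kkt}.
\begin{enumerate}[label=(\alph*),nosep]
\item If $x_{ih}^s>0$, then $\gamma_{ih}^s=0$. Stationarity gives $b_{ih}(1-f_h^s(x_h^s))=\lambda_i^s+(b_{ih}/\hB_h)\theta_h^s\ge\lambda_i^s$.
\item This is just \eqref{eq:AdWords-kkt-demand-cs}.
\item If $X_h^s<1$, the budget constraint is slack, so $\theta_h^s=0$. Stationarity gives $b_{ih}(1-f_h^s(x_h^s))=\lambda_i^s-\gamma_{ih}^s\le\lambda_i^s$.
\end{enumerate}

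The main point requiring care is the differentiability and qualification step. The penalty functions are only defined on $[0,1-X_h^{s-1}]$, so derivatives at the endpoint are one-sided. I would handle this by noting that the objective is a concave $C^1$ function on a neighborhood of the polytope relative to its domain. Alternatively, extend $f_h^s$ constantly beyond the endpoint, which is harmless because the budget constraint already enforces the domain. The standard KKT theorem for concave programs with linear constraints~\cite{BV2014} then applies. One further edge case is the omitted virtual advertisers with $\hB_h=0$, which never appear, so the divisions by $\hB_h$ are safe.
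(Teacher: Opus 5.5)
Your proposal is correct and follows essentially the same route as the paper: eliminate the normalized-load variables, use concavity and differentiability of the reduced objective, invoke the KKT conditions for a concave program with affine constraints, compute the gradient $b_{ih}(1-f_h^s(x_h^s))$ by the chain rule, and read off (a)--(c) from stationarity and complementary slackness. Your additional remarks go beyond what the paper writes out: you handle the one-sided derivatives at the domain endpoint by extending $f_h^s$ constantly, and you note that omitting zero-budget virtual advertisers makes the divisions by $\widehat B_h$ safe.
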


\begin{proof}
Fix a stage $s$. We derive the KKT conditions after eliminating the auxiliary
load variables $x_h^s$. For each $h\in\mathcal H$, define
\[
    L_h(x):=\frac{1}{\widehat B_h}
    \sum_{i\in D_s:(i,h)\in E_s^{\mathrm{virt}}} b_{ih}x_{ih} .
\]
The reduced problem is
\begin{align*}
\max \quad
&\sum_{h\in\mathcal H}\widehat B_h
\left(L_h(x)-\int_0^{L_h(x)}f_h^s(t)\,dt\right) \\
\text{s.t.}\quad
&\sum_{h:(i,h)\in E_s^{\mathrm{virt}}}x_{ih}\le 1
    &&\forall i\in D_s,\\
&L_h(x)\le 1-X_h^{s-1}
    &&\forall h\in\mathcal H,\\
&x_{ih}\ge 0
    &&\forall (i,h)\in E_s^{\mathrm{virt}}.
\end{align*}
This problem is equivalent to $(\CP_s^{\mathrm{Ad}})$, and zero-budget virtual
advertisers have been omitted, so $\widehat B_h>0$ for every $h\in\mathcal H$.
By the same argument as in \Cref{lem:penalties-valid}, the penalty functions
on the virtual advertisers are continuous and non-decreasing. Therefore the reduced objective is differentiable and concave,
and its partial derivative with respect to $x_{ih}$, evaluated at the optimum
$x^s$, is
\[
    \widehat B_h\bigl(1-f_h^s(x_h^s)\bigr)\cdot \frac{b_{ih}}{\widehat B_h}
    = b_{ih}\bigl(1-f_h^s(x_h^s)\bigr).
\]
Since the feasible region is a polytope, the first-order optimality conditions
for this concave maximization problem with affine constraints give nonnegative
multipliers $\lambda^s,\theta^s,\gamma^s$ for the demand constraints, normalized
capacity constraints, and nonnegativity constraints, respectively.
Writing the Lagrangian as
\begin{align*}
\mathcal L(x,\lambda,\theta,\gamma)
&=
\sum_{h\in\mathcal H}\widehat B_h
\left(L_h(x)-\int_0^{L_h(x)}f_h^s(t)\,dt\right) \\
&\quad +\sum_{i\in D_s}\lambda_i^s
\left(1-\sum_{h:(i,h)\in E_s^{\mathrm{virt}}}x_{ih}\right) \\
&\quad +\sum_{h\in\mathcal H}\theta_h^s(1-X_h^{s-1}-L_h(x))
    +\sum_{(i,h)\in E_s^{\mathrm{virt}}}\gamma_{ih}^s x_{ih},
\end{align*}
stationarity with respect to $x_{ih}$ gives \eqref{eq:AdWords-kkt-stationarity}.
The complementary slackness conditions give
\eqref{eq:AdWords-kkt-demand-cs}--\eqref{eq:AdWords-kkt-nonneg-cs}, using
$L_h(x^s)=x_h^s$.

It remains only to derive the three simplified consequences. If $x_{ih}^s>0$,
then $\gamma_{ih}^s=0$, so \eqref{eq:AdWords-kkt-stationarity} gives
\[
    b_{ih}\bigl(1-f_h^s(x_h^s)\bigr)
    =\lambda_i^s+\frac{b_{ih}}{\widehat B_h}\theta_h^s
    \ge \lambda_i^s.
\]
If the demand constraint for $i$ is slack, then \eqref{eq:AdWords-kkt-demand-cs}
gives $\lambda_i^s=0$. Finally, if $X_h^s<1$, then
$1-X_h^{s-1}-x_h^s>0$, so \eqref{eq:AdWords-kkt-budget-cs} gives
$\theta_h^s=0$. Thus \eqref{eq:AdWords-kkt-stationarity} gives
\[
    \lambda_i^s
    =b_{ih}\bigl(1-f_h^s(x_h^s)\bigr)+\gamma_{ih}^s
    \ge b_{ih}\bigl(1-f_h^s(x_h^s)\bigr).
\]
\end{proof}

\subsection{Load Inequalities for Virtual Advertisers}

For every virtual advertiser $h$, let $q_h$ denote the first stage $q\in[k]$ such that $X_h^q=1$; if this never happens, set $q_h=\infty$.

\begin{lemma}[Inherited load inequalities]
\label{obs:AdWords-load}
The inequalities from \Cref{Lemma:feasibility_robustness1,Lemma:feasibility_robustness2,Lemma:feasibility_consistency1,Lemma:feasibility_consistency2} apply
to every virtual advertiser $h\in\mathcal H$.
In particular, for every $h\in\mathcal H$ and every $s\in[k-1]$,
\[
    P_h^{k-1}+1-f_h^s(x_h^s)\ge R,
\]
and
\[
    P_h^{k-1}+1-X_h^{k-1}\ge R.
\]
Moreover, if $h=(j,\tau)$ with $\tau < k$, then
\[
    P_h^{k-1}+1-f_h^\tau(x_h^\tau)\ge C_k(R) \ ,
\]
and if $\tau = k$, then
\[
    P_h^{k-1}+1-X_h^{k-1}\ge C_k(R) \ ,
\]
Further, for every $h=(j,\tau)$ and $q \in \{0,\ldots,\tau-1\}$,
\[
    P_h^q+1-X_h^q\ge C_k(R).
\]
\end{lemma}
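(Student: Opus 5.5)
The plan is to observe that every proof in \Cref{sec:proofs-main-theorem} and \Cref{app:proofs-conditions} is a statement about a single supply vertex. Each one uses only three kinds of information: the scalar sequences $(x_j^s)_{s}$, $(X_j^s)_s$, $(P_j^s)_s$; the penalty definitions $r_j^s,a_j^s$; and the combinatorial fact that $j$ is predicted in at most one stage. None of these proofs uses the graph structure, the weights, or the KKT conditions. So I would fix $h=(j,\tau)\in\mathcal H$ and verify that its normalized load sequence $x_h^s\in[0,1-X_h^{s-1}]$ satisfies exactly these hypotheses. First, $P_h^s$ is defined by the same recursion $P_h^s=P_h^{s-1}+x_h^sf_h^s(x_h^s)$ with $P_h^0=X_h^0=0$. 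Second, $f_h^s$ is given by the same formulas. Third, $h$ plays the role of a vertex in $A_\tau$ when $\tau<k$, and is never predicted in a non-final stage when $\tau=k$. Consequently the invariant \Cref{lem:g-invariant}, the validity lemmas, and \Cref{claim:uncapped-nonpredicted-product} transfer verbatim with $j$ replaced by $h$ and ``$j\in A_s$'' replaced by ``$\tau=s$''.

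With this dictionary, the robustness inequalities follow directly. Applying \Cref{Lemma:feasibility_robustness1} to $h$ gives $P_h^{k-1}+1-f_h^s(x_h^s)\ge R$ for all $s\in[k-1]$. That proof splits according to whether $h$ is predicted in stage $s$; in the predicted case it calls the consistency bound. Applying \Cref{Lemma:feasibility_robustness2}, via the invariant at $s=k-1$ and $g_1(z)=z$, gives $P_h^{k-1}+1-X_h^{k-1}\ge R$.

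For $\tau<k$, the current-stage consistency bound is \Cref{Lemma:feasibility_consistency2} with $s=\tau$. Its proof uses only that all stages $d<\tau$ are non-predicted for $h$, so that $f_h^d=r_h^d$ there. It also uses the monotonicity of $P_h^d$, the product formula, and the AM-GM step, all of which hold for $h$. The prefix bound $P_h^q+1-X_h^q\ge C_k(R)$ for $q\in\{1,\dots,\tau-1\}$ is \Cref{Lemma:feasibility_consistency1} with $s=\tau$. Again, the only input is that $h$ is not predicted before $\tau$. The case $q=0$ is trivial, since $1\ge C_k(R)$.

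For $\tau=k$, the residual-capacity copy is never predicted in stages $1,\dots,k-1$, so $f_h^d=r_h^d$ for all $d\le k-1$. The AM-GM argument of \Cref{Lemma:feasibility_consistency1}, with $p=k-1$, then gives $P_h^{k-1}+1-X_h^{k-1}\ge C_k(R)$. That argument never required an actual predicted edge at stage $s$, only the absence of predictions before it, so it applies here as well.

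The main point to be careful about is that the earlier proofs are genuinely dimension-free. The budget $\hB_h$ and the bids $b_{ih}$ enter only through the normalized loads $x_h^s$, and never appear in the scalar inequalities. The feasible-interval and continuity arguments use only $x_h^s\le 1-X_h^{s-1}$, which is exactly the normalized capacity constraint of $(\CP_s^{\mathrm{Ad}})$. I would also check that omitting zero-budget copies causes no issue, since each inequality is stated only for $h\in\mathcal H$.
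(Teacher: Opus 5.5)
Your proposal is correct and takes the same approach as the paper. The paper also treats the load inequalities as per-advertiser statements that depend only on the normalized load sequence and a unique designated prediction stage. It encodes this via the disjoint classes $A_s^{\mathrm{virt}}=\{(j,s)\}$, including $s=k$ for the residual copies, so that the lemmas of \Cref{sec:dual_fitting} apply verbatim. You simply spell out each instantiation, including the trivial case $q=0$ and the $\tau=k$ case via \Cref{Lemma:feasibility_consistency1} with $p=k-1$.
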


\begin{proof}
The proofs of the load inequalities in \Cref{Lemma:feasibility_robustness1,Lemma:feasibility_robustness2,Lemma:feasibility_consistency1,Lemma:feasibility_consistency2}
are per advertiser: they use only the normalized load sequence
$(x_h^1,\ldots,x_h^{k-1})$ and the unique stage in which $h$ is predicted.
For the virtual instance, use the designated prediction-stage classes
\[
    A_s^{\mathrm{virt}}
    :=
    \{(j,s): j\in S,\ \widehat B_{(j,s)}>0\},
\]
for all $s \in [k]$.
These classes are pairwise disjoint and cover all virtual advertisers. The
lifted prediction in stage $s$ is supported only on $A_s^{\mathrm{virt}}$; for
$s=k$ the residual-capacity copy may be only partially used, but the load
inequalities do not require it to be filled. Therefore the inequalities of
\Cref{sec:dual_fitting} apply directly.
\end{proof}

\subsection{Dual Fitting on the Virtual Instance}

The offline AdWords LP on the virtual instance can be written as follows.
\begin{alignat*}{3}
    (\LP^{\mathrm{Ad}}) \quad \max\quad &\sum_{(i,h)\in E^{\mathrm{virt}}} b_{ih}z_{ih} \notag \\
    \text{s.t.}\quad
    &\sum_{h:(i,h)\in E^{\mathrm{virt}}} z_{ih}\le 1
        &&\qquad \forall i\in D,\\
    &\sum_{i:(i,h)\in E^{\mathrm{virt}}}\frac{b_{ih}}{\widehat B_h}z_{ih}\le 1
        &&\qquad \forall h\in\mathcal H,\\
    &z_{ih}\ge 0
        &&\qquad \forall (i,h)\in E^{\mathrm{virt}} .
\end{alignat*}
Its dual is the following linear program.
\begin{alignat*}{3}
    (\DP^{\mathrm{Ad}}) \quad  \min\quad &\sum_{i\in D}\alpha_i+\sum_{h\in\mathcal H}\beta_h \notag \\
    \text{s.t.}\quad
    &\alpha_i+\frac{b_{ih}}{\widehat B_h}\beta_h\ge b_{ih}
        &&\qquad \forall (i,h)\in E^{\mathrm{virt}},\\
    &\alpha_i,\beta_h\ge 0
        &&\qquad \forall i\in D,\ h\in\mathcal H .
\end{alignat*}

We next define an allocation of dual variables $(\alpha,\beta)$
of $(\DP^{\mathrm{Ad}})$. For every stage $s \in [k]$, let $(\lambda_i^s)_{i \in D_s}$
be given by \Cref{lem:AdWords:kkt}.
\begin{itemize}
	\item
For every $i\in D_s$, define
\(
    \alpha_i:=\lambda_i^s .
\)
\item
For every virtual advertiser $h\in\mathcal H$, define
\(
    \beta_h
    :=
    \sum_{s=1}^{k}
    \left(
        \widehat B_h x_h^s
        -
        \sum_{i\in D_s}\lambda_i^s x_{ih}^s
    \right).
\)
\end{itemize}

\begin{lemma}[Dual objective value]
\label{lem:AdWords-dual-objective}
The dual objective value of the defined assignment equals the objective value
of the algorithm on the virtual instance, that is, 
$\sum_{i\in D}\alpha_i+\sum_{h\in\mathcal H}\beta_h = \alg_{\mathrm{virt}}$. 
\end{lemma}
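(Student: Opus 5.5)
The plan follows the proof of \Cref{lem:dual_objective}, with the vertex weight $w_j$ replaced by the virtual budget $\widehat B_h$ and unit loads replaced by bid-weighted loads. I first identify $\alg_{\mathrm{virt}}$. By definition of the normalized loads, the virtual algorithm's value is
\[
\sum_{s=1}^k\sum_{(i,h)\in E_s^{\mathrm{virt}}} b_{ih}x_{ih}^s=\sum_{h\in\mathcal H}\widehat B_h X_h^k ,
\]
since $\widehat B_h x_h^s=\sum_{i\in D_s} b_{ih}x_{ih}^s$ for every $h$ and $s$. This identity holds for each stage separately, so it also holds after summing over $s$.

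Next, I rewrite each $\beta_h$. From the definition,
\[
\beta_h+\sum_{s=1}^k\sum_{i\in D_s}\lambda_i^s x_{ih}^s=\widehat B_h X_h^k
\]
holds for every $h\in\mathcal H$. Summing over $h$ gives
\[
\sum_h\beta_h+\sum_{s}\sum_{i\in D_s}\lambda_i^s\sum_{h:(i,h)\in E_s^{\mathrm{virt}}}x_{ih}^s=\alg_{\mathrm{virt}} .
\]
The double sum can be exchanged freely because all sums are finite. This step needs no sign conditions; it is only bookkeeping.

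The remaining step is to show that $\sum_{i\in D}\alpha_i=\sum_s\sum_{i\in D_s}\lambda_i^s x_i^s$, where $x_i^s:=\sum_{h}x_{ih}^s$. This is where the KKT structure enters. I use \Cref{lem:AdWords:kkt}(b), or equivalently the complementary slackness condition \eqref{eq:AdWords-kkt-demand-cs}. If $x_i^s<1$, then $\lambda_i^s=0$, so $\alpha_i=\lambda_i^s x_i^s=0$. Otherwise $x_i^s=1$ and $\alpha_i=\lambda_i^s=\lambda_i^s x_i^s$. Each demand vertex $i$ lies in exactly one $D_s$, so summing over $i\in D$ gives the claimed equality. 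Combining this with the previous display proves the lemma.

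I expect no real obstacle here. The only point that needs care is that the multipliers $\lambda_i^s$ multiply the raw edge allocations $x_{ih}^s$, not bid-weighted quantities. This is consistent, because the demand constraint in $(\CP_s^{\mathrm{Ad}})$ is stated in terms of unweighted $x_{ih}^s$. The second point is that zero-budget advertisers are omitted, so every $\widehat B_h>0$ and the normalization is well defined.
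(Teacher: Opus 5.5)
Your proposal is correct and follows the paper's proof: you combine the per-advertiser identity $\beta_h+\sum_{s}\sum_{i\in D_s}\alpha_i x_{ih}^s=\widehat B_h X_h^k$ with complementary slackness on the demand constraints, which gives $\alpha_i=\alpha_i\sum_h x_{ih}^s$, and then sum over all advertisers and demand vertices. The one addition is that you state explicitly the identity $\alg_{\mathrm{virt}}=\sum_{h}\widehat B_h X_h^k$, which the paper uses without spelling it out.
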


\begin{proof}
For every $h\in\mathcal H$, the definition of $\beta_h$ gives
\[
    \beta_h+
    \sum_{s=1}^{k}\sum_{i\in D_s}\alpha_i x_{ih}^s
    =
    \widehat B_h X_h^k .
\]
By \Cref{lem:AdWords:kkt}(b), if the demand constraint of $i$ is slack then
$\alpha_i=0$, while if it is tight then
$\sum_{h:(i,h)\in E_s^{\mathrm{virt}}}x_{ih}^s=1$. Hence, for every
$i\in D_s$,
\[
    \alpha_i
    =
    \alpha_i\sum_{h:(i,h)\in E_s^{\mathrm{virt}}}x_{ih}^s .
\]
Therefore
\begin{align*}
    \sum_{h\in\mathcal H}\beta_h+\sum_{i\in D}\alpha_i
    &=
    \sum_{h\in\mathcal H}\beta_h+
    \sum_{s=1}^{k}\sum_{i\in D_s}\sum_{h:(i,h)\in E_s^{\mathrm{virt}}}\alpha_i x_{ih}^s \\
    &=
    \sum_{h\in\mathcal H}\left(
        \beta_h+
        \sum_{s=1}^{k}\sum_{i\in D_s}\alpha_i x_{ih}^s
    \right) \\
    &=
    \sum_{h\in\mathcal H}\widehat B_h X_h^k
    =
    \ALG_{\mathrm{virt}}.
\end{align*}
\end{proof}

\begin{lemma}[Dual Properties]
\label{lem:AdWords-certificate}
For every $h\in\mathcal H$, $\beta_h\ge \widehat B_h P_h^{k-1}$.
For every edge $(i,h)\in E_s^{\mathrm{virt}}$, one of the following
cases applies:
\begin{enumerate}[label=(\alph*),nosep]
    \item If $X_h^s<1$, then
    \[
        \alpha_i+\frac{b_{ih}}{\widehat B_h}\beta_h
        \ge
        b_{ih}\left(P_h^{k-1}+1-f_h^s(x_h^s)\right).
    \]
    \item If $q_h<s$, then
    \[
        \alpha_i+\frac{b_{ih}}{\widehat B_h}\beta_h
        \ge
        b_{ih}P_h^{q_h}.
    \]
    \item If $q_h=s$, then
    \[
        \alpha_i+\frac{b_{ih}}{\widehat B_h}\beta_h
        \ge
        b_{ih}\left(P_h^{s-1}+x_h^s\right).
    \]
\end{enumerate}
In particular, for every final-stage edge $(i,h)\in E_k^{\mathrm{virt}}$,
it holds that
\[
    \alpha_i+\frac{b_{ih}}{\widehat B_h}\beta_h
    \ge
    b_{ih}\left(P_h^{k-1}+1-X_h^{k-1}\right).
\]
\end{lemma}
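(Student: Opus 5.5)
The argument follows the proof of \Cref{lem:intermed_feasibility} almost verbatim, replacing $w_j$ by the bid-scaled quantities and using \Cref{lem:AdWords:kkt} in place of \Cref{lem:kkt}. For the first claim, whenever $x_{ih}^s>0$ we apply \Cref{lem:AdWords:kkt}(a) to get $\lambda_i^s\le b_{ih}(1-f_h^s(x_h^s))$. Summing over $i$ and using $\widehat B_h x_h^s=\sum_i b_{ih}x_{ih}^s$ gives
\[
\widehat B_h x_h^s-\sum_{i\in D_s}\lambda_i^s x_{ih}^s\ \ge\ \widehat B_h x_h^s f_h^s(x_h^s).
\]
Summing over $s$ and using $f_h^k=0$ then yields $\beta_h\ge \widehat B_h P_h^k=\widehat B_h P_h^{k-1}$. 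As a byproduct, every stage contribution to $\beta_h$ is nonnegative.

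For case (a), $X_h^s<1$, so \Cref{lem:AdWords:kkt}(c) gives $\alpha_i=\lambda_i^s\ge b_{ih}(1-f_h^s(x_h^s))$. Combined with $\tfrac{b_{ih}}{\widehat B_h}\beta_h\ge b_{ih}P_h^{k-1}$, this is the claim. For case (b), $h$ receives no load after stage $q_h$, so $P_h^{k-1}=P_h^{q_h}$. Dropping $\alpha_i\ge0$ and using the first claim finishes this case.

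Case (c) is the main step. Here I will reproduce the $\theta$-calculation. For a used edge $(m,h)$, stationarity \eqref{eq:AdWords-kkt-stationarity} with $\gamma_{mh}^s=0$ gives
\[
\lambda_m^s=b_{mh}\Bigl(1-f_h^s(x_h^s)-\tfrac{\theta_h^s}{\widehat B_h}\Bigr).
\]
Writing $\phi:=1-f_h^s(x_h^s)-\theta_h^s/\widehat B_h$, the stage-$s$ contribution to $\beta_h$ becomes
\[
\widehat B_h x_h^s-\phi\sum_m b_{mh}x_{mh}^s=\widehat B_h x_h^s\bigl(f_h^s(x_h^s)+\theta_h^s/\widehat B_h\bigr).
\]
For the edge $(i,h)$ itself, $\gamma_{ih}^s\ge0$ gives $\lambda_i^s\ge b_{ih}\phi$. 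Earlier stages contribute at least $\widehat B_h P_h^{s-1}$ and later stages are nonnegative. Therefore
\[
\alpha_i+\tfrac{b_{ih}}{\widehat B_h}\beta_h\ \ge\ b_{ih}\Bigl(\phi+P_h^{s-1}+x_h^s f_h^s(x_h^s)+x_h^s\tfrac{\theta_h^s}{\widehat B_h}\Bigr).
\]
Since $q_h=s$, some edge is used in stage $s$, and $\lambda_m^s\ge0$ implies $\phi\ge0$, i.e.\ $\theta_h^s/\widehat B_h\le 1-f_h^s(x_h^s)$. The expression in parentheses equals $P_h^{s-1}+(1-f_h^s(x_h^s))+x_h^sf_h^s(x_h^s)-(1-x_h^s)\theta_h^s/\widehat B_h$. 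It is decreasing in $\theta_h^s$ because $x_h^s\le1$. Substituting the upper bound on $\theta_h^s/\widehat B_h$ leaves $P_h^{s-1}+x_h^s$. The one point needing care is that $b_{ih}$ factors out uniformly. This works because the normalized $\theta$ enters stationarity with the same bid as the edge.

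The final-stage statement follows exactly as in \Cref{lem:intermed_feasibility} by splitting on the three cases for $s=k$. If $X_h^k<1$, case (a) with $f_h^k=0$ gives $P_h^{k-1}+1\ge P_h^{k-1}+1-X_h^{k-1}$. If $q_h<k$, then $X_h^{k-1}=1$ and case (b) gives $P_h^{k-1}$. If $q_h=k$, then $x_h^k=1-X_h^{k-1}$ and case (c) applies.
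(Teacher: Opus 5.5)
Your proposal is correct and matches the paper's proof essentially step for step. You use the same KKT-based lower bound on $\beta_h$, the same direct use of \Cref{lem:AdWords:kkt}(c) for case (a), saturation for case (b), elimination of $\theta_h^s$ through a used edge for case (c), and the same three-way split for the final-stage bound; your explicit $\phi$ bookkeeping just spells out the stage-$s$ contribution that the paper describes in words.
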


\begin{proof}
\Cref{lem:AdWords:kkt}(a) gives
$\lambda_i^s\le b_{ih}(1-f_h^s(x_h^s))$ whenever $x_{ih}^s>0$. Hence

\begin{align*}
    \beta_h
    &=
    \sum_{s=1}^{k}
    \left(
        \widehat B_hx_h^s-\sum_{i\in D_s}\lambda_i^s x_{ih}^s
    \right)  \\
    &\ge
    \sum_{s=1}^{k}
    \left(
        \widehat B_hx_h^s
        -
        \sum_{i\in D_s}b_{ih}(1-f_h^s(x_h^s))x_{ih}^s
    \right) \\
    &=
    \widehat B_h\sum_{s=1}^{k}x_h^sf_h^s(x_h^s)
    =
    \widehat B_hP_h^{k-1},
\end{align*}
because $f_h^k = 0$.

Now fix $(i,h)\in E_s^{\mathrm{virt}}$. If $X_h^s<1$, then the normalized capacity
constraint of $h$ is slack in stage $s$, so \Cref{lem:AdWords:kkt}(c)
gives
\[
    \alpha_i=\lambda_i^s\ge b_{ih}(1-f_h^s(x_h^s)).
\]
Together with $\beta_h\ge \widehat B_hP_h^{k-1}$, this proves case (a).

If $q_h<s$, then $h$ was saturated before stage $s$, so $P_h^{k-1}=P_h^{q_h}$.
Using $\alpha_i\ge0$ and the lower bound on $\beta_h$ proves case (b).

It remains to consider $q_h=s$. Let $\theta_h^s$ be the KKT multiplier
for the normalized capacity constraint of $h$ in stage $s$. Stationarity gives, for
every edge $(r,h)\in E_s^{\mathrm{virt}}$,
\[
    \lambda_r^s
    \ge
    b_{rh}
    \left(
        1-f_h^s(x_h^s)-\frac{\theta_h^s}{\widehat B_h}
    \right),
\]
with equality on every used edge into $h$. Therefore, using
$\sum_{m\in D_s}b_{mh}x_{mh}^s=\widehat B_hx_h^s$, the stage-$s$
contribution to $\beta_h$ is
\[
    \widehat B_hx_h^s f_h^s(x_h^s)+\theta_h^sx_h^s .
\]
Earlier stages contribute at least $\widehat B_hP_h^{s-1}$ and later
stages contribute nonnegatively. Hence

\begin{align*}
    \alpha_i+\frac{b_{ih}}{\widehat B_h}\beta_h
    &\ge
    b_{ih}\left(
        P_h^{s-1}
        +1-f_h^s(x_h^s)
        +x_h^sf_h^s(x_h^s)
        -(1-x_h^s)\frac{\theta_h^s}{\widehat B_h}
    \right).
\end{align*}
Since $q_h=s$, some edge into $h$ is used in stage $s$. For such a used
edge, nonnegativity of the corresponding $\lambda$ implies
\[
    \frac{\theta_h^s}{\widehat B_h}
    \le
    1-f_h^s(x_h^s).
\]
Substituting this bound gives
\[
    \alpha_i+\frac{b_{ih}}{\widehat B_h}\beta_h
    \ge
    b_{ih}(P_h^{s-1}+x_h^s),
\]
which proves case (c).

The final-stage bound follows from the three cases. If $X_h^k<1$, then
case (a) and $f_h^k=0$ give a stronger bound. If $X_h^k=1$ and $q_h<k$,
then $X_h^{k-1}=1$ and case (b) gives
$b_{ih}P_h^{k-1}$. If $q_h=k$, then
$x_h^k=1-X_h^{k-1}$ and case (c) gives the displayed bound.
\end{proof}

\subsection{Robustness and Consistency}
\begin{lemma}[Guarantees on the virtual instance]
\label{lem:AdWords-virtual-guarantees}
The algorithm is $R$-robust and $C_k(R)$-consistent for the virtual instance.
\end{lemma}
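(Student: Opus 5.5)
The plan mirrors the proofs of \Cref{lem:conditional-rob,lem:conditional-cons}, with every dual constraint scaled by bids. Throughout we use \Cref{lem:AdWords-dual-objective} to write $\ALG_{\mathrm{virt}}=\sum_i\alpha_i+\sum_h\beta_h$, and \Cref{lem:AdWords-certificate} together with \Cref{obs:AdWords-load} to lower-bound each scaled constraint $\alpha_i+\frac{b_{ih}}{\hB_h}\beta_h$.

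\textbf{Robustness.} We show $\alpha_i+\frac{b_{ih}}{\hB_h}\beta_h\ge R\,b_{ih}$ for every virtual edge $(i,h)\in E_s^{\mathrm{virt}}$. Once this holds, $(\alpha/R,\beta/R)$ is feasible for $(\DP^{\mathrm{Ad}})$ (the case $R=0$ is trivial). Weak duality with $(\LP^{\mathrm{Ad}})$ then gives $\ALG_{\mathrm{virt}}\ge R\cdot\OPT_{\mathrm{virt}}$. The edge-wise bound splits into cases for $s<k$.
\begin{itemize}
\item If $X_h^s<1$, case (a) applies and we use the first inequality of \Cref{obs:AdWords-load}.
\item If $q_h<s$, then $X_h^{k-1}=1$. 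So $P_h^{q_h}=P_h^{k-1}=P_h^{k-1}+1-X_h^{k-1}\ge R$.
\item If $q_h=s$, then $P_h^{s-1}+x_h^s\ge P_h^s=P_h^{q_h}$, using $f_h^s\le1$. This is again $\ge R$ by the same identity.
\end{itemize}
Final-stage edges are handled directly by the last bound of \Cref{lem:AdWords-certificate} combined with $P_h^{k-1}+1-X_h^{k-1}\ge R$.

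\textbf{Consistency.} Let $\hx^{\mathrm{virt}}$ be the lifted prediction, supported on edges $(i,(j,s))$ in stage $s$. We aim to show $\alpha_i+\frac{b_{ih}}{\hB_h}\beta_h\ge C_k(R)\,b_{ih}$ for every edge in its support. Then nonnegativity of the duals, the demand constraints $\sum_h\hx_{ih}\le1$, and the budget constraints $\sum_i b_{ih}\hx_{ih}\le\hB_h$ of the lifted prediction give
\[
\ALG_{\mathrm{virt}}=\sum_i\alpha_i+\sum_h\beta_h\ge\sum_{(i,h)}\hx_{ih}\Bigl(\alpha_i+\frac{b_{ih}}{\hB_h}\beta_h\Bigr)\ge C_k(R)\sum_{(i,h)}b_{ih}\hx_{ih}=C_k(R)\cdot\advice_{\mathrm{virt}}.
\]
This weighting step replaces the matching argument of \Cref{lem:conditional-cons} and is the main point needing care. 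It works because $\alpha_i$ appears with total weight $\sum_h\hx_{ih}\le1$. Likewise, $\beta_h$ appears with total weight $\sum_i\hx_{ih}b_{ih}/\hB_h\le1$; this is exactly why the residual copy $(j,k)$ need not be filled.

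The edge-wise consistency bound follows the same case split. For $h=(j,s)$ with $s<k$:
\begin{itemize}
\item In case (a), use $P_h^{k-1}+1-f_h^s(x_h^s)\ge C_k(R)$.
\item In case (c), $x_h^s=1-X_h^{s-1}$, so use the prefix inequality with $q=s-1$.
\item In case (b), $X_h^{q_h}=1$, so use the prefix inequality with $q=q_h<s$: $P_h^{q_h}=P_h^{q_h}+1-X_h^{q_h}\ge C_k(R)$.
\end{itemize}
For $h=(j,k)$, the final-stage bound of \Cref{lem:AdWords-certificate} and the inequality $P_h^{k-1}+1-X_h^{k-1}\ge C_k(R)$ from \Cref{obs:AdWords-load} finish the argument.
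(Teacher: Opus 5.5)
Your proof is correct and takes essentially the same approach as the paper: bid-scaled dual fitting via \Cref{lem:AdWords-certificate} and \Cref{obs:AdWords-load}, the same three-case split on predicted edges, and weighting the edge constraints by the lifted prediction $\hat x$, whose demand and budget feasibility keep the coefficients on $\alpha_i$ and $\beta_h$ at most one. You spell out a few details the paper leaves implicit, namely the $q_h<s$ and $q_h=s$ robustness cases. For feasibility of $(\alpha/R,\beta/R)$ you should also state explicitly that $\beta_h\ge \hB_h P_h^{k-1}\ge 0$.
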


\begin{proof}
For robustness, if $R=0$, the claim is immediate because
$\ALG_{\mathrm{virt}}\ge 0=R\OPT_{\mathrm{virt}}$. Assume from now on that
$R>0$. By \Cref{lem:AdWords-certificate} and \Cref{obs:AdWords-load},
\[
    \alpha_i+\frac{b_{ih}}{\widehat B_h}\beta_h
    \ge
    R\,b_{ih}
\]
for every virtual edge $(i,h)$. Since $\alpha_i\ge0$ and
$\beta_h\ge \widehat B_hP_h^{k-1}\ge0$, the scaled assignment
$(\alpha/R,\beta/R)$ is feasible for $(\DP^{\mathrm{Ad}})$. By weak duality and
\Cref{lem:AdWords-dual-objective},
\[
    \frac1R \ALG_{\mathrm{virt}} \ge \OPT_{\mathrm{virt}},
\]
so $\ALG_{\mathrm{virt}}\ge R\OPT_{\mathrm{virt}}$.

For consistency, let $\hat x$ be the lifted prediction: in stages
$1,\ldots,k-1$ it is the revealed prediction lifted to the corresponding stage
copies, and in stage $k$ it is an optimal residual final-stage allocation
lifted to the residual-capacity copies.
We claim that every virtual edge $(i,h)$ in the support of $\hat x$
satisfies
\[
    \alpha_i+\frac{b_{ih}}{\widehat B_h}\beta_h
    \ge
    b_{ih} \cdot C_k(R) \ .
\]
For a non-final predicted edge, $h=(j,\tau)$ and the edge lies in stage
$\tau<k$. The three cases of \Cref{lem:AdWords-certificate}, together
with \Cref{obs:AdWords-load}, give the claim: if $X_h^\tau<1$, use
\[
    P_h^{k-1}+1-f_h^\tau(x_h^\tau)\ge C_k(R);
\]
if $q_h=\tau$, use
\[
    P_h^{\tau-1}+x_h^\tau
    =
    P_h^{\tau-1}+1-X_h^{\tau-1}
    \ge C_k(R);
\]
and if $q_h<\tau$, use
\[
    P_h^{q_h}
    =
    P_h^{q_h}+1-X_h^{q_h}
    \ge C_k(R).
\]
For a final-stage predicted edge, the lifted prediction uses only the
residual-capacity advertisers $h=(j,k)$. The final-stage bound in
\Cref{lem:AdWords-certificate} plus \Cref{obs:AdWords-load} gives
\[
    \alpha_i+\frac{b_{ih}}{\widehat B_h}\beta_h
    \ge
    b_{ih}(P_h^{k-1}+1-X_h^{k-1})
    \ge
    C_k(R) \cdot b_{ih}.
\]

Using feasibility of the lifted prediction and nonnegativity of the dual
variables,

\begin{align*}
\ALG_{\mathrm{virt}}
&=
\sum_{i\in D}\alpha_i+\sum_{h\in\mathcal H}\beta_h \\
&\ge
\sum_{(i,h)\in E^{\mathrm{virt}}}
\hat x_{ih}
\left(
    \alpha_i+\frac{b_{ih}}{\widehat B_h}\beta_h
\right) \\
&\ge
C_k(R)\sum_{(i,h)\in E^{\mathrm{virt}}}b_{ih}\hat x_{ih}
=
C_k(R) \cdot \advice_{\mathrm{virt}}.
\end{align*}
This proves the claimed consistency bound for the virtual instance.
\end{proof}

\subsection{Lazy Virtual Advertiser Implementation}

\begin{lemma}[Lazy implementation]
\label{lem:lazy-virtual-implementation}
The virtual-advertiser construction can be implemented online. In stage $s$, the algorithm needs to know only the predicted allocations $\hx_j^1,\ldots,\hx_j^{s}$ for each $j \in S$.
\end{lemma}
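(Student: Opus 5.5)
The plan is to maintain, before stage $s$, the following representation for each original advertiser $j$: the already-revealed copies $(j,1),\ldots,(j,s-1)$, each with its own load history $(X_{(j,\tau)}^{s-1},P_{(j,\tau)}^{s-1})$, together with one \emph{residual virtual advertiser} $\rho_j^{s}$ of budget $\bigl(1-\sum_{\tau<s}\hx_j^\tau\bigr)B_j$. This residual advertiser stands for the aggregate of all not-yet-revealed copies $(j,s),\ldots,(j,k)$. I will show by induction over $s$ that the lazy algorithm and the full virtual algorithm of the previous subsection produce the same allocation, up to splitting mass among copies. Concretely, the invariant is that there is a run of the full virtual algorithm in which every unrevealed copy $(j,\tau)$, $\tau\ge s$, has the same normalized history $(X^{s-1},P^{s-1})$ as $\rho_j^s$. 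Each such copy then carries the allocation of $\rho_j^s$ scaled by $\hB_{(j,\tau)}/\hB_{\rho_j^s}$.

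The first step is to observe that, up to stage $s-1$, all unrevealed copies $(j,\tau)$ with $\tau\ge s$ are non-predicted in every earlier stage. They therefore all use the branch $r^d$, and for $d<k$ this branch depends only on $(X^{d-1},P^{d-1})$ and not on $k-\tau$ or on the budget. Copies with identical histories therefore have identical penalty functions $f^d$.

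The second step is aggregation. In $(\CP_d^{\mathrm{Ad}})$, consider a group $G$ of copies with the same penalty $f$ and the same residual $1-X$. Write $\hB_G=\sum_{h\in G}\hB_h$ and $y=\sum_{h\in G}\hB_h x_h/\hB_G$. Since $\phi(x)=x-\int_0^x f$ is concave, Jensen's inequality gives $\sum_h \hB_h\phi(x_h)\le \hB_G\phi(y)$. In addition, the capacity constraint $y\le 1-X$ holds, and bids are identical across copies. Hence replacing the group by a single advertiser of budget $\hB_G$ does not decrease the optimum. Conversely, any aggregate solution expands proportionally to a feasible solution with equal objective value. So the optimal values coincide, and some optimum of the full program has all copies in $G$ at equal normalized load. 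That keeps the histories identical, which preserves the invariant.

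The third step is the split at stage $s$. Once $\hx^s$ is revealed, replace $\rho_j^s$ by $(j,s)$ with budget $\hx_j^s B_j$ and by $\rho_j^{s+1}$ with the remaining budget. Both inherit the normalized history $(X^{s-1},P^{s-1})$ of $\rho_j^s$; this is the proportional split of previously allocated mass. Then solve $(\CP_s^{\mathrm{Ad}})$ with $(j,s)$ using the predicted branch $a^s$ and $\rho_j^{s+1}$ using $r^s$. Only $\hx^1,\ldots,\hx^s$ are used. In the final stage $k$, the residual $\rho_j^k$ is exactly the copy $(j,k)$, and $f^k=0$.

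The resulting per-copy histories equal those of a valid run of the full virtual algorithm, so \Cref{lem:AdWords-virtual-guarantees} transfers. The aggregated original allocation is unchanged because the split is value-preserving.

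The main obstacle is the non-uniqueness of optima in the second step. The full algorithm may choose an optimum in which identical copies receive unequal loads, and their histories would then diverge. I will handle this by noting that the analysis holds for \emph{any} optimal solution of each $(\CP_s^{\mathrm{Ad}})$: \Cref{lem:AdWords:kkt} applies to every optimum. It therefore suffices to exhibit one admissible run, namely the symmetric one, and that run is exactly what the lazy implementation produces.
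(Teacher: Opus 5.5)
Your proposal is correct and follows essentially the same route as the paper. Both maintain a single residual advertiser of budget $\bigl(1-\sum_{\tau<s}\hx_j^\tau\bigr)B_j$ for all unrevealed copies, split it proportionally once $\hx^s$ is revealed, and use a Jensen plus proportional-expansion argument to show that aggregating copies with identical history and penalty leaves the stage program's optimum unchanged. Your explicit remark that the analysis holds for every optimum, so exhibiting the symmetric run suffices, is what the paper relies on implicitly when it says an optimal aggregate solution expands into a symmetric optimal solution.
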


\begin{proof}
Fix an original advertiser $j\in S$. Before the beginning of stage $s<k$, assume
that the predicted values $\hx_j^1,\ldots,\hx_j^{s-1}$ have already been revealed.
The algorithm maintains the revealed virtual advertisers
$(j,1),\ldots,(j,s-1)$, together with a residual virtual advertiser
$(j,\bot_s)$ of budget
\[
    \hB_{j,\bot_s}
    :=
    \left(1-\sum_{\tau=1}^{s-1}\hx_j^\tau\right)B_j .
\]
This residual advertiser represents all future advertisers
$(j,s),(j,s+1),\ldots,(j,k)$.

When the stage-$s$ prediction $\hx^{s}$ is revealed, compute
\[
    \hx_j^s
    :=
    \frac{1}{B_j}\sum_{i\in D_s}b_{ij}\hx_{ij}^s .
\]
We split $(j,\bot_s)$ into the stage-$s$ advertiser $(j,s)$, with budget
\[
    \hB_{j,s}:=\hx_j^s B_j,
\]
and a new residual advertiser $(j,\bot_{s+1})$, with budget
\[
    \hB_{j,\bot_{s+1}}
    :=
    \left(1-\sum_{\tau=1}^{s}\hx_j^\tau\right)B_j .
\]
Zero-budget virtual advertisers are omitted. For every earlier stage $r<s$ and every edge
$(i,(j,\bot_s))$, we split the old internal allocation proportionally by defining
\[
    x^r_{i,(j,s)}
    :=
    \frac{\widehat B_{j,s}}{\widehat B_{j,\bot_s}}
    x^r_{i,(j,\bot_s)} 
\]
and
\[
    x^r_{i,(j,\bot_{s+1})}
    :=
    \frac{\widehat B_{j,\bot_{s+1}}}{\widehat B_{j,\bot_s}}
    x^r_{i,(j,\bot_s)}.
\]
This preserves the original aggregated allocation to advertiser $j$ and gives both children the
same normalized loads, hence the same values of $X^{s-1}$ and $P^{s-1}$.

It remains to justify that aggregating all unrevealed future advertisers into
one residual advertiser is exact. Fix a collection $T$ of virtual advertisers
corresponding to the same original advertiser $j$. Assume that all
$h\in T$ have the same normalized history and use the same penalty function
$f$ in the current stage. Let
$\hB_T:=\sum_{h\in T}\hB_h$ and
$g(z):=z-\int_0^z f(t)\,dt$. 
Since $f$ is non-decreasing, $g$ is concave.

First consider any feasible allocation to the separate advertisers $h\in T$.
For each $h\in T$, let
\[
    z_h:=\frac{1}{\hB_h}\sum_{i\in D_s}b_{ij}x_{ih}.
\]
Aggregating these variables to $x_{iT}:=\sum_{h\in T}x_{ih}$ preserves the demand constraints.
The normalized load of the aggregate advertiser is
\[
    z_T
    :=
    \frac{1}{\hB_T}\sum_{i\in D_s}b_{ij}x_{iT}
    =
    \frac{1}{\hB_T}\sum_{h\in T}\hB_h z_h .
\]
Since all advertisers in $T$ have the same residual normalized capacity and
each $z_h$ is feasible, the weighted average $z_T$ is feasible as well.
Moreover, by Jensen's inequality,
\[
    \sum_{h\in T}\hB_h g(z_h)
    \le
    \hB_T
    g\left(
        \frac{\sum_{h\in T}\hB_h z_h}{\hB_T}
    \right)
    =
    \hB_T g(z_T).
\]
Thus every feasible allocation to the separate future advertisers can be
mapped to a feasible allocation to the aggregate residual advertiser with
at least the same objective value.

Conversely, consider any feasible allocation $x_{iT}$ to the aggregate
residual advertiser. Split it proportionally among the advertisers in $T$, that is, $x_{ih}:=\frac{\hB_h}{\hB_T}x_{iT}$ for all $h\in T$ and $i\in D_s$.
This preserves the demand constraints because $\sum_{h\in T}x_{ih}=x_{iT}$.
Furthermore, every $h\in T$ receives the same normalized load
\[
    \frac{1}{\hB_h}\sum_{i\in D_s}b_{ij}x_{ih}
    =
    \frac{1}{\hB_T}\sum_{i\in D_s}b_{ij}x_{iT}
    =
    z_T.
\]
Hence all individual budget constraints are preserved. The objective value is
also preserved exactly, that is, $\sum_{h\in T}\hB_h g(z_T) = \hB_T g(z_T)$.

Therefore, replacing the advertisers in $T$ by a single residual advertiser of
budget $\hB_T$ does not change the optimal value of the current-stage convex
program. Moreover, an optimal aggregate solution can always be expanded into a
symmetric optimal solution for the separate future advertisers.

Applying this argument inductively over the stages shows that the residual
advertiser is an exact online implementation of the full virtual construction:
before stage $s$, it represents precisely the unrevealed future advertisers
$(j,s),\ldots,(j,k)$. In stage $s$, the newly revealed advertiser $(j,s)$ is
split off and uses the predicted penalty, while the remaining future
advertisers are still indistinguishable and continue to be represented by
$(j,\bot_{s+1})$. After stage $k-1$, the remaining residual advertiser is
identified with $(j,k)$.
\end{proof}

\subsection{Proof of \Cref{thm:AdWords}}

By \Cref{lem:lazy-virtual-implementation}, the virtual-advertiser algorithm
can be implemented online. By \Cref{lem:AdWords-virtual-guarantees}, it is
$R$-robust and $C_k(R)$-consistent on the virtual instance. By
\Cref{lem:virtual-preserves-benchmarks}, $\OPT=\OPT_{\mathrm{virt}}$ and
$\advice=\advice_{\mathrm{virt}}$, and aggregation preserves the algorithm's value
and feasibility. Hence the same guarantees hold in the original AdWords
instance.

The  online statement is obtained in the same way, with the
$k$-stage penalties replaced by the exponential online penalties from
\Cref{sec:online}. The load inequalities in \Cref{app:online} are per
advertiser and depend only on the normalized load sequence and the time at which
the virtual advertiser is predicted; hence they apply to each virtual advertiser
created by the fractional prediction. The AdWords KKT and dual-fitting steps are
identical to \Cref{lem:AdWords:kkt,lem:AdWords-certificate,lem:AdWords-virtual-guarantees},
with the online load certificates substituted for the $k$-stage ones. Finally,
the lazy construction remains valid because all unrevealed future copies use the
same non-predicted online penalty and have the same normalized history until they
are split off. This proves the theorem for online fractional AdWords. The
vertex-weighted online bipartite matching statements with fractional predictions follow by the
specialization $b_{ij}=B_j=w_j$, or equivalently by the capacity-copy
interpretation described in \Cref{app:AdWords}.

\end{document}